\newsavebox{\measure@tikzpicture}
  \def\tikz@width{#1}%
\theoremstyle{plain}
\newtheorem{theorem}{Theorem}
\numberwithin{theorem}{section}
\newtheorem{proposition}[theorem]{Proposition}
\newtheorem{remark}[theorem]{Remark}
\numberwithin{equation}{section}
\numberwithin{equation}{subsection}
\renewcommand*{\theequation}{%
  \ifnum\value{subsection}=0 %
    \thesection
  \else
    \thesubsection
  \fi
  .\arabic{equation}%
}
\theoremstyle{definition}
\definecolor{darkGreen}{rgb}{0,0.5,0}
\DeclareMathOperator{\dist}{dist}
\newcommand{\s}{\sigma}
\newcommand{\ul}[1]{{\ensuremath{\underline{#1}}}}
\newcommand{\E}{\ensuremath{\textup{E}} }
\newcommand{\qf}{\ensuremath{\textup{qf}}}
\newcommand{\Bs}{\ensuremath{\textup{spin}}}
\newcommand{\spin}{\Bs}
\newcommand{\cB}{\ensuremath{\mathcal{B}}}
\newcommand{\cD}{\ensuremath{\mathcal{D}}}
\newcommand{\cL}{\ensuremath{\mathcal{L}}}
\newcommand{\cP}{\ensuremath{\mathcal{P}}}
\newcommand{\cR}{\ensuremath{\mathcal{R}}}
\newcommand{\cS}{\ensuremath{\mathcal{S}}}
\newcommand{\cT}{\ensuremath{\mathcal{T}}}
\newcommand{\cV}{\ensuremath{\mathcal{V}}}
\newcommand{\cW}{\ensuremath{\mathcal{W}}}
\newcommand{\cM}{\ensuremath{\mathcal{M}}}
\newcommand{\tcR}{\ensuremath{\widetilde \cR}}
\newcommand{\fG}{\ensuremath{\mathfrak{G}}}
\newcommand{\fg}{\ensuremath{\mathfrak{g}}}
\newcommand{\bs}[1]{{\ensuremath{\boldsymbol{#1}}}}
\newcommand{\piecewise}[1]{\left\{ \begin{array}{*2{>{\displaystyle}l}}  #1 \end{array} \right. }
\DeclareMathOperator{\Pf}{Pf}
\newcommand{\media}[1]{ { \left\langle #1 \right\rangle}}
\tikzset{vertex/.style={circle,fill=black,inner sep=2pt},
ctVertex/.style={diamond,fill=black,inner sep=2pt},
bigvertex/.style={circle,fill=black,inner sep=3pt},
E/.append style={fill=white,draw},
G/.append style={fill=gray,draw},
probeEP/.style={circle,fill=black,draw,inner sep=2pt,
  prefix after command= {\pgfextra{\tikzset{every pin/.style = {pin edge={decorate,decoration={snake,amplitude=2pt,segment length =4pt}}}}}}
},
SpinEP/.style={rectangle,fill=gray,draw,inner sep=3pt,
  prefix after command= {\pgfextra{\tikzset{every pin/.style = {pin edge={decorate,decoration={snake,amplitude=2pt,segment length =4pt}}}}}}
},
bareProbeEP/.style={rectangle,fill=black,draw,inner sep=3pt,
  prefix after command= {\pgfextra{\tikzset{every pin/.style = {pin edge={decorate,decoration={snake,amplitude=2pt,segment length =4pt}}}}}}
},
FSSpinEP/.style={regular polygon, regular polygon sides=3,fill=white,draw,inner sep=1.5pt,
  prefix after command= {\pgfextra{\tikzset{every pin/.style = {pin edge={decorate,decoration={snake,amplitude=2pt,segment length =4pt}}}}}}
},
TSpinEP/.style={regular polygon, regular polygon sides=3,fill=gray,draw,inner sep=1.8pt,
  prefix after command= {\pgfextra{\tikzset{every pin/.style = {pin edge={decorate,decoration={snake,amplitude=2pt,segment length =4pt}}}}}}
},
baseline=(current  bounding  box.center),doubled/.style={double distance= 1pt,line width=1.5pt}
}
\newcommand{\tikzvertex}[1]{\tikz[baseline=default]{ \node [#1] {};}}
\def\L{\Lambda}
\def\bH{\mathbb{H}}
\def\l{\lambda}
\def\tJJ{{\tilde{\bs{J}}}}
\def\s{\sigma}
\def\0{{\bf{0}}}
\def\CC{\mathcal{C}}
\begin{document}
\title{The scaling limit of boundary spin correlations in non-integrable Ising models}
\author[1,a]{Giulia Cava}
\author[1]{Alessandro Giuliani}
\author[2,*]{Rafael Leon Greenblatt}
\affil[1]{\small{Universit\`a degli Studi Roma Tre, Dipartimento di Matematica e Fisica, L.go S. L. Murialdo 1, 00146 Roma, Italy}}
\affil[2]{\small{Universit\`a degli Studi di Roma ``Tor Vergata'', Dipartimento di Matematica, Via della Ricerca Scientifica 1, 00133 Roma, Italy}}
\affil[a]{Current affiliation: Sapienza Università di Roma, Dipartimento di Matematica, Piazzale Aldo Moro 5, 00185, Roma, Italy}
\affil[*]{{Corresponding author.  Email address: \texttt{\href{mailto:greenblatt@mat.uniroma2.it}{greenblatt@mat.uniroma2.it}}}}
\date{\today}

\maketitle

\begin{abstract} 
We consider a class of non-integrable $2D$ Ising models obtained by perturbing the nearest-neighbor model via a weak, finite range potential which preserves translation and spin-flip symmetry,
and we study its critical theory in the half-plane. 
We prove that the leading order long-distance behavior of the correlation functions for spins on the boundary is the same as for the nearest-neighbor model, up to an analytic multiplicative renormalization constant.  In particular, the scaling limit is the Pfaffian of an explicit matrix.  
The proof is based on an exact representation of the generating function of correlations in terms 
of a Grassmann integral and on a multiscale analysis thereof, generalizing previous results to include observables located on the boundary.
\end{abstract}

\tableofcontents

\section{Introduction}\label{sec:intro}

The Ising model with pair interactions on a planar graph is ``free fermions'' in the sense that many correlation functions can be expressed in terms of Pfaffian formulas\footnote{We recall that the Pfaffian of a $m\times m$ antisymmetric matrix $M$ is defined as $$\Pf M = \frac1{2^{m/2}(m/2)!} \sum_\pi (-1)^\pi M_{\pi(1),\pi(2)}\ldots M_{\pi (m-1),\pi(m)}$$ where the sum is over permutations $\pi$ of $(1,\ldots, m)$, with $(-1)^\pi$ denoting the signature. One of the properties of the Pfaffian is that $(\Pf M)^2 = \det M$.} similar to the Wick rule for Fermionic quantum field theories.  This is most directly true for the so-called Kadanoff-Ceva Fermions or parafermionic observables \cite{KC,CCK,Izyurov:Critical_Ising_interfaces_in_multiply_connected_domains}, which are not naturally defined as local functions of the spin configuration, but the correlation functions of some local observables also have such a form.  In particular the correlation functions of an arbitrary number of `energy observables' (i.e., products of pairs of neighbouring spins), and of 
an even number of spins located on the boundary of a domain with open boundary conditions also have such a Pfaffian form \cite{MW, GBK78}.
Alternatively these fermionic fields are related to the distribution of interfaces or curves which are one of the main ingredients in proofs of convergence of the interface process to a conformally invariant distribution known as Schramm-Loewner evolution (SLE) \cite{CDHKS:Convergence_of_Ising_interfaces_to_SLE}.  In the context of the Ising model the formulation associated with interfaces between different spin values in Dobrushin boundary conditions is closely related to Kadanoff-Ceva fermions.  With an appropriate normalization, boundary values of these observables are given by correlation functions of disorder observables, which are equal by duality to spin correlation functions in open boundary conditions (see \cite{CCK} for a detailed discussion of all of these quantities and the relationships between them).

These properties of the planar Ising model are closely related to integrability or exact solvability, and hold for all planar lattices \cite{CCK} but are generically lost if the Hamiltonian is modified by adding non-planar couplings or interaction terms involving four or more spins.
On the other hand, the planar Ising model is expected to be just a representative of a universality class; loosely speaking, other ferromagnetic 
local models with the same symmetries should have the same behaviour at long distances close to the critical temperature (that is, close to \emph{their} critical temperature, which may be different for different models in the same universality class).
In previous works \cite{GGM,AGG_AHP,AGG_CMP}, using constructive, Fermionic, Renormalization Group techniques, we substantiated this prediction for the multipoint bulk 
energy correlations of non-planar Ising models, possibly with even many-spin interactions, by fully computing their scaling limit.
Concerning boundary spin correlations in the half-plane, 
Aizenman et.\ al.\ \cite{ADTW} proved that their leading long-range behavior can be expressed, as expected, in terms of Pfaffian formulas, even though their techniques, based on the 
random current representation and a generalization of Russo-Seymour-Welsh theory \cite{Russo,SW78}, did not allow them to compute the scaling limit, or to prove that the critical exponent of boundary spin correlations is the same as the one of the integrable Ising model. This is what universality indicates: the long-distance behavior of the critical correlation functions should be the same for different models, up to rescaling and change of the critical temperature.
In the planar case, this is known to be the case for a wide variety of graphs so long as the couplings are chosen in a precise way to satisfy a local criticality condition \cite{CS,Chelkak.embeddings}.
On the other hand, many other assignments of coupling constants (such as quenched disorder) qualitatively change the critical behavior, see for example \cite{DD,GG.Balanced,MW.random1,MW.random2,CGG.Continuum}, although some (in particular quasi-periodic disorder) do not \cite{GM23}.

In this work we study boundary spin correlations for a wide variety of short-range Hamiltonians (including weak non-planar and multispin terms) in the square lattice in a half-plane, and show that at their critical temperature they asymptotically have the same form as in the uniform planar case, and in particular the leading terms for the many-spin correlations can be expressed as a Pfaffian in terms of the two-point function, which (asymptotically and up to a rescaling) is the same as that of the integrable, critical, nearest neighbour Ising model.

For the planar Ising model, spin correlations on the boundary are related by duality to the partition function of the model with Dobrushin boundary conditions.  Although the relationship is not so direct in the non-integrable case, the result is still a Grassmann integral of the same form.  Furthermore, it is possible to extend this to a quantity defined away from the boundary and which can be used to control the interface distribution in the same way as the Kadanoff-Ceva fermion of the planar model \cite{GP.prog}.  The present work is a first step in controlling this quantity as well.

We proceed by using the representation of the Ising model on the cylinder in terms of Grassmann variables to apply Fermionic constructive renormalization group techniques.
The Grassmann representation for the planar Ising model is based on a formula for the partition function in terms of a formal integral (known as a Grassmann or Berezin integral) on polynomials in anticommuting variables which can be associated with a contour representation of the Ising model (here we use a form associated with the high-temperature contours) or with the Kadanoff-Ceva fermions (see \cite{CCK} and the references found there).

The renormalization group techniques we use were developed for similar formulations of quantum field theories (reviewed for example in \cite{Rivasseau.book,Ma}) and also applied to other statistical mechanical models such as eight-vertex models \cite{BFM02}, quantum spin chains \cite{BM10,BM11}, and 
interacting dimer models \cite{GMT17a,GMT17,GMT15,GMT19}. In the context of perturbed 2D Ising models,
this approach was introduced to study the effect of a particular four-spin term on the singularity of the specific heat by Pinson and Spencer \cite{PS,S00}; later, it was 
generalized to coupled pairs of Ising layers (Ashkin-Teller model) \cite{GM05,M04}, and to the study of multipoint energy correlations in single-layer, non-integrable, Ising models \cite{GGM}.
Until recently, most constructive renormalization group treatments were formulated using the Fourier transform in ways that relied strongly on translation invariance and conserved momentum, making it impossible to consider boundary effects (although several works involved other departures from translation invariance \cite{Mas99,Mas17.localization,GM23}), until \cite{AGG_AHP,AGG_CMP} which implemented the technique on a cylinder, studying energy correlations at a distance from the boundary comparable to their mutual distances.  Among other things, the present work extends this technique by employing it to study observables localized on the boundary.

\paragraph{The models under consideration.} 
For positive integers $L$ and $M$, we let $G_\L$ be the discrete cylinder with side lengths $L$ and $M$ in the horizontal and vertical directions, respectively, with periodic boundary conditions in the horizontal direction and open boundary conditions in the vertical direction. To be precise, we consider the weighted graph $G_\L$ with vertex set $\L=\left( \mathbb Z/L\mathbb Z \right) \times \left(\mathbb Z\cap [1, M]\right)$ (with the set $\mathbb Z/L\mathbb Z$ identified with the discrete interval $\mathbb Z\cap (-L/2,L/2]$ with periodic boundary conditions), 
and edge set $\mathfrak{B}_\L$ consisting of all pairs of the form $\{z, z+\hat{e}_i\}$ for $z \in \L$, $i \in \{1,2\}$ and $\hat{e}_1,\hat{e}_2$ the unit vectors in the two coordinate directions, including the pairs of the form $\{(\lceil\frac{L-1}2\rceil,(z)_2), (\lceil-\frac{L-1}2\rceil,(z)_2) \}$  with $((z)_2 \in \{1,\dots,M\}$\footnote{We denote the components of $z\in \L$ by $(z)_1, (z)_2$, rather than by $z_1,z_2$, 
in order to avoid confusion with the first two elements of an $n$-tuple $\bs z\in\L^n$, for which we will use the notation $\bs{z}=(z_1,\ldots, z_n)$.}; each edge $\{z, z+\hat{e}_i\}$ is associated with the weight $J_i>0$. The model we consider is defined by the Hamiltonian
\begin{equation}
\label{eq:HM}
H_{\L}(\sigma)=-\sum_{i=1}^2J_i\sum_{\{z, z+\hat{e}_i\} \in \mathfrak{B}_\L}\sigma_{z}\sigma_{z+\hat{e}_i}  -\lambda \sum_{X\subset \L} V(X) \prod_{z\in X}\sigma_z,
\end{equation}
where  $\sigma = \set{\sigma_z}_{z \in \L}$ denotes the spin configuration and $\sigma_z \in \{+1,-1\}$ denotes the local spin variable; if $z=(\lceil\frac{L-1}2\rceil,(z)_2)$ for some $(z)_2\in\{1,\ldots,M\}$, then we interpret $\sigma_{z+\hat e_1}$ as 
being equal to $ \sigma_{(\lceil-\frac{L-1}2\rceil,(z)_2)}$ (periodic boundary conditions in the horizontal direction); if $z=((z)_1,M)$ for some $(z)_1\in\{1,\ldots,L\}$, then we interpret $\sigma_{z+\hat e_2}$ as being equal to zero (open boundary conditions in the vertical direction); $V(X)$ is a finite range, translation invariant, even function; finally $\lambda$ is a parameter representing the strength of the interaction, which can be of either sign and, for most of the discussion below, the reader can think of it as being independent of the system size and small compared to the ferromagnetic nearest-neighbor interactions $J_1$ and $J_2$. 

\Cref{eq:HM} with $\lambda= 0$ describes a nearest-neighbor Ising model, which is integrable and which is also known in the context of the class of models under consideration as the \textit{non-interacting} model, while with $\lambda \ne 0$ it describes a perturbed nearest-neighbor Ising model, which is generically non-integrable, known as the \text{interacting} model. These terminological conventions, along with several others we use in this article, are motivated by analogy with quantum field theory (in particular, by the Grassmann representation of the model discussed in Section~\ref{sec:gen} below).

\paragraph{Boundary spin correlations.} 

We fix once and for all an interaction $V$ with the properties spelled out after \eqref{eq:HM} and we assume $J_1/J_2$ belongs to a compact $K \subset (0,+\infty)$. We let $t:= (t_1(\beta), t_2(\beta))$, with $t_i(\beta):=\tanh (\beta J_i)$, $i=1,2$, and recall that, if $\lambda=0$, 
the critical temperature $\beta_c(J_1,J_2)$ is the unique solution of $t_2(\beta)=(1-t_1(\beta))/(1+t_1(\beta))$. Note that there exists a suitable compact  $K'\subset (0,1)^2$ such that whenever $J_1/J_2 \in K$ and $\beta \in [\tfrac12 \beta_c(J_1,J_2),2 \beta_c(J_1,J_2)]$, 
then $t \subset K'$.  

\medskip

Let  $\partial^l\L:=\{z=((z)_1,(z)_2) \in\L: (z)_2=1\}$ be the lower boundary of the cylinder and $\bs{y} = (y_1,\ldots,y_m) \in (\partial^l \L)^m$, $m \in 2 \mathbb{N}$, be a specific sequence of distinct boundary vertices ordered from the right to the left. The multipoint correlation for the spins located in $\bs{y}$ is
\begin{equation}\label{e.1}\media{\sigma_{y_1}\cdots\sigma_{y_{m}}}_{\lambda, t;\Lambda}\,,\end{equation}
where
$\media{\cdot}_{\lambda, t;\L}$ is the average with respect to the Gibbs measure associated with $H_\L$ in \cref{eq:HM} at inverse temperature $\beta >0$; that is, 
given an observable $F:\Omega_\L\to\mathbb R$, 
\begin{equation} \label{media}
\media{F(\s)}_{\lambda, t;\L}:= \frac{\sum_{\sigma\in\Omega_{\L}}e^{-\beta H_{\L}(\sigma)}F(\sigma)}{Z_{\lambda, t;\L} }\,,
\end{equation}
where $Z_{\lambda, t;\L} := \sum_{\sigma\in\Omega_{\L}} e^{-\beta H_{\L}(\sigma)}$ is the corresponding partition function and $\Omega_\L:= \set{\pm 1}^{|\L|}$ is the spin configuration space. Note that $H_{\L}$ in \eqref{eq:HM} is invariant under simultaneous flip of all of the spin values so that \eqref{e.1}, defined as in \eqref{media} using a sum over all spin configurations, is non-zero only for even $m$. 
\medskip 

\paragraph{Main result.} 

Consider the infinite volume limit in which the cylinder $\Lambda$ becomes the discrete upper half-plane $\bH:=\mathbb Z\times\mathbb N$ (with $\mathbb N$ the set 
of positive integers) and let $\lim_{L,M \to \infty} \media{\cdot}_{\lambda,t(\lambda);\L} := \media{\cdot}_{\lambda,t(\lambda);\bH}$, provided this limit exists (the limit $L,M\to\infty$ is performed by keeping $C^{-1}\le L/M\le C$ for some $C>0$).
Our main result can be stated as follows. 

\begin{theorem}
	\label{prop:main} Fix $V$ as discussed above, $J_1,J_2 \in (0,\infty)$, $K\subset (0,\infty)$ compact with $J_1/J_2$ in the interior of $K$, and $\theta\in(0,1)$.
There exist $\lambda_0>0$ and analytic functions
 $\beta_c(\lambda)$, $t_1^*(\lambda)$, $Z_{\Bs}(\lambda)$ 
 defined for $|\lambda| < \lambda_0$, 
such that, for any $\bs y = (y_1,\ldots,y_m) \in( \partial \bH )^m$, $m \in 2\mathbb{N}$, right-to-left ordered sequence of distinct boundary vertices,
	\begin{equation}\begin{aligned}
		&\media{\sigma_{y_1}\cdots\sigma_{y_{m}}}_{\lambda,t(\lambda);\bH}= \big(Z_{\Bs}(\lambda)\big)^{m}\media{\sigma_{y_1}\cdots\sigma_{y_{m}}}_{0,t^*(\lambda);\bH}
		+ R_\bH(\bs y), 
		\label{eq:corr_main_statement}
	\end{aligned}\end{equation}
where $t(\lambda):= (t_1(\lambda),t_2(\lambda))$, $t_i (\lambda) := \tanh(\beta_c(\lambda)J_i)$, $i=1,2$,  $t^*(\lambda) = (t_1^*(\lambda), t_2^*(\lambda))$, $ t_2^*(\lambda):= (1-t_1^*(\lambda))/(1+t_1^*(\lambda))$, and $R_\bH(\bs y)$ is an analytic function of $\lambda$ for $|\lambda|<\lambda_0$, uniformly in the choice of $\bs y$. 
Moreover, there exist positive-valued functions $C(\theta)$ and $\lambda_1(\theta)$, defined on $[0,1)$, monotone increasing and decreasing in $\theta$, respectively, 
such that $\lambda_1(0)=\lambda_0$ and, for any $\theta\in[0,1)$ and $|\lambda|<\lambda_1(\theta)$, 
\begin{equation} |R_\bH(\bs y)|\le  (C(\theta))^{m}  |\lambda| m! d^{-(m/2+\theta)}
\;,\label{10b}\end{equation}
where 
$d$ is the minimum distance between two consecutive elements in $\bs y$.
\end{theorem}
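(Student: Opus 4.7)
The plan is to pass from the spin model to a Grassmann integral representation on the cylinder $\Lambda$, in which the boundary spin correlation $\media{\sigma_{y_1}\cdots\sigma_{y_m}}_{\lambda,t;\Lambda}$ is expressed as a derivative of a generating function with fermionic external sources localized at the boundary points $y_i \in \partial^l\Lambda$. The representation is the same as the one used in \cite{GGM,AGG_AHP,AGG_CMP} for bulk energy correlations, adapted to boundary spins via the duality between disorder/spin observables on the boundary described after \eqref{eq:HM}; the key point is that the resulting Grassmann integral has the form of a perturbation of the free fermionic theory associated with the critical nearest-neighbor model, with the boundary sources corresponding to insertions of the boundary spin field at scaling dimension $1/2$.

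Next, I would carry out a Wilsonian multiscale analysis of the generating function in position space, working on scales $2^{-h}$ from the lattice to the infrared, along the lines of \cite{AGG_AHP,AGG_CMP}. The bulk flow produces analytic renormalizations of the mass, of the horizontal and vertical hopping parameters, and of the bulk wave function constant; the critical temperature $\beta_c(\lambda)$ is defined implicitly by the cancellation of the mass flow, and the other analytic functions $t_1^*(\lambda)$ and (after identifying the boundary insertions) $Z_\spin(\lambda)$ arise from convergent fixed-point equations on the tree expansion. The structure is that, after flowing to the infrared, the effective theory matches a free critical fermionic theory at hopping $t^*(\lambda)$ up to irrelevant corrections that are summable uniformly in the cutoff.

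The heart of the proof then consists in tracking the boundary source terms through the RG. Each boundary source at $y_i$ acquires a multiplicative renormalization which, assembled over all scales, produces the factor $Z_\spin(\lambda)^m$ in \eqref{eq:corr_main_statement}; the contractions of the renormalized sources through the free infrared propagator on the half-plane reproduce the Pfaffian $\media{\sigma_{y_1}\cdots\sigma_{y_m}}_{0,t^*(\lambda);\bH}$. The remainder $R_\bH(\bs y)$ collects two contributions: (i) truncations of the relevant source coupling at each scale, which give corrections analytic in $\lambda$ and vanishing at $\lambda=0$; and (ii) tree diagrams involving irrelevant source vertices, whose improved dimensional count on boundary operators yields an extra factor $d^{-\theta}$, for any $\theta<1$, beyond the $d^{-m/2}$ scaling of the leading Pfaffian. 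The $m!$ in \eqref{10b} is the standard combinatorial factor from summing over pairings/trees connecting the $m$ external legs, and the prefactor $|\lambda|$ reflects that every term in $R_\bH$ contains at least one interaction vertex.

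The principal obstacle, and the genuine novelty compared to \cite{AGG_AHP,AGG_CMP}, is the position-space analysis of the boundary propagator and of the RG flow of operators localized at $\partial^l\Lambda$: one must establish Gram-type bounds on the free half-plane Grassmann propagator between boundary points with the expected $|y-y'|^{-1}$ decay, and separate the power counting of vertices according to whether they lie in the bulk or on the boundary (boundary two-point vertices are marginal, boundary four-point vertices are irrelevant with an extra half-power, etc.). Once this double power counting is in place and combined with the determinant/Gram bounds already controlled in the bulk analysis, the tree expansion converges uniformly in $L,M$ and yields both the analyticity of $\beta_c,t_1^*,Z_\spin$ and the remainder estimate \eqref{10b}.
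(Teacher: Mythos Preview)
Your proposal is correct and follows essentially the same route as the paper: Grassmann representation with boundary sources, multiscale RG analysis with a localization/renormalization of the source kernel producing the marginal running coupling $Z_{\spin,h}\to Z_\spin$, and a remainder split into the piece coming from $Z_{\spin,h}-Z_\spin$ and the piece from GN trees with at least one interaction endpoint beyond the source endpoints.

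Two points worth sharpening. First, the boundary source term in the Grassmann representation turns out to be exactly $\sum_{z\in\partial^l\Lambda}V_z\varphi_z$, independent of $\lambda$ (because the interaction can always be rewritten in terms of edge spins along paths avoiding the auxiliary boundary edges); this is a simplification over the energy-observable case and means the source enters as an \emph{odd} Grassmann monomial, which slightly changes the parity constraints in the tree expansion compared to \cite{AGG_CMP}. Second, the $m!$ in \eqref{10b} does not come directly from the tree sums for the truncated (fermionic) correlations, which are bounded by $C^m|\lambda|d^{-m/2-\theta}$ without a factorial; it appears only when one converts from truncated to ordinary correlations via the Moebius-type inversion over partitions, and bounding $\sum_{P\in\mathcal P_m}\prod_{Y\in P}|Y|!$ by $2^m m!$. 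Your sketch attributes the $m!$ to the tree combinatorics, which is not quite where it lives.
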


The existence of $\lim_{L,M\to\infty} \media{\sigma_{y_1}\cdots\sigma_{y_{m}}}_{\lambda,t(\lambda);\L}\equiv 
\media{\sigma_{y_1}\cdots\sigma_{y_{m}}}_{\lambda,t(\lambda);\bH}$ is part of the claim of the theorem. Note that
$\beta_c(\lambda)$ has the interpretation of an interacting inverse critical temperature, and $\media{\cdot}_{0,t^*(\lambda);\bH}$ 
plays the role of the `non-interacting' critical Gibbs measure. In fact, \cref{eq:corr_main_statement} tells us that, 
for the purpose of computing the multipoint boundary spin correlations, we can use this non-interacting measure instead of the interacting one, 
up to the finite multiplicative renormalization constants $Z_\Bs(\lambda)$ (which is generically non-trivial, i.e., non-identically 1, see Appendix \ref{ordineL})
and the remainder term $R_\bH$. The reason why we can think of $R_\bH$ as a remainder is the following. It is well known 
that integrable boundary spin correlations have a Pfaffian structure
\cite{GBK78}\footnote{Strictly speaking, in \cite{GBK78} the Pfaffian formula for the boundary spin correlations is discussed only in the case of simply connected domains with free boundary conditions.  The same form holds for spins on a single boundary component of a cylinder (in \cref{sec:gen} we derive a different version of this formula which is more suited for the current context), but there are complications when some of the spins are located on each boundary component, see \cref{ARC} below.}, namely 
\begin{equation} \label{PfForm}
\media{\sigma_{y_1}\cdots\sigma_{y_{m}}}_{0,t^*(\lambda);\bH}= \Pf( M_\bH (\bs y))\,,
\end{equation}
where $M_\bH (\bs y)$ is the $m\times m$ antisymmetric matrix whose elements above the main diagonal ($1 \le i<j  \le m$) are given by $[M_\bH (\bs y)]_{ij}= \langle \sigma_{y_i}\sigma_{y_j}\rangle_{0,t^*(\lambda);\bH}$. Since the two-point boundary spin integrable correlation  $\langle \sigma_{y_i}\sigma_{y_j}\rangle_{0,t^*(\lambda);\bH}$ decays as $\propto |y_i-y_j|^{-1}$ at large separation $|y_i-y_j|$, the Pfaffian appearing in the r.h.s.\ of \eqref{PfForm} decays as $\propto |y_i-y_j|^{-m/2}$ and the term $R_\bH$ in \eqref{10b} is sub-dominant, compared to the first term in the r.h.s.\ of \eqref{eq:corr_main_statement}.

In particular, after appropriate rescaling of the lattice spacing and of the boundary spin observables, as a corollary of 
Theorem~\ref{prop:main} we obtain the scaling limit of the boundary spin correlations, with an explicit rate of convergence. Let $\bH^a:=a\mathbb H$ 
be the discrete upper half-plane with lattice spacing $a$, and $\bH^c=\mathbb R\times[0,+\infty)$ the continuum, closed, upper half-plane, which $\mathbb H^a$ reduces to in the limit $a\to0$.
Given $z$ at the boundary of $\bH^c$, we define the boundary spin observable as follows:
\begin{equation} \sigma^a (z)= a^{-\frac12} \sigma_{a\lfloor a^{-1} z\rfloor }.\end{equation} 
Fix a right-to-left ordered sequence $\bs y=(y_1,\ldots, y_{m})$ of distinct vertices on the lower boundary of $\bH^{c}$.
Theorem \ref{prop:main} tell us that
\begin{equation}\begin{aligned}
		&\media{\sigma^a({y_1})\cdots\sigma^a({y_{m}})}_{\lambda,t(\lambda);\bH^a}= \big(Z_\Bs(\lambda)\big)^{m}\media{\sigma^a({y_1})\cdots\sigma^a({y_{m}})}_{0,t^*(\lambda);\bH^a}
		+ R_{\bH^a}(\bs y), 
		\label{eq:corr_a}
	\end{aligned}\end{equation}
where  $\media{\sigma^a({y_1})\cdots\sigma^a({y_{m}})}_{\lambda,t(\lambda);\bH^a}:=a^{-m/2}\media{\sigma_{\lfloor a^{-1}y_1\rfloor}\cdots\sigma_{\lfloor a^{-1}y_{m}\rfloor}}_{\lambda,t(\lambda);\bH}$, and the remainder $R_{\bH^a}$ can be
bounded as
\begin{equation} |R_{\bH^a}(\bs y)|\le  C_{\theta}^{m}  |\lambda| d^{-(m/2+\theta)} a^\theta\;,\label{10b_a}\end{equation}
where $d$ is the minimum distance between two consecutive elements in $\bs y$. Clearly, for any fixed $\bs y$  the r.h.s.\ of \eqref{10b_a} vanishes as $a\to0$. Moreover, under the 
same assumptions as \cref{prop:main} 
\begin{equation}\begin{aligned}
		&\lim_{a \to 0^+} \media{\sigma^a({y_1})\cdots\sigma^a({y_{m}})}_{\lambda, t(\lambda);\bH^a}= \big(Z_\Bs(\lambda)\big)^{m} \lim_{a \to 0^+}  \media{\sigma^a({y_1})\cdots\sigma^a({y_{m}})}_{0,t^*(\lambda);\bH^a}\,,
		\label{eq:corr_ato0}
	\end{aligned}\end{equation}
and the limit in the r.h.s.\ exists and equals 
\begin{equation}\label{Pfscal}
\lim_{a \to 0^+}  \media{\sigma^a({y_1})\cdots\sigma^a({y_{m}})}_{0,t^*(\lambda);\bH^a} = \Pf(M_{\text{s.l.}})(\bs{y})\,,
		\end{equation}
where $M_{\text{s.l.}}$ is the $m\times m$ antisymmetric matrix, whose elements above the main diagonal ($1\le i<j\le m$) are given by $[M_{\text{s.l.}}(\bs y)]_{ij}= \lim_{a \to 0^+}  \media{\sigma^a({y_i})\sigma^a({y_{j}})}_{0,t^*(\lambda);\bH^a}$ which in the isotropic case ($t_1 = t_2$) is given by $\left( \frac{2}{\pi (\sqrt{2}-1)}\right)^2  |y_i-y_j|^{-1}$.

\paragraph{Overview of the proof and generalizations.}
The proof of Theorem \ref{prop:main} is based on: (1) a representation of the $m$-point boundary spin correlations on a finite cylinder in terms of the $m$-point boundary field  correlations of an effective non-Gaussian Grassmann theory; (2) a multiscale analysis of the corresponding non-Gaussian Grassmann generating function in the thermodynamic limit, via a conceptually straightforward (if somewhat technically involved) modification  of the construction in \cite{AGG_AHP}. 

In \cref{sec:gen}, we derive the effective non-Gaussian Grassmann representation of the generating function for correlations of an even number of spins placed on the lower boundary of the cylinder. 
This starts with the expression for the corresponding generating function for the (integrable) nearest-neighbor Ising model in terms of (Gaussian) Berezin integrals.
As illustrated in \cref{ARC}, we are able to obtain the effective Grassmann representation for the model in \eqref{eq:HM} with both periodic and anti-periodic spin boundary conditions in the horizontal direction and for any even number of spins placed on the boundary of the cylinder: in particular, we can also consider the case in which there are spins placed on both the upper and lower boundaries of the cylinder, regardless of whether it is an even or odd number of spins on each boundary. The derived representation for correlations in the model with periodic/anti-periodic horizontal boundary conditions will also depend in some sense on the anti-periodic/periodic horizontal boundary conditions, so that considering more general correlation functions requires further modifications of the objects defined in \cite{AGG_AHP}.
We expect the necessary modifications to be a bit involved, although conceptually straightforward, so we prefer not to include them here, in order to keep the technicalities to 
a minimum.
Similarly, in order not to overwhelm the technical discussion of the following sections, we limit ourselves to an analysis of the model directly in the half-plane limit, leaving the 
details of the proof of the existence of the limit of the boundary spin correlations as $L,M\to\infty$ to the reader: these are a straightforward adaptation of the analogous estimates worked out in \cite{AGG_CMP} and we refer the interested reader to that paper for details.

In \cref{sec:3} we present the multiscale representation for the generating functional described above, obtaining a convergent diagrammatic expansion for the correlation functions under consideration.  As usual, the key prerequisite for convergence is that ``counterterms'' need to be chosen appropriately via a fixed point construction.  We adapt the approach of previous works (especially \cite{AGG_CMP,AGG_AHP}) to accommodate edge observables, extending estimates on the scaling behavior of the terms in the expansion.
Finally, in \cref{sec:corr} we isolate a ``quasi-free'' part of the correlations which dominates the behavior at long distance and bound the remaining part.  The strategy used again follows previous works (in particular \cite{GGM}), with some modifications, in particular those due to the fact that the observable we consider is represented by odd-order polynomials in the Grassmann variables, unlike the even-order polynomials for the observables considered previously \cite{GGM,AGG_CMP}.

\section{Grassmann representation of the generating function.}
\label{sec:gen}
In this section we derive the Grassmann representation of the generating functional for the correlations of spins placed on the lower boundary of the cylinder. First, in \cref{lambda0}, we write the generating functional for the boundary spin correlations in the integrable case (described by \eqref{eq:HM} with $\lambda = 0$) as the partition function of such a model on a graph with added auxiliary edges, whose weights play the role of sources. Such a partition function can be expressed in terms of Grassmann variables (we use a variant of the calculation for the cylinder in \cite[Chapter VI.3]{MW}) as a `Gaussian' Grassmann integral.  Then, in \cref{lambdane0}, using a construction from \cite{GGM,AGG_AHP}, we write the generating functional in the non-integrable case (\cref{eq:HM} with $\lambda \ne 0$) as a `non-Gaussian' Grassmann integral. Finally, in \cref{xiphi}, we rewrite the generating function by introducing the reference Gaussian measure and the appropriate parameters to renormalize; this final rewriting will be the basis of the multiscale analysis in the following sections. 
Note that in this section we consider the generating function for the correlations of spins placed on the lower boundary of the cylinder: in the next section we will introduce the limit of the (upper) half-plane; the case of spins placed on the upper boundary of the cylinder follows by symmetry (as well as being completely analogous); the case in which spins are present on both boundaries is described in \cref{ARC}.

\subsection{Grassmann representation: integrable models.}\label{lambda0}

Let $\bs y=(y_1,\ldots, y_{m}) \in (\partial^l \L)^m$, $m \in 2\mathbb{N}$, be a right-to-left ordered sequence of distinct boundary vertices, let $\tilde{G}_\L = (\L, \mathfrak{B}_\L \cup \tilde{\mathfrak{B}})$ be the weighted graph, shown in Fig.~\ref{Lambdan}, obtained from the graph $G_\L = (\L, \mathfrak{B}_\L)$ by adding the edge set $\tilde{\mathfrak{B}}$ consisting of $m/2$ auxiliary edges below the lower boundary connecting the sites labelled by $\bs y$ as shown in \cref{Lambdan} and let 
\begin{equation}\label{H_addb}
\tilde{H}_{\L}(\s)=-\sum_{i=1}^2J_i\sum_{\{z, z+\hat{e}_i\} \in \mathfrak{B}_\L}\sigma_{z}\sigma_{z+\hat{e}_i} - \sum_{k=1}^{m/2} {\tilde{J}}_{k}  \s_{y_{2k-1}}\s_{y_{2k}}
\end{equation}
be an auxiliary spin Hamiltonian, depending upon the auxiliary couplings $\tJJ = (\tilde{J}_1,\ldots, \tilde{J}_{m/2})$, describing a nearest-neighbor model defined on the modified graph $\tilde{G}_\L$. Note that the first term in the r.h.s.\ is nothing but the model in \eqref{eq:HM} with $\lambda=0$, so that  Eq.~\eqref{H_addb} describes a modified integrable model with periodic horizontal boundary conditions.
\begin{figure}
\centering
\includegraphics[scale=0.6]{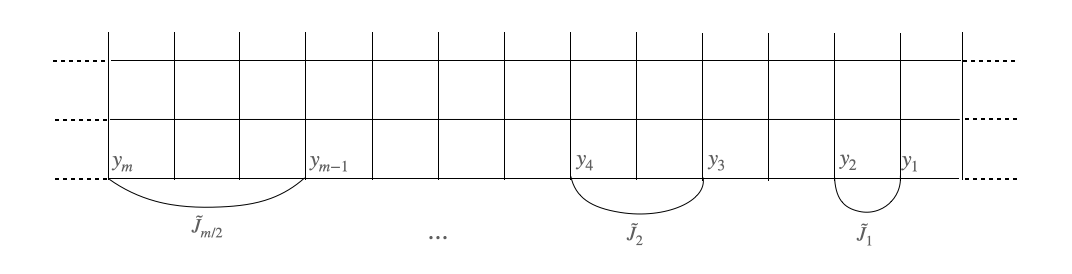}
\caption{The weighted graph $\tilde{G}_\L$ associated with the auxiliary Hamiltonian \eqref{H_addb}, obtained from the weighted graph $G_\L$ 
by adding auxiliary edges below the lower boundary  so that each auxiliary edge connects the pair $\{y_{2k-1}, y_{2k} \}\in \bs y$ and has weight $\tilde{J}_k$, $k = 1, \ldots, m/2$. Note that the auxiliary edges do not intersect each other, do not intersect horizontal or vertical edges and do not surround the horizontal dashed edges that connect the first and last column of the graph.\label{Lambdan}}
 \end{figure}
Let ${Z}_{0,t;\L} (\tJJ)=\sum_{\sigma \in \Omega_\L}e^{- \beta \tilde{H}_{\L} (\sigma)}$, $\beta >0$, be the partition function corresponding to \cref{H_addb}: then 
\begin{equation}\label{evspin}\begin{aligned}
&\media{\sigma_{y_1}\cdots\sigma_{y_{m}}}_{0,t;\L}= \left.\frac{ \beta^{-m/2}}{Z_{0,t;\L} ( \tJJ) } \frac{\partial^{m/2}}{\partial \tilde{J}_1 \cdots \partial \tilde{J}_{m/2}} Z_{0,t;\L} ( \tJJ)  \right|_{\tJJ =\0}\,,\end{aligned}
\end{equation}
so this serves as a generating function for the multipoint correlations in \eqref{e.1} (cf.\ \eqref{media}) with $\lambda = 0$ where $\tJJ$ plays the role of sources.

In order to obtain a Pfaffian formula for $Z_{0,t;\L} ( \tJJ)$ and, correspondingly, its Grassmann representation, we follow the strategy of 
rewriting it as the partition function of a dimer model on a decorated `Fisher's' lattice \cite{Fi66}.  The procedure is standard, but we spell out the main steps below, 
in order to allow the reader to track the effects due to the presence of the additional edges below the boundary.
First of all, we replace each graph element of $\tilde{G}_\L$ consisting of a vertex $z \in \L$ and the four edges exiting from it by a new `expanded' graph element, consisting of 
six new vertices $\{\bar{H}_z, H_z, \bar{V}_z, V_z, \bar{T}_z, T_z\}$ connected among them via `short edges' and to nearest-neighbor vertices via `long edges', as described in Fig.~\ref{fig:subfigA}: consequently, each elementary face surrounded by horizontal and vertical edges is replaced as in Fig.~\ref{fig:subfigB}, each elementary face surrounded by an additional edge is replaced as in Fig.~\ref{fig:subfigC} and we get the decorated planar graph $\tilde{G}_*$ shown in Fig.~\ref{figOK}.
\begin{figure}[ht]\centering
\begin{subfigure}[h]{0.4\textwidth}
      \includegraphics[scale=0.4]{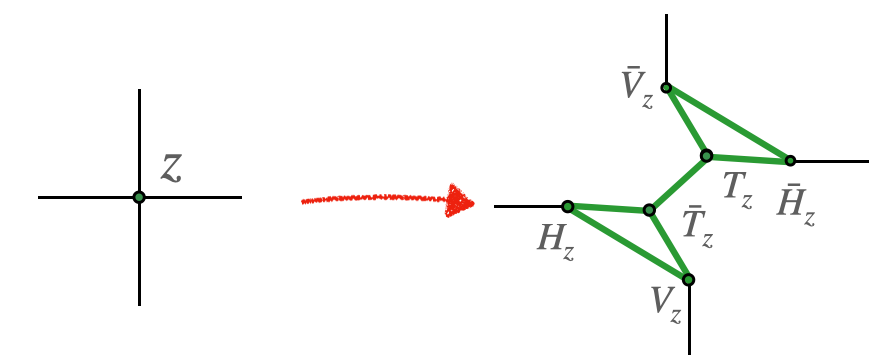}    
\caption{} \label{fig:subfigA}
\end{subfigure}\qquad \qquad\qquad
\begin{subfigure}[h]{0.4\textwidth}
      \includegraphics[scale=0.4]{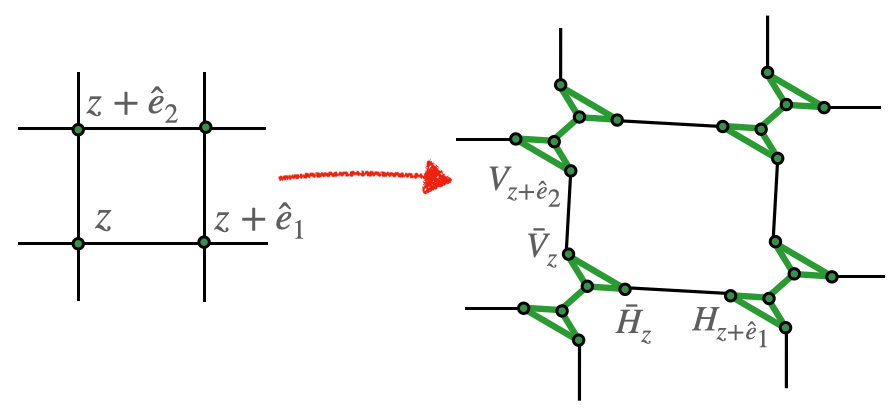}    
 \caption{}   
        \label{fig:subfigB}     \end{subfigure}
 \vfill\centering
\begin{subfigure}[l]{0.5\textwidth}
      \includegraphics[scale=0.8]{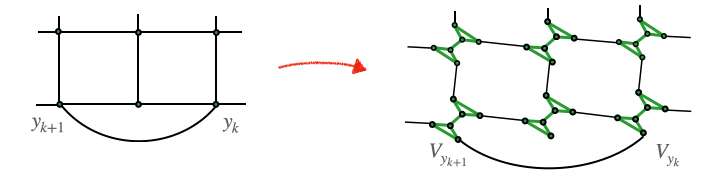}        
    \caption{}
        \label{fig:subfigC}      \end{subfigure}
        \caption{In Fig.~\ref{fig:subfigA}, a vertex  $z \in \L$ (left) corresponds to a cluster of vertices $ \{\bar{H}_z, H_z, \bar{V}_z, V_z, \bar{T}_z, T_z\} $ connected via (green) short edges between them (right); in Fig.~\ref{fig:subfigB}, horizontal and vertical edges between nearest neighbor vertices (left) correspond to long edges between nearest neighbor clusters (right); in Fig.~\ref{fig:subfigC}, additional edges below lower boundary (left) correspond to edges between boundary clusters (right).}\label{replacement123}
   \end{figure}
  \begin{figure}
\centering
\includegraphics[scale=0.5]{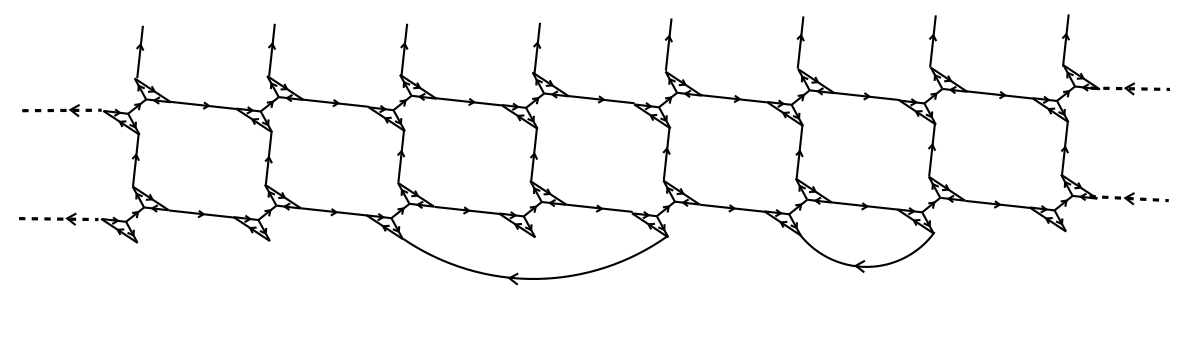}
\caption{The decorated graph $\tilde{G}_*$ in which the arrows correspond to an appropriate clockwise-odd orientation. Starting from the clockwise-odd orientation used in \cite[Chapter 5]{MW} for the graph without the additional edges, the clockwise-odd orientation of $\tilde{G}_*$ is obtained by directing the additional edges from right to left, regardless of the number of vertices surrounded by an additional edge and regardless of the number of additional edges inserted. Note that to get the clockwise-odd orientation, the dashed long horizontal edges connecting the last and the first column of the graph must have a direction opposite to that of all other long horizontal edges of the graph; finally, note that additional edges never surround these horizontal dashed edges.}
\label{figOK}
\end{figure}
Then, we can compute $Z_{0,t;\L} ( \tJJ)$ via Kasteleyn theorem \cite{Kas61}: since the graph $\tilde{G}_*$ is planar (by construction the edges do not intersect, see \cref{Lambdan}), $Z_{0,t;\L} ( \tJJ)$ can be expressed in terms of the Pfaffian of the corresponding  Kasteleyn matrix, thus getting
\begin{equation}\label{eq:Z'Pf}
Z_{0,t;\L} ( \tJJ)=  {C}_\beta( J_1,J_2, \tJJ) \cdot
\Pf{A_{K}}\,,
\end{equation}
where $C_\beta( J_1,J_2, \tJJ) := (2\cosh{\beta J_1})^{LM} (\cosh{\beta J_2})^{L(M-1)} \Big(\prod_{k=1}^{m/2} \cosh{\beta \tilde{J}_k}\Big)$ and $A_{K}$ is an antisymmetric $6LM \times 6LM$ matrix, whose entries are labelled by pairs of vertices of $\tilde{G}_*$ connected by an edge and 
are equal to the weight of the corresponding edge (that is, $1$ for short edges, $t_{i}(\beta):=\tanh \beta J_i$, $i=1,2$, for the long edges and 
$\tilde{t}_k(\beta) := \tanh \beta \tilde{J}_k$, $k=1,\ldots, m/2$, for the additional edges below the boundary), 
times a suitable sign. In order to compute this sign, we need to define a `{clockwise-odd} orientation': that is, we need to assign an orientation to all the edges of $\tilde{G}_*$, in such a way that each elementary face of $\tilde{G}_*$ is surrounded by an odd number of edges directed clockwise. A possible 
choice of the edge orientations compatible with the clockwise-odd rule is shown in Fig.~\ref{figOK} (cf.\ \cite[Chapter~V, Figure~5.4]{MW}). The orientation of the edge $e$ of vertices $z,z'$ is then interpreted as a sign associated with the ordered pair $(z,z')$, equal to $+$ (resp.\ $-$) if the edge is oriented from $z$ to $z'$ (resp.\ $z'$ to $z$). Note that although not made explicit by the notation, $A_K$ (and therefore $\Pf A_K$ in the r.h.s.\ of \eqref{eq:Z'Pf}) depends on $\tJJ$ via $\tilde{t}:= \{\tilde{t}_k(\beta)\}_{k=1}^{m/2} = \{ \tanh \beta \tilde{J}_{k}\}_{k=1}^{m/2}$, the set of weights associated with additional edges on $\tilde{G}_*$.
Moreover, rewriting the Pfaffian in \eqref{eq:Z'Pf} in the form of a Grassmann Gaussian integral, we obtain
\begin{equation}
Z_{0,t;\L} ( \tJJ)=   C_\beta (J_1,J_2, \tJJ) \cdot \int  \mathcal D \Phi'\, e^{\cS'_{t}(\Phi') + \cS_{\tilde{t}}(\Phi')},
\label{PfA'}
\end{equation}
where $\Phi'=\{(\bar H_{z},H_{z},\bar V_{z},V_{z}, \bar T_z,  T_z)\}_{z\in \L}$ is a collection of $6LM$ Grassmann variables, $\mathcal D \Phi'$ denotes the Grassmann `differential' 
$$\mathcal D \Phi'=\prod_{z\in\L}d\bar H_{z} dH_{z}  d\bar V_{z} dV_{z} d\bar T_z d  T_z\,,$$
\begin{equation} 
\cS'_{t}(\Phi') := \sum_{z\in\L}(t_1(\beta)  \bar H_{z} H_{z+\hat e_1}+t_2 (\beta)  \bar V_{z} V_{z+ \hat e_2}+\bar V_{z} \bar H_{z}+\bar H_{z} T_{z}+ V_{z}  H_z+ H_{z}\bar T_{z}+ T_{z} \bar V_{z}+ \bar T_{z} V_{z} + \bar T_{z} T_{z}) \,,
\label{eq:cS_def_T}
\end{equation}
where: $t = (t_1(\beta),t_2(\beta))\equiv (\tanh \beta J_{1}, \tanh\beta J_2)$; for any  $(z)_1 \in \set{1,\ldots,L}$, $V_{((z)_1,M+\hat{e}_2)}$ should be interpreted as representing $0$; for any $(z)_2 \in \set{1,\ldots,M}$, $H_{(L+\hat{e}_1,(z)_2)}$ should be interpreted as representing $- H_{(1,(z)_2)}$\footnote{The identification of $H_{(L+1,(z)_2)}$ with $-H_{(1,(z)_2)}$ corresponds to anti-periodic boundary conditions in the horizontal direction for the Grassmann variables: in fact, as shown in Fig.~\ref{figOK}, to obtain a clockwise-odd orientation the horizontal edges that connect the last and first column are directed from right to left. Note that we are therefore representing the Ising model with periodic horizontal boundary conditions for spin variables (cf.\ \eqref{H_addb})  in terms of a model with anti-periodic horizontal boundary conditions for Grassmann variables. Ising models with anti-periodic horizontal boundary conditions, which can be associated with a negative coupling of the horizontal edges connecting the first and last column, have Grassmann representation with periodic horizontal boundary conditions.}; and 
\begin{equation} 
\cS_{\tilde{t}}(\Phi') :=  \sum_{k=1}^{m/2} \tilde{t}_k(\beta) V_{y_{2k} }V_{y_{2k-1} } \,,
\label{eq:cS_def_addb}
\end{equation}
with $\tilde{t} = \{\tilde{t}_k(\beta)\}_{k=1}^{m/2} = \{ \tanh \beta \tilde{J}_{k}\}_{k=1}^{m/2}$.
The $\bar{T},T$ variables appear only in the diagonal elements of \eqref{eq:cS_def_T} and they can be easily integrated out (see \cite[Appendix A]{GM05}), so we can rewrite Eq.~\eqref{PfA'} as
\begin{equation}
Z_{0,t;\L}( \tJJ)= (-1)^{LM}  C_\beta( J_1,J_2, \tJJ) \cdot \int  \mathcal D \Phi \, e^{\cS_{t}(\Phi) + \cS_{\tilde{t}}(\Phi)}\,,
\label{eq:Z_Grass_addb2}
\end{equation} 
where $\Phi=\{(\bar H_{z},H_{z},\bar V_{z},V_{z})\}_{z\in \L}$ is a collection of $4LM$ Grassmann variables, $\mathcal D \Phi$ is the Grassmann `differential'
\begin{equation}\label{diffPhi}\mathcal D \Phi=\prod_{z\in\L}d\bar H_{z} dH_{z}  d\bar V_{z} dV_{z}\,,\end{equation}
\begin{equation} 
 \cS_{t}(\Phi) := \sum_{z\in\L}(t_1(\beta) \bar H_{z} H_{z+\hat e_1}+t_2(\beta) \bar V_{z} V_{z+ \hat e_2}+\bar H_{z} H_{z}+\bar V_{z} V_{z}+\bar V_{z} \bar H_{z}+ V_{z}\bar H_{z}+ H_{z} \bar V_{z}+ V_{z} H_{z})\,,
	\label{cS_def2}  \end{equation}
with the same definitions given below \eqref{eq:cS_def_T} and, with a little abuse of notation, $\cS_{\tilde{t}}$ denotes the same function as in \eqref{eq:cS_def_addb} (which actually contained no $\bar{T},T$ variables). Note that the Grassmann integral in the r.h.s.\ of \eqref{eq:Z_Grass_addb2} depends on $\tJJ$ via the dependence of $\mathcal{S}_{\tilde{t}}$ on $\tilde{t}$ (cf.\ \eqref{eq:cS_def_addb}).

If we now plug Eq.~\eqref{eq:Z_Grass_addb2} into Eq.~\eqref{evspin}, we obtain the following expression for the multipoint boundary spin correlations: 
\begin{equation}\label{spin_corr_V}\begin{aligned}
&\braket{ \s_{y_1}\s_{y_2} \cdots \s_{y_{m-1}} \s_{y_{m}} }_{0,t;\L} =   \frac{\int \mathcal D \Phi \, e^{\cS_{t}(\Phi) } V_{y_1}V_{y_2} \cdots V_{y_{m-1}} V_{y_{m}}  }{\int \mathcal D \Phi \, e^{\cS_{t}(\Phi) }}, \end{aligned}
\end{equation}
where the r.h.s.\ is the expectation of the Grassmann monomial $ V_{y_1}V_{y_2} \cdots V_{y_{m-1}} V_{y_{m}}$ with respect to the \textit{Gaussian Grassmann measure} $\mathcal D \Phi e^{\cS_{t}(\Phi)}$ (this terminology is motivated by the presence of only quadratic Grassmann monomials in the exponent, see Eq.~\eqref{cS_def2}).

The Grassmann expectation in the r.h.s.\ of \eqref{spin_corr_V} can also be obtained on the original graph in the following way. To each $z \in \partial^l \L$ we associate an auxiliary Grassmann variable (or source field) $\varphi_z$ and we denote $\bs{\varphi} := {\set{\varphi_z}_{z \in \partial^l \L}}$: these $\bs{\varphi}$ variables anti-commute with each other as well as with the $\Phi$ variables. Let 
\begin{equation}\label{eq:Bs}
 \cB_\L(\Phi,\bs{\varphi}) := \sum_{z \in \partial^l\L} V_{z} \varphi_{z} \,\end{equation}
and let
\begin{equation}
	 \Xi_{0,t;\L} (\bs{\varphi}) := \int \cD \Phi \; e^{\cS_{t}(\Phi) 
		+ \cB_\L(\Phi,\bs{\varphi})} \,,
	\label{XiL0}
			\end{equation}
with $\cD \Phi$ as in \eqref{diffPhi} and $\cS_{t}(\Phi)$ as in Eq.~\eqref{cS_def2}. For $y_1,\ldots, y_m$ distinct boundary vertices, $m \in 2 \mathbb{N}$, we get
\begin{equation}  \left.  \frac{\partial^{m}}{\partial  \varphi_{y_1}\partial \varphi_{y_2} \cdots \partial  \varphi_{y_{m-1}}\partial \varphi_{y_{m}}}
\, \log   \Xi_{0,t;\L} (\bs{\varphi}) \right|_{\bs{\varphi}=\0} =  \frac{ \int \cD \Phi \,\,  e^{\cS_{t}(\Phi)} V_{y_1}V_{y_2} \ldots V_{y_{m-1}}V_{y_m} }{ \int \cD \Phi \,\,  e^{\cS_{t}(\Phi) }}\,,
\end{equation}
from which it is immediately clear that we can use $\Xi_{0,t;\L}(\bs{\varphi})$ as the generating function of the boundary spin correlations in \eqref{spin_corr_V}  (paying attention to the order in which the Grassmann variables appear: $V_y$, $\varphi_y$ and $\frac{\partial}{\partial \varphi_y}$ anti-commute with each other). Finally, we note that $\Xi_{0,t;\L}(\bs{\varphi})$ is a function of the auxiliary sources associated with the (boundary) vertices, and no longer involves the auxiliary edges; in this sense it is a representation for the model defined on the original graph $G_\L$ (recall that $G_\L$ and $\tilde{G}_\L$ have the same vertex set).

\subsection{Grassmann representation: non-integrable models.}\label{lambdane0}

In the following proposition we state the representation of the generating function for non-integrable correlations, i.e.\ for the model \eqref{eq:HM} with $\lambda \neq 0$. 
\begin{proposition}\label{GGMprop3}
There exists $\l_0>0$ such that, if $|\l|\le \l_0$, then for any  $\bs y=(y_1,\ldots, y_{m}) \in \partial^l\L$ right-to-left ordered sequence of distinct boundary vertices, $m \in 2 \mathbb{N}$, the multipoint boundary spin correlations can be expressed as
\begin{equation}\label{evspinQ}\begin{aligned}
&\braket{\s_{y_1} \cdots \s_{y_{m}} }_{\l,t;\L} = \left. \frac1{\Xi_{\l,t;\L} (\bs{0})}  \frac{\partial^{m}}{\partial  \varphi_{y_1} \cdots \partial  \varphi_{y_{m}}}
\,  \Xi_{\l,t;\L} (\bs{\varphi})  \right|_{\bs{\varphi}=\0}\,,\end{aligned}
\end{equation}
where $\Xi_{\l,t;\L} (\bs{\varphi})$ is the Grassmann generating functional
\begin{equation}
	 \Xi_{\l,t;\L} (\bs{\varphi}) := \int \cD \Phi \; e^{\cS_{t}(\Phi) 
		+ \cB_\L(\Phi,\bs{\varphi})+ \cV_\L^{\rm int}(\Phi)} \,,
	\label{XiL}
			\end{equation}
with $\cD \Phi$  defined as in \eqref{diffPhi}, $\cS_{t}(\Phi)$ as in \eqref{cS_def2},  $\cB_\L(\Phi,\bs{\varphi})$ as in \eqref{eq:Bs} 
 and
	\begin{equation}
\begin{split}	\cV_\L^{\rm int} (\Phi)
			&:= \sum_{\substack{ {B} \subset \mathfrak{B} _\L: \\ {B} \neq \emptyset}}
			W_\L^{{\rm int}} ({B})\prod_{e \in{B}} E_{e}
					\end{split}
		\label{eq:B_expansion_bis}
		\end{equation}
where \begin{itemize} \item if $e$ is an horizontal edge with endpoints $z,z+\hat e_1$, $E_e:=\bar H_z H_{z+\hat e_1}$; if $e$ is a vertical edge with endpoints $z,z+\hat e_2$, $E_e:=\bar V_z V_{z+\hat e_2}$; \item $W_\L^{\rm int}$ is a function that, for any $n \in \mathbb N$ and suitable positive constants $C,c,\kappa$, satisfies the following bound:
		\begin{equation}
			\sup_{\bar{e}\in \mathfrak{B}_\L}\sum_{\substack{{B} \subset \mathfrak{B}_\L:\,  \bar{e}\ni {B} \\ |{B}|=n}}|W_\L^{\rm int}({B})| e^{c \delta({B})}	\le
			C^{n}|\lambda|^{\max (1,\kappa n)}\,,
			\label{eq:B_base_decay}
		\end{equation}
		where $\delta({B})$, for ${B} \subset \mathfrak{B}_\L$, denotes the size of the smallest ${B}' \supseteq {B}$ which is  the edge set of a connected subgraph of $G_\L$. $W_\L^{\rm int}$, considered as a function of $\lambda$, $t_1$ and $t_2$, can be analytically continued to any complex $\lambda, t_1, t_2$ such that $|\lambda| \le \lambda_0$ and $|t_1|, |t_2|\in K'$, with $K'$ the same compact set introduced before the statement of Theorem \ref{prop:main}, and the analytic continuations satisfies the same bounds above. \end{itemize}\end{proposition}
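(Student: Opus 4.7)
The strategy is to reduce the claim to the construction of an effective Grassmann interaction $\cV_\L^{\rm int}$ via a convergent cluster expansion; once $\cV_\L^{\rm int}$ is in place, the multipoint source formula \eqref{evspinQ} will follow by the same mechanism as in the integrable case of \cref{lambda0}. Since $\cB_\L(\Phi,\bs\varphi) = \sum_{z \in \partial^l\L} V_z\varphi_z$ is linear in the Grassmann variables and $\cV_\L^{\rm int}$ does not involve $\bs\varphi$, differentiating $\Xi_{\lambda,t;\L}$ in $\varphi_{y_1},\ldots,\varphi_{y_m}$ at $\bs\varphi=\0$ simply extracts the Grassmann expectation of $V_{y_1}\cdots V_{y_m}$ against the non-Gaussian measure $\cD\Phi\,e^{\cS_t(\Phi)+\cV_\L^{\rm int}(\Phi)}$, which is the $\lambda\ne0$ analogue of \eqref{spin_corr_V}. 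It therefore suffices to produce a Grassmann representation of $Z_{\lambda,t;\L}(\tJJ)$ (with the auxiliary edges of \cref{lambda0} inserted) as a constant multiple of $\int\cD\Phi\,e^{\cS_t(\Phi)+\cS_{\tilde t}(\Phi)+\cV_\L^{\rm int}(\Phi)}$.

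First I would apply the identity $e^{\beta J_i \sigma_z\sigma_{z'}} = \cosh(\beta J_i)(1+t_i\sigma_z\sigma_{z'})$ to each nearest-neighbor bond and expand the perturbation $\exp\bigl(\beta\lambda\sum_X V(X)\prod_{z\in X}\sigma_z\bigr)$ as a power series. Summing over $\sigma$, using $\sigma_z^2 = 1$, retains only configurations carrying an even number of $\sigma$-factors at every vertex and rewrites $Z_{\lambda,t;\L}(\tJJ)$ as a polymer gas in which polymers combine activated bonds with multi-spin insertions, each insertion carrying a factor of $\beta\lambda$. The polymer activities have local support (by the finite range of $V$) and decay exponentially in diameter. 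For $|\lambda|$ small enough the Kotecky-Preiss / Brydges-Kennedy-Battle criterion is verified, so the logarithm of the partition function admits a convergent cluster expansion; rearranging it by the edge set $B$ of activated nearest-neighbor bonds defines the effective weight $W_\L^{\rm int}(B)$. The factor $|\lambda|^{\max(1,\kappa n)}$ in \eqref{eq:B_base_decay} reflects that every nontrivial cluster contributes at least one factor of $\lambda$, while clusters producing a bond set of size $n\gg 1$ are forced to contain proportionally many perturbation vertices; the weight $e^{-c\delta(B)}$ emerges from the standard tree-graph bound applied to the exponentially decaying polymer activities.

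To convert the bond expansion into a Grassmann representation, I would observe that the nearest-neighbor partition function with an inserted bond product $\prod_{e\in B} t_i\sigma_z\sigma_{z+\hat e_i}$ is simply the $\lambda=0$ partition function with those bond weights modified, so the Fisher/Kasteleyn reduction of \cref{lambda0} converts each bond factor into the Grassmann monomial $E_e$; summing over $B$ weighted by $W_\L^{\rm int}(B)$ then produces $e^{\cV_\L^{\rm int}(\Phi)}$ inside the Grassmann integral. Carrying the auxiliary edges of \cref{lambda0} (or equivalently the source fields $\bs\varphi$) through the same procedure produces $\cS_{\tilde t}(\Phi)$ (respectively $\cB_\L(\Phi,\bs\varphi)$) as an additive term in the action, and \eqref{XiL}-\eqref{evspinQ} follow.

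The main obstacle is the uniform convergence of the cluster expansion across $L,M$ and across the parameter domain $|t_i|\in K'$, $|\lambda|\le\lambda_0$, together with the derivation of \eqref{eq:B_base_decay} in the stated form with the distinguished edge $\bar e$. These estimates are essentially identical to those in \cite[Prop.~1]{GGM} and their adaptation to the cylinder in \cite[Sec.~3]{AGG_AHP}; the open vertical boundary introduces no new difficulty here, because polymer activities are local and insensitive to the presence of the boundary, and the auxiliary edges of \cref{lambda0} play no role in the cluster expansion of the interaction (they are absorbed into $\cS_{\tilde t}$, respectively $\cB_\L$). The analytic continuation in $(\lambda,t_1,t_2)$ to a complex neighborhood with $|\lambda|\le\lambda_0$ and $|t_i|\in K'$ is automatic, since a uniformly convergent power series defines an analytic function of its parameters.
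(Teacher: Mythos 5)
Your overall strategy coincides with the paper's: both proofs reduce the statement to the Grassmann representation of the interaction already constructed in \cite[Proposition~1]{GGM} and \cite[Proposition~3.1]{AGG_AHP}, and both observe that, once $\cV_\L^{\rm int}$ is independent of the auxiliary couplings $\tJJ$, the derivatives with respect to $\tilde J_1,\ldots,\tilde J_{m/2}$ act only on $\cS_{\tilde t}$, so that \eqref{evspinQ} follows exactly as in the integrable case. The bounds \eqref{eq:B_base_decay} and the analyticity statement are, in both treatments, imported from those references rather than re-derived.

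There is, however, one point where your argument glosses over precisely the new content of this proposition: the justification that $W_\L^{\rm int}$ does not depend on $\tJJ$. This is not automatic in the polymer-gas formulation you set up. If you expand \emph{all} bonds of the modified graph $\tilde G_\L$ (including the auxiliary ones) via $1+t\,\sigma_z\sigma_{z'}$ and impose the parity constraint at each vertex, then a perturbation term $V(X)\sigma_X$ with $X$ touching the lower boundary can a priori be paired off using an activated auxiliary edge, so the resulting polymer activities --- and hence your $W_\L^{\rm int}(B)$ --- would acquire $\tJJ$-dependent contributions that do not resum into $\cS_{\tilde t}$; the clean form \eqref{XiL}, with $\cB_\L$ exactly as in the integrable case and $\cV_\L^{\rm int}$ independent of $\bs\varphi$, would then not follow. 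The paper avoids this by a different organisation of the expansion: it first rewrites $\sigma_X=\prod_{e\in\mathcal C(X)}\sigma_e$ with $\mathcal C(X)\subset\mathfrak B_\L$ a path (or symmetrised family of paths, cf.\ \cref{fig10}) consisting \emph{only} of horizontal and vertical edges of the original graph, expands only the factors $e^{\beta\lambda V(X)\sigma_X}=\cosh(\beta\lambda V(X))\bigl[1+\tanh(\beta\lambda V(X))\prod_{e\in\mathcal C(X)}\sigma_e\bigr]$, and then uses the exact identity expressing $\sum_\sigma e^{-\beta\tilde H_\L}\prod_{e\in\mathcal F}\sigma_e$ as a Gaussian Grassmann integral with insertions $t_i+(1-t_i^2)E_e$, keeping the Gaussian part and the auxiliary-edge term $\cS_{\tilde t}$ unexpanded. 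With this choice $\cV_\L^{\rm int}$ is literally the same $\tJJ$-independent kernel as in \cite{AGG_AHP}, and the separation $\cS_t+\cS_{\tilde t}+\cV_\L^{\rm int}$ is manifest. Your proof would be complete if you either adopted this path representation or proved that, in your polymer gas, the $\tJJ$-dependent contributions resum exactly to $\cS_{\tilde t}$ with no cross terms.
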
	
From now on we will refer to $\cB_\L (\Phi,\bs{\varphi})$, which is the only term dependent on $\bs{\varphi}$, as the \textit{boundary spin source} term, to $\cV_\L^{\rm int}(\Phi)$, which is the only term dependent on $\lambda$, as the \textit{interaction} term and to $W_\L^{\rm int}$ as the \textit{kernel} of the interaction. Moreover, $\cV_\L^{\rm int}(\Phi)$, as well as $W_\L^{\rm int}$, being independent of  $\bs{\varphi}$, are exactly the same functions introduced in \cite[Proposition 3.1]{AGG_AHP} (without auxiliary energy sources); in particular $W_\L^{\rm int}$ will satisfy the same properties listed after \cite[Proposition 3.1]{AGG_AHP} which for convenience are also reported below (see Remark~\ref{symmCyl} and the comments below Eq.~\eqref{inte}).  The most surprising property is that, unlike the related expressions for generating functionals of the energy correlations, the terms involving the auxiliary sources are $\lambda$-independent, and so have exactly the same form as in the integrable model.  As we shall see, this follows from the observation that the integrability-breaking term in \cref{eq:HM} can be written without the auxiliary edges added above, and so without the specific Grassmann variables appearing in $\cB_\Lambda$.
 
\medskip

\begin{proof}[Proof of Proposition~\ref{GGMprop3}] 
The proof of Proposition~\ref{GGMprop3} is a corollary of the proof of \cite[Proposition~3.1]{AGG_AHP} (and therefore of the proof of \cite[Proposition~1]{GGM}). In fact, the first step of the proof consists in a rewriting of the interaction potential in terms of edge variables: once it has been verified that this involves only the edges of the original graph (and not the additional edges), one can proceed exactly as described there. 

The starting point is the non-integrable analogue of \eqref{evspin}: 
\begin{equation}\label{evspinlambda}\begin{aligned}
&\media{\sigma_{y_1}\cdots\sigma_{y_{m}}}_{\lambda,t;\L} =  \left. \frac{\beta^{-m/2}}{ Z_{\lambda,t;\L}(\tJJ)}\frac{\partial^{m/2}}{\partial \tilde{J}_1 \cdots \partial \tilde{J}_{m/2}} Z_{\lambda,t;\L} ( \tJJ)  \right|_{ \tJJ =\0}\,,\end{aligned}
\end{equation}
where
\begin{equation}\label{Z_tJA}
Z_{\l,t;\L}( \tJJ) =\sum_{\s \in \Omega_\L} e^{-\beta \tilde{H}_{\L}(\s) + \beta \lambda \sum_{X\subset \L} V(X) \sigma_X}\,,
\end{equation}
with $\tilde{H}_\L(\s)$ as in \eqref{H_addb}, the $\l$-dependent term as in \eqref{eq:HM} and $\sigma_X:=\prod_{z\in X}\s_z$. Eq.~\eqref{Z_tJA} is then the partition function associated with the Hamiltonian of a non-integrable model similar to the one in \eqref{eq:HM} but defined on $\tilde{G}_\L$, i.e.\ the nearest-neighbour interactions include those between boundary vertices that are connected with additional edges.

As already noted in \cite[Proposition~3.1]{AGG_AHP}, any even interaction of the form $V(X) \sigma_X$ with $X  \subset \L$ can always be rewritten in terms of edge variables: now we note that it is always possible to consider only the variables associated with horizontal and/or vertical edges and avoid the variables associated with additional edges.
Let $e := \{ z, z+\hat{e}_i\} \in \mathfrak{B}_\L$ denote an edge of the original graph (horizontal or vertical depending on whether it is $i=1$ or $i=2$) and let $\sigma_e := \sigma_z \sigma_{z+\hat{e}_i} \in \{\pm 1\}$ denote the associated `edge spin'\footnote{The product of two spins placed at the endpoints of an edge is also called \textit{energy operator}: in \cite{AGG_AHP} it is denoted by $\epsilon_x$ ($x$ denotes $\{z, z+\hat{e}_i\}$), in \cite{GGM} it is denoted by $\varepsilon_b$ ($b$ denotes $\{z, z+\hat{e}_i\}$).}. If $X = \{z_1, \ldots, z_{2n}\}$, $\sigma_X = \sigma_{z_1}  \cdots \sigma_{z_{2n}}$ can always be rewritten in terms of $\prod_{e \in \mathcal{C}(X)} \sigma_e$ where $\mathcal{C}(X) \subset \mathfrak{B}_\L$ is a special path made up of edges that connects all the vertices of $X$. In Fig.~\ref{fig10} the simple case $|X|= 2$ is shown, which is the case of the pair interaction studied in \cite{GGM}: as discussed after \cite[Equation~(2.8)]{GGM}, we can write $$\sigma_X =\sigma_{z_1}\sigma_{z_2} = \frac12 \Big(\prod_{e \in\CC_U(z_{1},z_{2})} \sigma_e+\prod_{e\in\CC_D(z_{1},z_{2})}\sigma_e\Big)\,,$$
where $\CC_{U/D} (z_{1},z_{2}) \subset \mathfrak{B}_\L$ is the upmost/downmost shortest path connecting $z_{1}$ with $z_{2}$ (see Fig.~\ref{fig10}). In this example we consider an average between two edge paths to obtain a set of figures invariant under the basic symmetries on the cylinder, namely horizontal translations, and horizontal and vertical reflections; if $|X| > 2$ we can always obtain a similar set of more general figures, made of horizontal and vertical edges, which are invariant under the basic symmetries on the cylinder.

Here we want to emphasize that even if we consider the vertices at the endpoints of additional edges, we can always choose a path made only of horizontal and vertical edges, i.e.\ edges of the original graph. 

\begin{figure}[h]
\centering
\begin{tikzpicture}[scale=0.40]
\draw [help lines, dashed, gray] (-12,1) grid (4,9);
\draw[dashed, gray] (-13,8)--(-12,8);
\draw[dashed, gray] (-13,9)--(-12,9);
\draw[very thick, blue] (-6,6)--(-6,8)--(-11,8);
\draw[very thick, purple] (-6,6)--(-11,6)--(-11,8);
\node [above left] at (-10.5,8){${z'_2}$};
\node [above right, blue] at (-9,8){$C_U(z_1',{z'_2})$};
\node [below right] at (-6.5,6){${z'_1}$};
\node [below right,purple] at (-12,6){$C_D({z'_1},z_2')$};
\draw[very thick, blue] (-3,1)--(-3,3)--(2,3);
\draw[very thick, purple] (-3,1)--(2,1)--(2,3);
\node [above, blue] at (-2,3){$C_U{(z_1,z_2)}$};
\node [above, purple] at (-0.5,1){$C_D{(z_1,z_2)}$};
\node [below left] at (-3,1){${z_1}$};
\node [above right] at (2,3){${z_2}$};
\draw[dashed, gray] (-13,7)--(-12,7);
\draw[dashed, gray] (-13,6)--(-12,6);
\draw[dashed, gray] (-13,5)--(-12,5);
\draw[dashed, gray] (-13,4)--(-12,4);
\draw[dashed, gray] (-13,3)--(-12,3);
\draw[dashed, gray] (-13,2)--(-12,2);
\draw[dashed, gray] (-13,1)--(-12,1);
\draw[dashed, gray] (5,9)--(4,9);
\draw[dashed, gray] (5,8)--(4,8);
\draw[dashed, gray] (5,7)--(4,7);
\draw[dashed, gray] (5,6)--(4,6);
\draw[dashed, gray] (5,5)--(4,5);
\draw[dashed, gray] (5,4)--(4,4);
\draw[dashed, gray] (5,3)--(4,3);
\draw[dashed, gray] (5,2)--(4,2);
\draw[dashed, gray] (5,1)--(4,1);
\draw[dashed, gray] (-12,1) to  [out= -75, in =-75] (-10,1) ;
\draw [dashed, gray] (-9,1) to [out= -75, in =-75] (-7,1);
\draw [dashed, gray](-3,1) to [out= -75, in =-75] (2,1);
\draw [dashed, gray] (3,1) to [out= -75, in =-75] (4,1);
\end{tikzpicture}
\caption{An example of pair interaction: paths $C_U$ and $C_D$ connecting two sites, whether they are far from the boundary ($z_1'$ and $z_2'$) or on the boundary ($z_1$ and $z_2$), can always be formed only by horizontal and vertical edges (avoiding involving auxiliary edges). Similarly, for generic even interactions, one can always consider figures consisting only of horizontal and vertical edges that respect symmetry properties on the cylinder.}\label{fig10}
\end{figure}
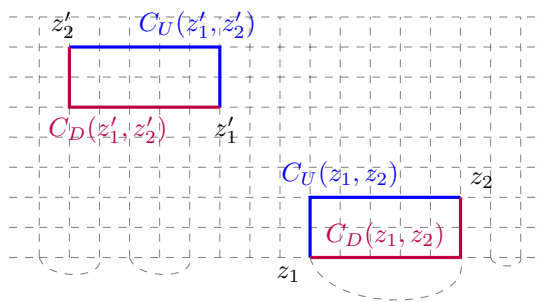
Since $\sigma_X = \prod_{e \in \mathcal{C}(X)} \sigma_e$ is always equal to $\pm1$, we can use $e^{\pm a} = \cosh a [1 \pm \tanh a]$ to write
\begin{equation}\prod_{X\subset \L} e^{\beta \lambda V(X) \sigma_X} =  \prod_{X\subset \L} \cosh{(\beta \lambda V(X))} \prod_{X\subset \L}\Big[1+ \tanh(\beta \lambda V(X)) \prod_{e \in \mathcal{C}(X)} \sigma_e\Big]\,. \label{VG}\end{equation}
Expanding the second product in the r.h.s.\ we obtain a sum of terms: each term contains a product $\prod_{e \in  \mathcal{F}_n }\sigma_e$, where $\mathcal{F}_n \subset \mathfrak{B}_\L$ is a figure composed of $n \ge 0$ edge paths (if $n = 0$, $\prod_{e \in  \mathcal{F}_0 }\sigma_e=1$). Note that since $\sigma_e^2 = 1$, in each term there is a product of edge spin variables associated with distinct edges. 
Furthermore the correspondence in \cite[Eq.~(2.7)]{GGM} remains valid even on the modified graph $\tilde{G}_\L$ and even if the energy sources are set equal to zero: namely, for a set $\mathcal{F}_n \subset \mathfrak{B}_\L$ of distinct edges, we get
$$ \sum_{\s \in \Omega_\L}  e^{-\beta \tilde{H}_\L} \prod_{e \in \mathcal{F}_n} \sigma_{e}
 = (-1)^{LM}C_\beta(J_1,J_2, \tJJ)\cdot \int \cD \Phi \,\,  e^{\cS_{t}(\Phi) + \cS_{\tilde{t}}(\Phi)} \, \prod_{e \in \mathcal{F}_n}  \big[t_{i}(\beta)+(1-t_{i}^2(\beta)) E_{e}\big]$$
where in the r.h.s.\ we use the same notation used for \eqref{eq:Z_Grass_addb2} and if $e$ is an horizontal (resp.\ vertical) edge $i = 1$ (resp.\ $i = 2$)  and $E_e = \bar{H}_z H_{z+\hat{e}_1}$ (resp.\ $E_e= \bar{V}_z V_{z+\hat{e}_2})$. We can therefore refer to the representation obtained in \cite{GGM} and \cite{AGG_AHP}, which in the modified graph becomes
\begin{equation}\label{ZV}
 Z_{\l,t;\L}( \tJJ) \propto C_\beta( J_1,J_2, \tJJ)  \cdot  \int \cD \Phi \,\,  e^{\cS_{t}(\Phi) + \cS_{\tilde{t}}(\Phi)+\mathcal V_\L^{\rm int}(\Phi)}, \end{equation}
where $\propto$ means `up to a multiplicative constant independent of $\tJJ$' and $\mathcal V_\L^{\rm int}(\Phi)$ is the {\it same} interaction potential as \cite[Eq.~(2.5)]{GGM} (cf.\ \cite[Eq.~(3.2)]{AGG_AHP}). Then, if we now plug \eqref{ZV} into \eqref{evspinlambda}, we get  the non-integrable analogue of \eqref{spin_corr_V}, namely
\begin{equation}\label{eqq}\begin{split}
\media{\sigma_{y_1}\sigma_{y_2} \cdots \sigma_{y_{m-1}}\sigma_{y_m}}_{\lambda, t ;\L} &=  \frac{ \int \cD \Phi \,\,  e^{\cS_{t}(\Phi) +\mathcal V_\L^{\rm int}(\Phi)} V_{y_1}V_{y_2} \ldots V_{y_{m-1}}V_{y_m} }{ \int \cD \Phi \,\,  e^{\cS_{t}(\Phi) +\mathcal V_\L^{\rm int}(\Phi)}} \end{split} \end{equation}
where the r.h.s.\ is the expectation of $V_{y_1}V_{y_2} \ldots V_{y_{m-1}}V_{y_m} $ with respect to the \textit{non-Gaussian Grassmann measure} $\mathcal D \Phi \, e^{\cS_{t}(\Phi) + \mathcal V_\L^{\rm int}(\Phi)}$ (in fact $\mathcal V_\L^{\rm int}(\Phi)$ contains not only quadratic Grassmann monomials). Since $\mathcal V_\L^{\rm int}(\Phi)$ is independent of $\tilde{\bs{J}}$, the derivatives with respect to the auxiliary weights $\tJJ$ only act on $\cS_{\tilde{t}}(\Phi)$ so we can proceed exactly as below \eqref{spin_corr_V} and easily verify that the Grassmann expectation in the r.h.s.\ of \eqref{eqq} can be obtained by \eqref{evspinQ}. \end{proof}

\subsection{Grassmann representation: massive and massless variables.}\label{xiphi}
Before we start the multiscale computation, let us make a final rewriting of the Grassmann 
generating functional $\Xi_{\l,t;\L} (\bs{\varphi})$. As discussed in \cite{AGG_AHP, AGG_CMP}, at criticality the interaction term has, among others, the effect of modifying 
(`renormalizing') the large distance behavior of the bare propagator (covariance of the Gaussian Grassmann measure) by effectively rescaling it by $1/Z$ (where $Z=Z(\lambda)$ plays the role of the `wave function renormalization', in a QFT analogy), and by changing the unperturbed critical temperature $\beta_c|_{\lambda=0}$ and the bare couplings 
into `dressed' ones, $\beta_c(\lambda)$ and $t^*:= (t_1^*,t_2^*)$ respectively (here $t_1^*=t_1^*(\lambda)$ and $t_2^*=\frac{1-t_1^*}{1+t_1^*}$).
In order to take this effect into account, it is convenient to write the Grassmann generating function in terms of a reference Gaussian Grassmann integration with dressed parameters  $Z(\lambda),\beta_c(\lambda), (t_1^*(\lambda),t_2^*(\lambda))$ with $Z,\beta_c,t_1^*$ the same {\it analytic} functions of $\lambda$ used in \cite{AGG_AHP,AGG_CMP}, see in particular \cite[Proposition 4.11]{AGG_AHP}.
Choosing the same values of $Z,\beta_c,t_1^*$ as in \cite{AGG_AHP,AGG_CMP} works because these values do not depend upon the presence of the external sources, 
and the $\varphi$-independent part of the action at exponent, under the integral sign in the right hand side of Eq.~\eqref{XiL}, that is $\mathcal S_t(\Phi)+
\mathcal V^{\text{int}}_\Lambda(\Phi)$, has the same expression as the $\boldsymbol{A}$-independent part of the action in \cite[Eq.~(3.1)]{AGG_AHP} (denoted by 
$\mathcal S_{t_1,t_2}(\Phi)+\mathcal V(\Phi,\boldsymbol{0})$ there). 

Therefore, in order to set up the multiscale analysis, we follow the procedure described in \cite[Section~3]{AGG_AHP} taking into account the presence of boundary spin sources. We perform a change of variables with an explicit invertible linear transformation, see \cite[Section~2.1.2]{AGG_AHP}, passing from collection of Grassmann variables $\{(\bar H_{z},H_{z},\bar V_{z},V_{z})\}_{z\in \L}$ to two collections of Grassmann  variables $\xi = \set{ (\xi_{+,z},\xi_{-,z}) }_{z \in \L}$ and $\phi = \set{ (\phi_{+,z}, \phi_{-,z})}_{ z \in \L}$, which are called `massive' and `massless' variables. 
In terms of the new variables the quadratic action in Eq.~\eqref{cS_def2} separates into a part that depends only on $\xi$ and one that depends only on $\phi$, namely 
$\cS_{t}( \xi,\phi) = \cS_{m}(\xi) +\cS_{c}(\phi)$, cf.\ \cite[Equations~(2.1.12)-(2.1.13)]{AGG_AHP} and preceding lines. Next we rescale the Grassmann variables as 
$(\xi,\phi)\to Z^{-1/2}(\xi,\phi)$, and we isolate from $\cS_{t}( \xi,\phi)$ a `dressed' quadratic part $\cS^*(\xi,\phi)=\cS_{m}^*(\xi) +\cS_{c}^*(\phi)\equiv \frac12(\xi,A^*_m\xi)+\frac12(\phi,A^*_c\phi)$, where $\cS_{m}^*(\xi)$ and $\cS_{c}^*(\phi)$ are defined via the same expressions as $\cS_{m}(\xi)$ and $\cS_{c}(\phi)$, respectively, with $(t_1,t_2)$ 
replaced by $(t_1^*,t_2^*)$. We thus obtain the analogue of \cite[Eq.(3.22)-(3.23)]{AGG_AHP}, that is 
 \begin{equation} 
	\Xi_{\l,t;\L}(\bs{\varphi})\propto \int  P^*_c (\cD\phi) P^*_m (\cD\xi) \, e^{\cV_\L^{(1)}(\xi,\phi)+ \cB_\L(\phi,\bs{\varphi})}, \label{eq:startfrom}\end{equation}
where $P^*_c (\cD\phi)=\mathcal D\phi\, e^{\cS^*_c(\phi)}/{\textrm{Pf}(A^*_c)}$, $P^*_m (\cD\xi)=\mathcal D\xi\, e^{\cS^*_m(\xi)}/{\textrm{Pf}(A^*_m)}$, and 
\begin{equation}\cV_\L^{(1)}(\xi,\phi) :=  Z^{-1}\cS_{t}(\xi,\phi) - \cS_{t^*}(\xi,\phi)
+ \widetilde{\cV}_\L^{\rm int}(Z^{-1/2}\xi,Z^{-1/2}\phi),\label{qualequesta}\end{equation}
where  by 
$\widetilde{\cV}_\L^{\rm int}(\xi,\phi)$ we denote $\mathcal V^{\text{int}}_\Lambda(\Phi)$, expressed in terms of the new variables $(\xi,\phi)$ (note that 
$\mathcal V^{(1)}_\Lambda(\xi,\phi)$ in \eqref{eq:startfrom} is the same as the function $\mathcal V^{(1)}(\xi,\phi,\boldsymbol{0})$ in \cite[Eq.~(3.22)]{AGG_AHP}).
Moreover, \begin{equation}\label{eq:BsPhi}
 \cB_\L(\phi,\bs{\varphi}) := Z^{-1/2}\sum_{z \in \partial^l\L} \phi_{-,z} \varphi_{z}\,.\end{equation} 
\begin{remark}[Reflection symmetries]\label{symmCyl}
The quadratic actions $\cS_{t}(\xi)$ and $\cS_{t}(\phi)$ are invariant under the following horizontal and vertical reflections: $\xi_{\omega,z} \to i \xi_{-\omega, (-(z)_1,(z)_2)}$, $\phi_{\omega,z} \to   i\omega \phi_{\omega, (-(z)_1,(z)_2)}$, and $\xi_{\omega,z} \to -i \omega$ $\xi_{\omega, ((z)_1,M+1-(z)_2)}$, $\phi_{\omega,z} \to   i \phi_{-\omega, ((z)_1,M+1-(z)_2)}$ (see \cite[Section~2.4]{AGG_AHP}); the boundary spin source term in Eq.~\eqref{eq:BsPhi}, where only the $\phi_{-, z}$ field appears and $(z)_2=1$, is invariant under the following horizontal reflections: $\phi_{-,((z)_1,1)} \to   i\omega \phi_{-, (-(z)_1,1)}$ and $ \varphi_{((z)_1,1)} \to i \varphi_{(-(z)_1,1)}$.\end{remark}

For later purposes, we will also need to compute the averages of arbitrary monomials in the massive and critical Grassmann variables as $\int  P_\#^*(\cD f) f_{\omega_1,z_1} \ldots f_{\omega_n,z_n} = \Pf G_\#$ where $\#\in\{m,c\}$ and, if $n$ is even, $G_\#$ is a $n\times n$ matrix with entries $[G_\#]_{jk} = \int  P_\#^*(\cD f) f_{\omega_j,z_j} f_{\omega_k,z_k}  = -[(A^*_\#)^{-1}]_{(\omega_j,z_j),(\omega_k,z_k)}$ (if $n$ is odd the averages are zero). The covariance of the Gaussian Grassmann measure $P_m^*(\cD \xi)$ (resp. $P_c^*(\cD \phi)$), i.e.\ the  two-point function $\int  P_m^*(\cD \xi)  \xi_{\omega,z} \xi_{\omega',z'}$ (resp. $\int  P_c^*(\cD \phi)  \phi_{\omega,z} \phi_{\omega',z'}$), is the $(\omega,\omega')$ element of the \textit{massive propagator}  $\mathfrak{g}_{\L,m}$ (resp. \textit{critical propagator}  $\mathfrak{g}_{\L,c}$). The explicit expression of the propagators has been computed in \cite[Section~2.1.2]{AGG_AHP} via Fourier diagonalization of $A^*_\xi$ and $A^*_\phi$, see \cite[Eqs.~(2.1.16),(2.1.19)]{AGG_AHP} or \cite[Eqs.~(2.1.13),(2.1.15)]{AGG_CMP}\footnote{Note that in \cite{AGG_AHP,AGG_CMP} to which we refer, the notation for the cylindrical propagator is without subscript $\L$.}. We use the same decompositions as \cite[Section~2.2.1]{AGG_AHP} and the same estimates as \cite[Section~2.2.2]{AGG_AHP}: in Section~\ref{propagatori} below, we recall the corresponding expressions in the half-plane limit. 

Moreover, we express $\cV_\L^{(1)}(\xi,\phi)$ of Eq.~\eqref{eq:startfrom} as in \cite[Eq.~(3.24)]{AGG_AHP} (cf.\ \cite[Eq.~(2.2.18)]{AGG_CMP}), namely 
\begin{equation}\label{inte}
\cV_\L^{(1)}(\xi,\phi) = \sum_{n \in 2 \mathbb{N}} \sum_{(\bs{\omega},\bs{z})\in \{\pm,\pm i\}^n\times \L^n}W_\L^{(1)}(\bs{\omega},\bs{z}) \phi (\bs{\omega},\bs{z}) \,,
\end{equation} where $(\bs{\omega},\bs{z}) =((\omega_1, z_1) \ldots,(\omega_{n}, z_{n}))$, $\phi (\bs{\omega},\bs{z}) :=\prod_{k=1}^{n} \phi_{\omega_k,z_k}$ and  $\phi_{\pm i,z}$ denotes $\xi_{\pm, z}$; we assume that $W_\L^{(1)}(\bs{\omega},\bs{z})$, the \textit{initial scale} kernel function of the interaction, is antisymmetric under simultaneous permutations of $\bs{\omega}$ and $\bs{z}$, invariant under horizontal translations of $\bs{z}$ (with anti-periodic horizontal boundary conditions) and invariant under the reflection symmetries induced by the substitutions that leave the quadratic actions unchanged mentioned in Remark~\ref{symmCyl}. The interaction kernel in \eqref{inte} satisfies the same estimate as \cite[Eq.~(3.15)]{AGG_AHP} and the same bulk-edge decomposition as \cite[Eq.~(3.17)]{AGG_AHP}: in Remark~\ref{Wint} below, we recall the corresponding expressions adapted to the half-plane limit given by
\begin{equation}\label{W1HH}
W_\bH^{(1)}(\bs{\omega}, \bs{z}):= \lim_{L, M \to \infty}  W^{(1)}_\L( \bs{\omega}, \bs{z})\,.\end{equation}
Similarly, we represent also $\cB_\L(\phi,\bs{\varphi})$ (cf.\ Eq.~\eqref{eq:BsPhi}) as
\begin{equation}\label{BL}
\cB_\L(\phi,\bs{\varphi}) = \sum_{(\omega, z) \in \set{\pm}\times\partial^l\L}B^{(0)}_\L ((\omega, z), y) \phi_{\omega,z} \varphi_y\,,
\end{equation}
where $B^{(0)}_\L ((\omega, z), y):= Z^{-\frac12} \delta_{\omega,-}\delta_{z,y}$ is the \textit{initial scale} kernel function of the boundary spin source. Note that what we call the \textit{initial scale} kernel is associated with superscript $(1)$ when referring to the interaction kernel while it is associated with superscript $(0)$ when referring to the boundary spin source kernel: the reason for this apparent inconsistency will become clear shortly, at the beginning of Section \ref{sec:3}. In the half-plane limit we let 
\begin{equation}\label{BH0}B^{(0)}_\bH ((\omega, z), y) := \lim_{L, M \to \infty} B^{(0)}_\L ((\omega, z), y)\,;
\end{equation} 
note that $|B^{(0)}_\bH ((\omega, z), y)| = |B^{(0)}_\L ((\omega, z), y)| = |Z^{-\frac12}|\le C$.

\begin{remark}[Half-plane symmetries] \label{item:symm} We assume $W_\bH^{(1)}(\bs{\omega}, \bs{z})$ antisymmetric under simultaneous permutations of $\bs{\omega}$ and $\bs{z}$, invariant under horizontal translations of $\bs{z}$ and invariant under the reflection symmetries induced by the horizontal reflections $\phi_{\omega,z} \to \Theta_1\phi_{\omega,z}:=  i\omega \phi_{\omega, (-(z)_1,(z)_2)}$ and $\xi_{\omega,z}  \to i \xi_{-\omega, (-(z)_1,(z)_2)}$. Finally, we assume $B^{(0)}_\bH ((\omega, z), y)$  invariant under the reflection symmetries induced by $\phi_{\omega,z} \to \Theta_1\phi_{\omega,z}$ and $\varphi_{z}  \to i \varphi_{(-(z)_1,1)}$. %
\end{remark}

\subsubsection{Propagators: decompositions and dimensional bounds.}\label{propagatori}

\textit{The multiscale decomposition.} With the aim of performing a multiscale analysis in the following sections, we let $\mathfrak g_\L^{(1)} := \fg_{\L,m}$ denote the massive propagator as in \cite[Eq.(2.1.16)]{AGG_AHP}), whose explicit expression in the half-plane limit is
\begin{equation}\label{gH1}
\begin{aligned}
\mathfrak{g}^{(1)}_{\bH}(z,z') & := \lim_{L,M \to \infty}  \mathfrak{g}^{(1)}_{\L}(z ,z')  \\
&  =  \delta_{(z)_2,(z')_2}\frac{1}{2\pi} \int_{[-\pi,\pi]} dk_1 e^{-ik_1((z)_1-(z')_1)}  \begin{pmatrix}  0 & (1+t^*_1 e^{ik_1})^{-1} \\ - ( 1+t^*_1 e^{-ik_1})^{-1} & 0 \end{pmatrix}\,,\end{aligned}\end{equation}
and we introduce the \textit{multiscale decomposition} of the critical propagator $\fg_{\L,c}$ (cf.\ \cite[Section~(2.2.1)]{AGG_AHP}):  letting $h^*:= -\lfloor \log_2(\min \{L,M\})\rfloor$, for any $h^* \le h <0$ we write
\begin{equation} 
\fg_{\L,c}(z,z')=\mathfrak g_\L^{(\le h)}(z, z')+\sum_{j=h+1}^0\mathfrak g_\L^{(j)}(z, z'),\label{gch*h}
\end{equation}
where \begin{equation}\label{g0h}
\begin{aligned}
\fg_\L^{(0)} (z,z')& := \int_0^1 \fg_{\L}^{[\eta]} (z,z') d\eta\,,\\
\fg_\L^{(h)} (z,z')& := \int_{2^{-2h-2}}^{2^{-2h}} \fg_{\L}^{[\eta]} (z,z') d\eta\qquad \mbox{ if } h^*<h < 0\,,\\
\fg_\L^{(\le h)} (z,z')& := \int_{2^{-2h-2}}^{\infty} \fg_{\L}^{[\eta]} (z,z') d\eta\,,
\end{aligned}\end{equation} 
and $\fg_{\L}^{[\eta]}$ is the \textit{cutoff propagator} as in \cite[Eq.~(2.2.7)]{AGG_AHP}, whose explicit expression in the half-plane limit is
\begin{equation}\begin{aligned}\label{gceta}
\mathfrak{g}^{[\eta]}_{\bH} (z,z') & := \lim_{L,M \to \infty}  \mathfrak{g}^{[\eta]}_{\L}(z +(\lfloor L/2\rfloor, 0),z'+(\lfloor L/2\rfloor, 0))  : = \begin{pmatrix} g^{[\eta]}_{ \bH,++} (z,z') &g^{[\eta]}_{\bH, +-} (z,z') \\ g^{[\eta]}_{\bH, -+} (z,z') &g^{[\eta]}_{ \bH,--} (z,z')  \end{pmatrix} \\ 
&  :=  \frac{1}{(2\pi)^2} \int_{[-\pi,\pi]^2} dk_1dk_2  e^{-ik_1((z)_1-(z')_1)} e^{-\eta D(k_1,k_2)} \\
 & \times \left\{e^{-ik_2((z)_2-(z')_2)}  \begin{pmatrix}  -2it^*_1\sin{k_1} & -(1-(t^*_1)^2)(1-B(k_1)e^{-ik_2}) \\ (1-(t^*_1)^2)(1-B(k_1)e^{ik_2}) &2it^*_1\sin{k_1} \end{pmatrix} \right.\\
& \left. -  \,e^{-ik_2((z)_2+(z')_2)}  \begin{pmatrix}-i2t^*_1\sin{k_1}&  -(1-(t^*_1)^2)(1-B(k_1)e^{ik_2}) \\  (1-(t^*_1)^2)(1-B(k_1)e^{ik_2})& \frac{1-B(k_1)e^{-ik_2}}{1-B(k_1)e^{ik_2}}2it^*_1\sin{k_1}\end{pmatrix}  \right\}\,,
 \end{aligned}
\end{equation}
where 
\begin{equation}\nonumber\begin{aligned}
&D(k_1,k_2) = 2(1-t^*_2)^2(1-\cos{k_1})+2(1-t^*_1)^2(1-\cos{k_2})\,, \qquad B(k_1) = \frac{|1+t^*_1 e^{ik_1}|^2}{(1+t^*_1)^2} \,.
 \end{aligned}
\end{equation}
Note that, for the critical case we can use $t^*_2 = \frac{1-t^*_1}{1+t^*_1}$ so that $B(k_1)$ reported here coincides with $B(k_1)$ in \cite[Eq.~(2.1.22)]{AGG_AHP}; moreover,  $\frac{1-B(k_1)e^{-ik_2}}{1-B(k_1)e^{ik_2}}$ in the last line of \eqref{gceta} coincides with $e^{2(M+1){k_2}}$ in the last line of \cite[Eq.~(2.1.19)]{AGG_AHP} (to verify this just use \cite[Eq.~(2.1.23)]{AGG_AHP}) therefore the expressions in curly brackets in \eqref{gceta} and in  \cite[Eq.~(2.1.19)]{AGG_AHP} coincide.
Combining \eqref{gceta} with \eqref{g0h} and \eqref{gch*h}, we get, for any $h < 0$, $ \fg_{\bH,c}(z,z')=\mathfrak g_\bH^{(\le h)}(z, z')+\sum_{j=h+1}^0\mathfrak g_\bH^{(j)}(z, z')$.

\textit{The cancellation property.} In view of \cite[Remark~2.2]{AGG_AHP}, the definition in Eq.~\eqref{gceta} can be extended to all $z, z' \in \mathbb{R}^2$ so that, using the symmetries mentioned in Remark~\ref{item:symm}, and letting $\bar{z} := ((z)_1, 0)$ be the projection of $z$ on the row at height $0$ (i.e.\ immediately below the lower boundary), we see that some components of $\fg_{\bH,c}(z,z')$ respect the following \textit{cancellation property}:
\begin{equation}
g_{\bH,c,++} (\bar{z}, z') = g_{\bH,c,++} ( z,\bar{z}')=g_{\bH,c,+-} (\bar{z}, z') =g_{\bH,c,-+} ( z,\bar{z}')=0\,, 
\label{gc0}
\end{equation}
and analogously for the same components of $\fg_\bH^{(h)}$ and $\fg_\bH^{(\le h)}$ (cf.\ \cite[Eq.~(2.2.8)]{AGG_AHP}). As in \cite[Remark~2.3]{AGG_CMP}, we can introduce, for any $z \in \bH$, two fictitious Grassmann variables $\phi_{\omega,\bar{z}}$, $\omega \in \set{\pm}$: if $\omega = +$,  $\phi_{+,\bar{z}}$ is identically zero in the sense that the two-point functions $g_{\bH,c,+\omega'} (\bar{z}, z')$ and  $g_{\bH,c,\omega+} ({z}, \bar{z}')$ vanish by \eqref{gc0}.

\textit{Dimensional bounds.} The propagators $\fg_\bH^{(0)}$, $\fg_\bH^{(h)}$ and $\fg_\bH^{(\le h)}$, for any $h < 0$, as well as the propagator $\mathfrak g_\bH^{(1)}$, satisfy the dimensional estimates stated in \cite[Proposition~2.3]{AGG_AHP} which remain valid for the infinite volume limits (see \cite[Remark~2.6]{AGG_AHP}), namely, for any $\bs{r} = (r_{1,1}, r_{1,2}, r_{2,1}, r_{2,2}) \in \mathbb{Z}^4_+$ and any $z,z' \in \bH$,
\begin{equation}\begin{aligned}\label{gdecay}
& || \bs{\partial}^\bs{r}\fg_\bH^{(h)}(z,z')|| \le C^{1+ |\bs{r}|_1} \bs{r}! 2^{(1+|\bs{r}|_1)h } e^{-c_02^h \|z-z'\|_1} \\
&|| \bs{\partial}^\bs{r}\fg_\bH^{(\le h)}(z,z')|| \le C^{1+ |\bs{r}|_1} \bs{r}! 2^{(1+|\bs{r}|_1)h } \end{aligned}\end{equation}
where in the l.h.s.\ the matrix norm is the max norm, i.e.\ the maximum over the matrix elements, $ \bs{\partial}^\bs{r} : = \prod_{i,j = 1}^2 \partial_{i,j}^{r_{i,j}}$, $\bs{r}! = \prod_{i,j=1}^2 r_{i,j}!$ and $\partial_{1,j}$  (resp.\ $\partial_{2,j}$) is the discrete derivative in direction $j$ with respect to the first (second) argument.

\textit{The bulk-edge decomposition.} We let 
 \begin{equation} \label{ginfty} \mathfrak{g}^{[\eta]}_\infty (z,z') := \lim_{M \to \infty} \mathfrak{g}^{[\eta]}_{\bH} (z+(0, \lfloor M\rfloor),z'+(0, \lfloor M\rfloor))\,,\end{equation}
be the infinite cutoff propagator, whit explicit expression given by \cite[Eq.~(2.2.9)]{AGG_AHP}: note that it is the same expression in the r.h.s.\ of Eq.~\eqref{gceta} with only the first term in curly brackets. Then, we can write Eq.~\eqref{gceta} as $\mathfrak{g}^{[\eta]}_{\bH} := \mathfrak{g}^{[\eta]}_\infty+ (\mathfrak{g}^{[\eta]}_{\bH}-\mathfrak{g}^{[\eta]}_{\infty})\equiv 
\mathfrak{g}^{[\eta]}_{\bH}+\mathfrak{g}^{[\eta]}_{\bH,E}$, we can define $\fg_\infty^{(h)}$ and $\fg_{\bH,E}^{(h)}$ via the analogues of \eqref{g0h} and we get, for any $h <0$, $\fg_{\bH,c}(z,z') = \fg^{(\le h)}_\bH (z,z')+ \sum_{j=h+1}^0 (\fg_\infty^{(j)}(z,z') +\fg_{\bH,E}^{(j)} (z,z'))$: in analogy with \cite[Section~2.2.1]{AGG_AHP} we call it \textit{bulk-edge} decomposition (even if here it is actually a decomposition into an infinite plane propagator and an edge propagator). The infinite plane propagators satisfy the same estimates as \eqref{gdecay} (cf.\ \cite[Remark~2.6]{AGG_AHP}, the edge propagators satisfy similar estimates provided that instead of $\|z-z'\|_1$ in the first line of \eqref{gdecay} we consider $|(z)_1-(z')_1|+ |(z)_2+(z')_2|$ (cf.\ \cite[Eq.~(2.2.16)]{AGG_AHP}). 

To prove these estimates it is sufficient to retrace the proof in \cite[Appendix~B]{AGG_AHP}: taking into account the comments below \eqref{gceta}, it is easy to verify that $\bs{\partial}^\bs{r}\fg^{(h)}_{\bH,E}$ corresponds to the contribution in the first line in square brackets of \cite[Eq.~(B.21)]{AGG_AHP}.

\section{Multiscale expansion.}\label{sec:3}

In this section, we will show that for every $J_1/J_2 \in K$, $t\in K'$, with $K,K'$ the compact sets introduced before the statement of Theorem \ref{prop:main}, 
$|\lambda|$ sufficiently small, and the appropriate choice of $Z$, $\beta_c$, $t_1^*$,  the derivatives of $\log \Xi_{\l,t;\L}(\bs{\varphi})$ of order $m$, 
with no repetitions, at $\bs{\varphi}=\bs 0$  (and then the $m$-point boundary spin correlations, see Eq.~\eqref{evspinQ}) admit an 
expansion as a uniformly convergent sum. Such an expansion is based on the following iterative evaluation of $\Xi_{\l,t;\L}(\bs{\varphi})$: 
starting from Eq.~\eqref{eq:startfrom}, we first define
\begin{equation}
	e^{\cV_\L^{(0)}(\phi)}\propto 
    \int P_m^* (\cD\xi) \, e^{ \cV_\L^{(1)}(\xi,\phi)},   
	\label{eq:Vcyl_N}
\end{equation}
where $\cV_\L^{(0)}(\phi)$ is the \textit{effective interaction on scale $0$} and $\propto$ means `up to a multiplicative constant independent of $\bs{\varphi}$'\footnote{Note that $\bs{\varphi}$ does not appear in \eqref{eq:Vcyl_N}: except for a slight difference in notation, it is the same expression in \cite[Equation~(3.1)]{AGG_CMP} with the auxiliary energy sources set to zero.};
then, we can write Eq.~\eqref{eq:startfrom} as
\begin{equation}
	\Xi_{\l,t;\L}( \bs{\varphi})\propto	
	\int  P_c^* (\cD\phi) e^{\cV_\L^{(0)}(\phi) + \cB_\L(\phi,\bs{\varphi}) }\,.
	\label{eq:after_step_1}
\end{equation}
Corresponding to the multiscale decomposition of critical propagator in Eqs.~\eqref{gch*h}-\eqref{g0h}, we introduce $P^{(\le h)}_\L(\cD\phi)$ and $P^{(h)}_\L(\cD\phi)$ which are the Gaussian Grassmann measures with covariance $\fg_\L^{(\le h)}$ and $\fg_\L^{(h)}$ respectively; we also introduce the massless field multiscale decomposition as $\phi = \phi^{(\le h)} + \sum_{j=h+1}^0 \phi^{(j)}$, $h^*+1 \le h < 0$. Then, in light of the addition formula for Grassmann integrals (see e.g.~\cite[Proposition 1]{GMT17}), we can rewrite the r.h.s.\ of Eq.~\eqref{eq:after_step_1} as
\begin{equation}\label{intms}\begin{aligned}
 \int  P_\L^{(\le -1)} (\cD\phi^{(\le -1)} )   \int  P_\L^{(0)} (\cD\phi^{(0)} ) e^{\cV_\L^{(0)}(  \phi^{(\le -1)}+\phi^{(0)}) + \cB_\L( \phi^{(\le -1)}+\phi^{(0)},\bs{\varphi} )}\,,
 \end{aligned}
\end{equation} 
integrate out  $\phi^{(0)}$ and iteratively use the same procedure to perform a step-by-step integration; at each step we rearrange the result of the integration so as to iteratively define sequences of functions $\cW_\L^{(h-1)}$, $\cV_\L^{(h-1)}$ and $\cB_\L^{(h-1)}$, for all $h \le 0$, via
\begin{equation}\begin{aligned}
	&e^{
		\cW_\L^{(h-1)}( \bs{\varphi})
		+ 
		\cV_\L^{(h-1)}(\phi^{(\le h-1)})
		+
		\cB_\L^{(h-1)}(\phi^{(\le h-1)},\bs{\varphi})
	}\\
	&\qquad\qquad\qquad
	\propto
	\int P_\L^{(h)}(\cD {\phi^{(h)}})
	e^{
		\cV_\L^{(h)}(\phi^{(\le h-1)}+{\phi^{(h)}})
		+
		\cB_\L^{(h)}(\phi^{(\le h-1)}+{\phi^{(h)}} ,\bs{\varphi})}\,;
	\label{eq:eff_inter_iter2} \end{aligned}
\end{equation}
that is, after integrating out $\phi^{(h)}$, the result is rearranged so that the \textit{single-scale contribution} to the generating function $\cW_\L^{(h-1)}( \bs{\varphi})$ depends only on $\bs{\varphi}$, the \textit{effective interaction} function  $\cV_\L^{(h-1)}(\phi^{(\le h-1)})$ depends only on ${\phi}^{(\le h-1)}$ and the \textit{effective  boundary spin source} function $\cB_\L^{(h-1)}(\phi^{(\le h-1)},\bs{\varphi})$ depends on both ${\phi}^{(\le h-1)}$ and $\bs{\varphi}$. At each step, $\cW_\L^{(h-1)}$,  $\cV_\L^{(h-1)}$ and $\cB_\L^{(h-1)}$ are fixed in such way that $\cW_\L^{(h)}(\bs 0) = \cV_\L^{(h)}(0) = \cB_\L^{(h)}(0,\0) =0$.

Note that $h =0$ is the first scale on which the $\bs{\varphi}$-sources appear: therefore if $h =0$, in the r.h.s.\ of \eqref{eq:eff_inter_iter2} we let  $\cB_\L^{(0)} := \cB_\L$ as in \eqref{BL} and $\cW_\L^{(0)}:=0$ (the $\cW_\L^{(h)}$ functions exist starting from the scale $h =- 1$).
The iteration continues until the scale $h^* = -\lfloor \log_2(\min\set{L,M})\rfloor$ is reached,
at which point we let 
\begin{equation}\begin{aligned}
	e^{
		\cW_\L^{(h^*-1)}( \bs{\varphi})}
	\propto
	\int P_\L^{(\le h^*)}(\cD {\phi^{(\le h^*)}})
	e^{
		\cV_\L^{(h^*)}(\phi^{(\le h^*)})
		+
		\cB_\L^{(h^*)}(\phi^{(\le h^*)})}\,,
	\label{hstar} \end{aligned}
\end{equation}
so that, eventually,
\begin{equation}
	\Xi_{\l,t;\L} (\bs{\varphi}) \propto \exp\left(\sum_{h=h^*-1}^{-1}\cW_\L^{(h)}(\bs{\varphi})\right)\,.\label{eventually}
\end{equation}
The basic tool for the evaluation of Eqs.~\eqref{eq:Vcyl_N}-\eqref{hstar} is the following formula, which we spell out in detail only for the effective boundary spin source term and for \eqref{eq:eff_inter_iter2}, i.e.\ the iteration step in which we integrate out the $h$-scale fields. Suppose that, for all $h^*<h \le -1$, $\cB^{(h)}$ can be written in a way analogous to the one on scale $h=0$, i.e.\ as in Eq.~\eqref{BL}:
\begin{equation} \label{expgen}
\cB^{(h)}_\L(\phi,\bs{\varphi}) = \sum_{\substack{\bs{\Psi} \in  \mathcal M_\L\\  \bs{y}\in \mathcal{Q}(\partial^l \L)}} B^{(h)}_\L (\bs{\Psi}, \bs{y}) \phi (\bs{\Psi}) \varphi (\bs{y}) 
\end{equation}
for a suitable real \textit{h-scale effective} kernel function $B^{(h)}_\L (\bs{\Psi}, \bs{y}): \mathcal M_\L\times \mathcal{Q}(\partial^l \L) \to \mathbb{R}$ where $\mathcal M_\L$ denotes the set of $\bs{\Psi}= (\bs{\omega},\bs{z}) \in (\set{\pm} \times \L)^{n}$,  $\mathcal{Q}(\partial^l \L)$ denotes the set of $\bs{y} \in (\partial^l \L)^{m}$, $n+m \in 2 \mathbb{N}$, $\phi(\bs{\Psi})= \phi_{\omega_1,z_1} \cdots \phi_{\omega_{n},z_{n}}$ and  $\varphi(\bs{y})=\varphi_{y_1}\cdots \varphi_{y_m}$.  Then  $\cB^{(h-1)}_\L$ , as computed from Eq.~\eqref{eq:eff_inter_iter2}, admits an expansion analogous to \eqref{expgen},
with $B_\L^{(h)}(\bs{\Psi},\bs y)$ replaced by
\begin{equation}\label{Bh-1L}\begin{aligned}
&B_\L^{(h-1)}( \bs{\Psi},\bs{y})  =  \sum_{s=1}^\infty \frac{1}{s!} \sum_{\substack{\bs{\Psi}_1,\ldots,\bs{\Psi}_s \in \mathcal{M}_{\L}\\\bs{y}_1,\ldots,\bs{y}_s \in \mathcal{Q}(\partial^l \L)}}^{(\bs{\Psi},\bs{y})}  \alpha_s(\bs{\Psi}, \bs{y})\,  \mathbb{E}_\L^{(h)}(\phi(\bar{\bs{\Psi}}_1); \ldots;\phi(\bar{\bs{\Psi}}_s)) \prod_{j=1}^s  B_\L^{(h)}(\bs{\Psi}_j,\bs{y}_j) \,,\end{aligned}
\end{equation}
where: 
\begin{itemize}
\item the superscript $(\bs{\Psi},\bs{y})$ on the second sum indicates that the sum runs over all ways of representing $\bs{\Psi}$ as an ordered sum of $s$ (possibly empty) tuples, $\bs{\Psi}_1'\oplus\cdots\oplus\bs{\Psi}'_s = \bs{\Psi}$, over all tuples $\mathcal{M}_{\L}\ni \bs{\Psi}_j \supseteq \bs{\Psi}'_j$ and over the (possibly empty) tuples $\bs{y}_1,\ldots,\bs{y}_s$ such that $\bs{y}_1\oplus \cdots \oplus \bs{y}_s = \bs{y}$;
\item we let $\bar{\bs{\Psi}}_j:= \bs{\Psi}_j\setminus \bs{\Psi}'_j$ be the multilabels of the contracted fields and $\alpha_s(\bs{\Psi},\bs{y})$ be the sign of the permutation from $\bs{\Psi}_1 \oplus \bs y_1 \oplus \cdots \oplus \bs{\Psi}_s \oplus \bs y_s \to \bs{\Psi}\oplus \bs y \oplus \bar{\bs{\Psi}}_1 \oplus \cdots \oplus \bar{\bs{\Psi}}_s$;
\item  $\mathbb{E}_\L^{(h)}(\phi(\bar{\bs{\Psi}}_1); \cdots;\phi(\bar{\bs{\Psi}}_s))$ is the truncated expectation of the monomials $\phi(\bar{\bs{\Psi}}_1),\ldots,\phi(\bar{\bs{\Psi}}_s)$ with respect to the Grassmann Gaussian integration with propagator $\mathfrak{g}_\L^{(h)}$ of \eqref{g0h}; the truncated expectation vanishes if one of the sets $\bar{\bs{\Psi}}_i$ is empty, 
unless $s=1$, in which case $\mathbb{E}_\L^{(h)}(\emptyset)=1$.
\end{itemize}
The explicit expression for $\mathbb{E}_\L^{(h)}(\phi(\bar{\bs{\Psi}}_1); \cdots;\phi(\bar{\bs{\Psi}}_s))$ will be introduced in Remark~\ref{BFF} below (directly in the half-plane limit), for now it can be thought of as a polynomial in the $\mathfrak{g}_\L^{(h)}$ propagators. 
The single-scale contribution to the generating function $\cW_\L^{(h)}$ and the effective interaction $\cV_\L^{(h)}$ admit expressions similar to Eqs.~\eqref{expgen}-\eqref{Bh-1L}: those for $\cW_\L^{(h)}$ (resp.\ for  $\cV_\L^{(h)}$) on scale $h^*\le h \le -1$ (resp.\ $h^*<h \le 1$) are obtained by replacing $\bs{\Psi}$ (resp.\ $\bs{y}$) with the empty set.

As a result, we note that Eqs.~\eqref{expgen}-\eqref{Bh-1L} and their analogous expressions allow us to represent the $(h-1)$-scale effective kernels, i.e.\ the coefficients of $\cB^{(h-1)}_\L, \cW^{(h-1)}_\L$ and $\cV^{(h-1)}_\L$, purely in terms of $h$-scale propagators and $h$-scale effective kernels. This observation allows us to iteratively obtain for each $h$ the half-plane limit of the $h$-scale effective kernels using the half-plane limits of $h$-scale propagators (see \eqref{gH1} and \eqref{gceta} combined with \eqref{g0h}) and initial scale kernels (see \eqref{W1HH} and \eqref{BH0}). In terms of these half-plane kernels we will be able to define the half-plane limits $\cB^{(h-1)}_\bH, \cW^{(h-1)}_\bH$ and $\cV^{(h-1)}_\bH$ (with expressions similar to the one in \eqref{expgen} but with positions on the half-plane) and therefore of the half-plane limit of \eqref{eventually} as $ \Xi_{\l,t;\bH} (\bs{\varphi}) \propto \exp\left(\sum_{h\le-1}\cW_\bH^{(h)}(\bs{\varphi})\right)$. In the following we shall discuss the solution to the recursive equations for the kernels in the half-plane limit. Given this solution, convergence of the finite volume kernels to the solution of the recursive equations for the kernels in the half-plane limit, with explicit bounds on the norm of the finite size corrections, can be proved as in \cite[Section~3]{AGG_CMP}, and the interested reader is referred there for this technical aspect of the construction.

The half-plane analogues of Eqs.~\eqref{expgen}-\eqref{Bh-1L}, by systematically using the definition of the truncated expectations with respect to the Grassmann Gaussian integration with propagator $\mathfrak{g}^{(h)}_\bH$, in combination with the decay bounds on $\mathfrak{g}^{(h)}_\bH$ in \eqref{gdecay}, allow one to get bounds on the kernels of $\cB^{(h)}_\bH, \cW^{(h)}_\bH$ and $\cV^{(h)}_\bH$ and, consequently, on those of $\log \Xi_{\l,t^*;\bH} (\bs{\varphi})$. In order to obtain bounds on the half-plane kernels that are uniform in the scale label $h$, so that the resulting expansions for the correlations are convergent, at each step it is necessary to isolate from $\cV_\bH^{(h)}$ and $\cB_\bH^{(h)}$ the contributions that tend to ‘expand’ (in an appropriate norm) under iterations: these, in the RG terminology, are associated with the so-called  `relevant' and `marginal' contributions. We introduce a \textit{localization procedure} of the kernels to obtain, at each step of the iterative procedure, decompositions as $\cV_\bH^{(h)}(\phi) = \cL\cV_\bH^{(h)}(\phi)+ \cR \cV_\bH^{(h)}(\phi)$ and $\cB_\bH^{(h)} (\phi,\bs{\varphi}) = \cL  \cB_\bH^{(h)} (\phi,\bs{\varphi}) + \cR  \cB_\bH^{(h)} (\phi,\bs{\varphi})$: $\cL\cV_\bH^{(h)}$ and $\cL  \cB_\bH^{(h)} $ are \textit{local parts}, collecting the potentially divergent contributions, parametrized at any given scale, by a finite number of $h$-dependent `running coupling constants'; $\cR \cV_\bH^{(h)}$ and $\cR \cB_\bH^{(h)}$ are \textit{irrelevant parts} or \textit{remainders}, containing all the other contributions, which are not the source of any divergence. To show that the irrelevant contributions satisfy `improved dimensional bounds', it is necessary to rewrite their kernels  in an appropriate, interpolated, form, involving the action of discrete derivatives on the Grassmann fields.
In \cite[Section~3.1]{AGG_CMP} the authors studied in detail the effective interaction $\cV_\L^{(h)}$ and its local-irrelevant decomposition: as already mentioned, here we are considering the same effective interaction (in the half-plane limit and without energy sources) therefore we do not repeat the technical discussion but directly illustrate how to derive the local-irrelevant decomposition of $\cB_\bH^{(h)}$. An important observation (see also Remark~\ref{Wgen} below): for ease of writing we use the symbols $\cL$ and $\cR$ in both the decomposition of $\cV_\bH^{(h)}$ and $\cB_\bH^{(h)}$ while they denote different operators depending on whether they act on $\cV_\bH^{(h)}$ or $\cB_\bH^{(h)}$.

The plan of the incoming subsections is the following: first, in Section~\ref{kernrep} we introduce the representation of the effective functions in the half-plane limit. Then, in Section~\ref{LeR} we define the action of the operators $\cL$ and $\cR$ on the effective kernels of $\cB_\bH^{(h)}$ and derive dimensional bounds on the remainders. Finally, in Section~\ref{sec:trees}, we derive the solution to the recursive equations for the effective kernels in terms of a tree expansion.

\subsection{Effective kernel functions.}\label{kernrep}

In this section we proceed as in section \cite[Section~4.1]{AGG_AHP}. We represent each half-plane effective function with an expression similar for example to that of Eq.~\eqref{BL}, that is, as a formal Grassmann polynomial in which the half-plane kernel plays the role of coefficient function: this points of view avoids defining an infinite-dimensional Grassmann algebra. As mentioned, we want this representation to be such that we can keep track of when and how some of the Grassmann fields are collected into discrete derivatives (corresponding to the local-remainder decomposition which will be introduced in the next section) therefore, exploiting the equivalence relations defined in \cite[Section~4.1]{AGG_AHP} (which also apply to half-plane kernels), we introduce the field multilabels that index the half-plane kernels as follows.

Let $\bar{\bH}= \bH \cup \set{z \in \mathbb{Z}^2 : (z)_2=0}$, where the additional row at vertical height $0$ is the one on which we interpret $\phi_{+,\bar{z}}$ as a vanishing field (see below Eq.~\eqref{gc0}). Let ${{\cM}}_{\bar{\bH}}$ be the set of \textit{field multilabels}, i.e.\ the set of tuples of the form $$\bs{\Psi} =(\bs{\omega},\bs{D},\bs{z})=((\omega_1,D_1, z_1), \ldots, (\omega_{n }, D_{n }, z_{n})) \in (\set{\pm} \times \set{0,1,2}^2\times \bar{\bH})^{n},$$ with $||D_i||_1 \le 2$, 
for some $n\ge 1$; let ${{\cM}}_{\bH}$ be the restriction of set ${{\cM}}_{\bar{\bH}}$ to the case of $z_i$ 
in $\bH$ rather than in $\bar{\bH}$\footnote{We are using a slightly different notation than \cite{AGG_AHP,AGG_CMP}, where the sets are denoted as $\cM$ and $\cM^\circ$ respectively and  there is a constraint that $n$ is even.}; any  $\bs{\Psi} \in \cM_{\bar{\bH}}$ indexes a formal Grassmann monomial $\phi(\bs{\Psi}) := \partial^{D_1} \phi_{\omega_1,z_1} \cdots  \partial^{D_{n }}\phi_{\omega_{n },z_{n }}$, with $\partial^{D_j} =\partial^{(D_j)_1}_1\partial^{(D_j)_2}_2 $, $(D_j)_1,(D_j)_2 \in \set{0,1,2}$, $j = 1,\ldots,n$.\footnote{We let $\partial_j \phi_{\omega, z} := \phi_{\omega, z+\hat{e}_j}-\phi_{\omega, z}$ be the right discrete derivative in direction $j=1,2$.} Let $\mathcal{Q}(\partial \bH)$ be the set of boundary spin source positions, i.e.\ the set of tuples $\bs{y} = (y_1,\ldots, y_{m}) \in (\partial \bH)^{m}$, $m \in 2\mathbb{N}$; any $\bs{y} \in \mathcal{Q}(\partial \bH)$ indexes $\varphi(\bs{y})=\varphi_{y_1}\cdots \varphi_{y_m}$.

In terms of these definitions we can rewrite the initial boundary spin source $\cB_\bH^{(0)}(\phi,\bs{\varphi})$ (cf.\ \eqref{BL}, see below \eqref{eq:eff_inter_iter2}) in the half-plane limit as
\begin{equation}\label{Bh0}
\cB_\bH^{(0)}(\phi,\bs{\varphi}) = \sum_{\substack{\bs{\Psi} \in \cM_{\bar{\bH}}:\\ n=1, \bs{D} = \0}}  \sum_{\substack{\bs{y} \in \mathcal{Q}(\partial \bH):\\ m=1}}  B_\bH^{(0)}(\bs{\Psi},\bs{y})  \phi(\bs{\Psi})\varphi(\bs{y}) \,,
\end{equation}
where $B_\bH^{(0)}(\bs{\Psi},\bs{y})$ is exactly the same kernel introduced in Eq.~\eqref{BH0}. 
We will inductively prove that the effective boundary spin source $\cB_\bH^{(h)}(\phi,\bs{\varphi})$ for any $h \le -1$ can be written with a similar expression, namely 
\begin{equation}\label{Bkernel}
\cB_\bH^{(h)}(\phi,\bs{\varphi})  =\sum_{\substack{\bs{\Psi} \in \cM_{\bar{\bH}}}} \sum_{\substack{\bs{y} \in \mathcal{Q}(\partial \bH)}}  B_\bH^{(h)}( \bs{\Psi},\bs{y}) \phi(\bs{\Psi}) \varphi(\bs{y})\,,
\end{equation}
for a suitable $B_\bH^{(h)} (\bs{\Psi}, \bs{y}): \mathcal{M}_{\bar{\bH}}  \times  \mathcal{Q}(\partial \bH) \to \mathbb{R}$, which is the $h$-scale effective kernel function of the half-plane boundary spin source. $B_\bH^{(h)} (\bs{\Psi}, \bs{y})$ is antisymmetric under permutations of the elements of $\bs{\Psi}$ and/or of
$\bs{y}$ and can be chosen in such a way as to respect the symmetry properties mentioned in Remark~ \ref{item:symm}.  Similarly, we can represent the half-plane limit of the single-scale contribution to the generating function, for any $h \le -1$, as 
\begin{equation}\label{wh-1} \mathcal{W}_\bH^{(h)}(\bs{\varphi}) = \sum_{\substack{\bs{y} \in \mathcal{Q}(\partial \bH)}} w_\bH^{(h)}(\bs{y}) {\varphi}(\bs{y})\,,\end{equation} 
for a suitable  $w_\bH^{(h)} (\bs{y}):  \mathcal{Q}(\partial \bH) \to \mathbb{R}$ which is the $h$-scale effective kernel function.

\begin{remark}[Equivalent kernels]\label{equivK} We say that two kernels on $\cM_{\bar{\bH}} \times \mathcal{Q}(\partial \bH)$, such as the kernel $B_\bH^{(h)}$ from Eq.~\eqref{Bkernel} and the kernel $\tilde{B}_\bH^{(h)}$ from  $\mathcal{\tilde{B}}_\bH^{(h)}(\phi,\bs{\varphi})  = \sum_{\bs{\Psi} \in \cM_{\bar{\bH}}}\sum_{\bs{y}\in \mathcal{Q}(\partial \bH)} \tilde{B}_\bH^{(h)}(\bs{\Psi},\bs{y}) \phi(\bs{\Psi}) \varphi(\bs{y})$, are `equivalent kernels', or we write $B_\bH^{(h)} \sim \tilde{B}_\bH^{(h)}$, if the corresponding boundary spin sources $\mathcal{B}_\bH^{(h)}(\phi,\bs{\varphi})$ and $\mathcal{\tilde{B}}_\bH^{(h)}(\phi,\bs{\varphi})$ are equal: the definition of equivalence relation is given below \cite[Eq.~(4.1.2)]{AGG_AHP}. In simple words, it  generalizes the equivalence between different ways of writing the coefficients of a given polynomial and corresponds to manipulations allowed by the anti-commutativity of the Grassmann variables and by the definition of discrete derivative.  \end{remark}
  
\begin{remark}[Interaction kernels in the half-plane limit] \label{Wint}
As already mentioned, $\cV_\bH^{(h)}(\phi)$ is the interaction potential studied in \cite{AGG_AHP, AGG_CMP} adapted to the half-plane limit. By using Eq.~\eqref{W1HH}, we can express the half-plane limit of Eq.~\eqref{inte}, as  
\begin{equation}\label{V1H} 
\cV_\bH^{(1)}(\phi) = \sum_{\bs{\Psi} \in \cM_{1,{\bH}}} W_\bH^{(1)}(\bs{\Psi})\phi(\bs{\Psi})\,, 
\end{equation}
where $\cM_{1,\bH}$ denotes the set of the tuples $\bs{\Psi} = ( \bs{\omega},\bs{z}) \in (\pm, \pm i)^n \times \bH^n$, $W_\bH^{(1)}(\bs{\Psi})$ denotes $W_\bH^{(1)}(\bs{\omega},\bs{z})$ of Eq.~\eqref{W1HH} and $\phi_{\pm i,z}$ denotes $\xi_{\pm, z}$. Note that, as a result of Eq.~\eqref{eq:B_base_decay}, for any $n \in \mathbb{N}$ and suitable positive constants $C, c_0, \kappa$, the half-plane initial scale kernel $W_\bH^{(1)}(\bs{\omega},\bs{z})$ of Eq.~\eqref{W1HH} satisfies the following bound: 
\begin{equation}\label{WintL}
\sup_{\bs{\omega} \in \set{\pm,\pm i}^n} \sup_{z_1 \in \bH} \sum_{z_2,\ldots,z_n \in \bH} e^{2c_0\delta(\bs{z})} \left| {W_\bH^{(1)} (\bs{\omega},\bs{z})} \right| \le C^n |\lambda|^{\max(1,\kappa n)}\,
\end{equation}
where $\delta(\bs{z})$ denotes the tree distance of $\bs{z}$, i.e.\ the cardinality of the smallest connected set of edges that have endpoints in $\bs{z}$; note that the analyticity property mentioned after \eqref{eq:B_base_decay} still holds. Moreover, for any fixed $n$-tuple $\bs{z} \in (\mathbb{Z}^2)^n$, $n \in 2 \mathbb{N}$,
we let \begin{equation} \label{Winfty}W^{(1)}_\infty (\bs{\omega},\bs{z}) := \lim_{M \to \infty} W^{(1)}_\bH(\bs{\omega}, \bs{z}+(0, \lfloor M\rfloor))\end{equation}
be the infinite plane initial scale kernel of the interaction (note that \eqref{Winfty} coincides with \cite[Eq.~(3.19)]{AGG_AHP}  or \cite[Eq.~(2.2.11)]{AGG_CMP} with $\bs{x} =\emptyset$, i.e.\ in the absence of energy sources). The kernel $W^{(1)}_\infty(\bs{\omega},\bs{z})$, besides being antisymmetric under simultaneous permutations of $\bs{\omega}$ and $\bs{z}$, is invariant under translations of $\bs{z}$ in both directions and invariant under the infinite plane reflection symmetries induced by $\phi_{\omega,z} \to  i \omega \phi_{\omega,(-(z)_1,(z)_2)}$, $\xi_{\pm,z} \to i \xi_{\mp,(-(z)_1,(z)_2)}$ (horizontal reflections) and by $\phi_{\pm,z} \to i \phi_{\mp, ((z)_1,-(z)_2)}$, $\xi_{\pm,z} \to \mp i \xi_{\pm, ((z)_1,-(z)_2)}$ (vertical reflections). Taking advantage of the fact that the infinite volume limit of the kernels is reached exponentially fast, we perform the bulk-edge decomposition\footnote{Note that, as in the previous section, we use the terminology `bulk-edge' in analogy with the references, even if the bulk part of the half-plane kernel coincides with the infinite plane kernel (factors that take into account the finiteness of $L$ in first line of \cite[Eq.~(3.17)]{AGG_AHP} in the half-plane limit are equal to $1$). } (cf.\ \cite[Eq.~(3.17)]{AGG_AHP} or \cite[Eq.(2.2.12)]{AGG_CMP}) as: 
\begin{equation}
\label{WE}
\begin{aligned} 
W^{(1)}_\bH( \bs{\omega},\bs{z}) = W^{(1)}_\infty ( \bs{\omega},\bs{z})+W^{(1)}_{\bH,E}( \bs{\omega},\bs{z})\,,  \text{ with }\,\,
W^{(1)}_{\bH,E}( \bs{\omega},\bs{z})&:= W_\bH^{(1)}( \bs{\omega},\bs{z})-W^{(1)}_\infty( \bs{\omega},\bs{z})\,.\end{aligned}
\end{equation}
where: $W_{\bH,E}^{(1)}$ respects the same properties mentioned in Remark~\ref{item:symm} for $W_\bH^{(1)}$; $W_\infty^{(1)}$ satisfies the same bound as $W_\bH^{(1)}$ in Eq.~\eqref{WintL}; and $W^{(1)}_{\bH,E}$, for any $n \in \mathbb{N}$, satisfies the following improved bound:
\begin{equation}\label{WEbound}
\sup_{\bs{\omega}\in \set{\pm,\pm i}^n}\sum_{\bs{z} \in \bH^n} e^{2c_0\delta_E(\bs{z})} \left| W_{E,\bH}^{(1)} (\bs{\omega},\bs{z}) \right| \le C^n |\l|^{\max(1,\kappa n)}\,, 
\end{equation} 
where $C,c_0,\kappa$ are as in \eqref{WintL} and $\delta_E(\bs{z})$ denotes the edge tree distance of $\bs{z}$ in the half-plane, i.e.\ the cardinality of the smallest connected set of edges that have endpoints in $\bs{z}$ and in $\partial \bH$. To prove these estimates it is sufficient to retrace the proof of \cite[Lemma~3.2]{AGG_AHP}.

As already mentioned, the half-plane limit of the recursive definitions in Eqs.~\eqref{expgen}-\eqref{Bh-1L} allows us to derive the half-plane limit as well as the bulk-edge decomposition  of each $h$-scale effective kernel using the half-plane  limit and the bulk-edge decomposition of the propagator (\eqref{gceta} and below \eqref{ginfty}) and of the initial kernel (\eqref{W1HH} and \eqref{WE}) and then to represent the half-plane limit of the effective interaction, for any $h \le 0$, as
\begin{equation}\label{Vheff}
\cV_\bH^{(h)}(\phi)= \sum_{\bs{\Psi} \in \mathcal{M}_{{\bar{\bH}}}} W_\bH^{(h)}(\bs{\Psi})\phi(\bs{\Psi}) = \sum_{\bs{\Psi} \in \mathcal{M}_{{\bar{\bH}}}} (W_\infty^{(h)}(\bs{\Psi})+ W_{\bH,E}^{(h)}(\bs{\Psi}))\phi(\bs{\Psi})
\,.\end{equation} 
Eq.~\eqref{Vheff} is the half-plane analogue of \cite[Equations (3.1.2)-(3.1.4)]{AGG_CMP}. 
\end{remark}
\begin{remark}[Scaling dimension with boundary spin sources] \label{sd}
	We let $(B_\bH^{(h)})_{n,p,m}$ denote the restriction of $B_\bH^{(h)}$ to field multilabels $\bs{\Psi}$ of length $n$, i.e.\ of the form 
$\bs{\Psi}=((\omega_1,D_1, z_1), \ldots, (\omega_{n }, D_{n }, z_{n})$, with $||\bs D ||_1 = \sum_{j=1}^{n }||D_j||_1=p$ and to tuples $\bs{y}$ of length $m$. Note that  $(B_\bH^{(h)})_{n,p,m}$ is different from zero only if $m \ge 1$ and $n+m \in 2 \mathbb{N}$. 

	Although there are no divergences associated with observables in a lattice theory, classifying them by scaling dimension like the terms in the effective interaction (cf.\ \cite[Remark 3.5]{AGG_CMP}) gives a way of separating terms in the correlation functions according to their long-distance asymptotic. Following the same logic as \cite[Section 2.2]{GMRS24}, 
we assign $(B_\bH^{(h)})_{n,p,m}$ the scaling dimension $1-\frac{n}{2} -p-\frac{m}{2}$ for $m \ge 1$, where $1$ corresponds to the spatial dimension of the support
of the kernel (that is, the boundary of the half-plane), $n/2$ to $n$ times the scaling dimension of the critical field $\phi$, and $m/2$ to $m$ times the scaling dimension 
of the source field $\varphi$ (fixed as in \cite[Section 2.2]{GMRS24}, in a setting, like ours, where there is no anomalous dimension).
Note in particular that, if $m\ge 1$ the combination $1-\frac{n}{2} -p-\frac{m}{2}$ will always be negative unless $m= n= 1$ and $p=0$: only the kernels $(B_\bH^{(h)})_{1,0,1}$ will be associated with the so-called local part, while all the other kernels will be associated with the remainders.
	\end{remark}

\subsection{Local-irrelevant decomposition.}
\label{LeR}

In this section we define the action the operators $\cL$ and $\cR$\footnote{{$\cL$ and  $\cR$ are linear operators, which, if applied to kernels that are invariant under the symmetries described in Remark~\ref{item:symm}, act as projection operators, orthogonal to each other; their action on the interaction kernels is defined in \cite[Section~3.1.2]{AGG_CMP}.}} to the kernel functions $B_\bH^{(h)} : \cM_{ \bar{{\bH}}} \times \mathcal{Q}(\partial {\bH}) \to \mathbb{R}$, depending on the different values of $n, p$ and $m$ (see Remark~\ref{sd}): the action of these operators corresponds to a decomposition $\cB_\bH^{(h)}= \cL \cB_\bH^{(h)}+ \cR \cB_\bH^{(h)}$, separating the dominant long-distance contributions of the correlation functions from the remainder.
If $(n,p,m) = (1,0,1)$ we let
\begin{equation}\label{locHP}
	(\cL B_\bH^{(h)} )_{1,0,1} ( (\omega,0, z), y)
	:= \delta_{\omega,-}\delta_{z,y}
	  \sum_{z'\in {\bH}}
	  (B_\bH^{(h)} )_{1,0,1} ( (-,0,z'), y)\,, 
	\end{equation}
and if $(n,p,m) \ne (1,0,1)$  we let $(\cL B_\bH^{(h)} )_{n,p,m}:= 0$. Note that this definition maintains the property that $\cL(B_\bH^{(h)} )_{1,0,1} = (\cL B_\bH^{(h)} )_{1,0,1}$, i.e.\ that the action of $\cL$ commutes with the operation of restricting to a certain order.
By translation invariance, this has the form
\begin{equation}\label{ZbsL} (\cL B_\bH^{(h)})_{1,0,1}((\omega,0, z), y) \equiv Z_{\Bs,h}  \delta_{\omega,-}\delta_{z,y}\,,
\end{equation} 
which defines a real quantity $Z_{\Bs,h}$, which is called \textit{boundary spin running coupling constant} on scale $h$;
note that  $(B_\bH^{(0)})_{1,0,1}$ as expressed in \cref{Bh0} has the same form (so for $h=0$, $Z_{\Bs,0}= Z^{-\frac12}$).

Now we associate the sub-leading contributions with kernels $[(B_\bH^{(h)} )_{1,0,1} -(\cL B_\bH^{(h)} )_{1,0,1}]$, and show that even if these kernels are expressed with index $p=0$, they are equivalent to kernels with index $p=1$ (equivalent in the sense of \cref{equivK}).
First of all, due to the presence of $\delta_{\omega,-}$ in r.h.s.\ of Eq.~\eqref{locHP} we distinguish two cases:
\begin{enumerate}
\item  if $\omega = -$, we can consider the following expression
\begin{equation}
\begin{aligned}
\sum_{z \in \bar{\bH}} &[(B_\bH^{(h)} )_{1,0,1} -(\cL B_\bH^{(h)} )_{1,0,1} ] ( (-,0, z), y) \phi_{-,z} \varphi_y\\
&\qquad = \sum_{z \in \bar{\bH}} (B_\bH^{(h)} )_{1,0,1}( (-,0, z), y) \left[\phi_{-,z}-\phi_{-,y} \right] \varphi_y\,,
\label{int101oy}
\end{aligned}\end{equation}
\item  if $\omega = +$, $(\cL B_\bH^{(h)} )_{1,0,1}((+,0,z),y)= 0$, so that we can consider
\begin{equation}
\begin{aligned}
\sum_{z \in \bar{\bH}} &(B_\bH^{(h)} )_{1,0,1} ( (+,0, z), y) \phi_{+,z} \varphi_y\\
&\qquad = \sum_{z \in \bar{\bH}} (B_\bH^{(h)} )_{1,0,1}( (+,0, z), y) \left(\phi_{+,z}- \phi_{+,\bar{z}}\right) \varphi_y\,,
\label{int101-oy}
\end{aligned}\end{equation}
where $\phi_{+,\bar{z}}$ is the vanishing field defined below Eq.~\eqref{gc0}.
\end{enumerate}
We can rewrite, in the last line of both \eqref{int101oy} and \eqref{int101-oy}, the difference of fields as a suitable interpolation of the differences of adjacent fields, obtaining
\begin{equation}
\begin{aligned}
&\sum_{z \in \bar{\bH}}  (B_\bH^{(h)} )_{1,0,1}( (\omega,0, z), y) \sum_{\substack{D\in\{0,1\}^2:\\||D||_1=1}} \
\sum^*_{\substack{z' \in \gamma_{\omega}(z)}} \sigma \partial^D \phi_{\omega,z'} \varphi_y
\label{intgamma}
\end{aligned}\end{equation}
where, for any $z \in \bar{\bH}$ and $y \in \partial {\bH}$, \begin{enumerate}
\item if $\omega= -$, $\gamma_{-}(z)$ is the shortest path obtained by going from $z$ to $y$ first vertically and then horizontally; 
\item  if $\omega= +$, $\gamma_{+}(z)$ is the path obtained by going vertically from $z$ to $\bar{z}= ((z)_1, 0)$;
\end{enumerate}
 $*$ denotes the constraint that the sum is over $z' \in \bar{\bH}$ such that $z', z'+ D \in \gamma_{\omega}(z)$ with $D\in\{(1,0),(0,1)\}$, and $\sigma = +/-$ if $z'$ follows/precedes $z'+ D$ in the sequence defining $\gamma_{\omega}(z)$. Finally, by exchanging the order in which we sum the coordinates, we can rewrite Eq.~\eqref{intgamma} as
\begin{equation}
\begin{aligned}
& \sum_{z' \in \bar{\bH}} \sum_{\substack{D\in\{0,1\}^2:\\ ||D||_1=1}}\  \sum^*_{\substack{z :\\z'\in \gamma_\omega(z)}}\sigma (B_\bH^{(h)} )_{1,0,1}( (\omega,0, z), y) \partial^D \phi_{\omega,z'} \varphi_y\\ 
&\qquad =:\sum_{z' \in \bar{\bH}}   \sum_{\substack{D\in\{0,1\}^2:\\ ||D||_1=1}}  (\tilde{\cR} B_\bH^{(h)} )_{1,1,1} ((\omega, D,z'),y) \partial^D \phi_{\omega,z'} \varphi_y\,,\label{tcRL}
\end{aligned}\end{equation}
where in the last line we are defining the action of $\tcR $ on $(B_\bH^{(h)} )_{1,0,1}$ as $\tcR(B_\bH^{(h)} )_{1,0,1} = (\tcR B_\bH^{(h)} )_{1,1,1}$, i.e.\  the action of $\tcR$  “adds a derivative”, taking a kernel supported on multilabels without derivatives ($p=0$) to one which is supported on multilabels containing a single derivative ($p=1$); note that ${\tcR}$, unlike ${\cL}$, does not commute with the operation of restricting to a certain order. 
Eq.~\eqref{tcRL} is similar to the definition given, for example, in  \cite[Eq.(3.1.13)]{AGG_CMP} (here adapted to the half-plane and with a slightly different and simpler notation, since we can localize at the positions of the boundary spin sources). %

Using this construction, if $(n,p,m)=(1,1,1)$ we let 
\begin{equation}\label{cRHH}\begin{aligned}
&(\cR B_\bH^{(h)})_{1,1,1} :=  (\tilde{\cR}   B_\bH^{(h)})_{1,1,1} +  (B_\bH^{(h)})_{1 ,1,1}\,,\end{aligned}
\end{equation}
if $(n, p, m) = (1,0,1)$, we let $(\cR   B_\bH^{(h)})_{1,0,1} := 0$ and, for all other values of $(n, p, m)$, we let $(\cR   B_\bH^{(h)})_{n, p, m}:=(B_\bH^{(h)})_{n, p, m}$. 
\begin{remark}\label{RL=0}
From the definitions in Eq.~\eqref{locHP} and in  Eq.~\eqref{tcRL} it follows that $\cL(\cL B_\bH^{(h)} )_{1,0,1}= (\cL B_\bH^{(h)} )_{1,0,1}$ and, since $(\cL B_\bH^{(h)} )_{1,0,1}$ is supported on arguments with $\omega = -$ and $z=y$ in which case $\gamma_{-}(y)= \emptyset$, that $\tcR (\cL B_\bH^{(h)} )_{1,0,1}=0$. Moreover, $\cL (\cR B_\bH^{(h)}) = 0$ and $\cR (\cR B_\bH^{(h)}) = (\cR B_\bH^{(h)})$ (see Eqs.~\eqref{locHP} and \eqref{cRHH} and subsequent lines).\end{remark}
\begin{remark}[Norm bounds on the remainders] \label{normcRHH} As in \cite[Equation~(3.1.40)]{AGG_CMP} \footnote{Since we consider functions of sources placed on the half-plane boundary, we do not introduce a bulk-edge decomposition of the kernel of the boundary spin sources, and therefore we do not introduce the edge norm.}, we let, using a slightly different notation, 
\begin{equation}\label{normHH}
||(B_\bH^{(h)})_{n,p,m}(\bs{y})||_{h} :=\sup_{\bs{\omega} \in \set{\pm}^n} \sum_{\bs{z} \in \bar{\bH}^n} e^{\frac{c_0}2 2^h \delta(\bs{z},\bs{y})} \sup_{\substack{{\bs{D}}: \\||\bs{D}||_1=p}} | (B_\bH^{(h)})_{n,p,m}((\bs{\omega},\bs{D},\bs{z}),\bs{y})|\,,
\end{equation}
where $c_0$ is the same constant as in \eqref{gdecay} and \eqref{WintL}, and the summand in r.h.s.\ is interpreted as zero if $((\bs{\omega},\bs{D},\bs{z}),\bs{y}) \notin \cM_{\bar{\bH}} \times \mathcal{Q}(\partial \bH)$. 
Note, as a consequence of the invariance under horizontal translations, that the norm in the l.h.s.\  of Eq.~\eqref{normHH} actually depends on the $m-1$ differences between the (horizontal) components of $\bs{y}$ (so that, if $m = 1$, it does not depend on $\bs y$).

We can then proceed as in in \cite[Lemma 4.3]{AGG_AHP} to bound the norm of $(\cR B_\bH^{(h)})_{1,1,1}$ defined in Eq.~\eqref{cRHH} as
\begin{equation}\label{boundRHH}
||(\cR   B_\bH^{(h)})_{1,1,1}||_{h}  \le C 2^{-h} ||(B_\bH^{(h)})_{1,0,1}||_{ h+1} + ||(B_\bH^{(h)})_{1,1,1}||_{h}\,,
\end{equation}
where the first term in the r.h.s.\ of Eq.~\eqref{boundRHH} is the norm bound for $(\tcR   B_\bH^{(h)})_{1,1,1}$ as defined in Eq.~\eqref{tcRL}.
\end{remark}
Finally, using the representation in  Eq.~\eqref{Bkernel}, note that the {\it kernel} decomposition $B_\bH^{(h)} \sim \cL B_\bH^{(h)} +  \cR B_\bH^{(h)}$ induces the decomposition  for the boundary spin {\it sources} $\cB_\bH^{(h)} = \cL \cB_\bH^{(h)} +  \cR \cB_\bH^{(h)}$(see Remark~\ref{equivK}).

\begin{remark}[Local-irrelevant decomposition of the effective interaction]\label{Wgen}
	As already mentioned, we refer to \cite{AGG_AHP} and \cite{AGG_CMP} for the study of the effective interaction $\cV_\L^{(h)}(\phi)$: we can introduce the local part and the irrelevant part of $W_\bH^{(h)}$ (see \eqref{Vheff}) as the half-plane limit of their cylindrical versions in \cite[Eqs.~(3.1.52)-(3.1.53)]{AGG_CMP} (without energy sources), which preserves the relation $W_\bH^{(h)} \sim \cL W_\bH^{(h)}+\cR W_\bH^{(h)}$. Once again, we use $\cL$ and $\cR$ to indicate different operators depending on whether they act on $B_\bH^{(h)}$ or $W_\bH^{(h)}$: in particular, given the bulk-edge decomposition of $W_\bH^{(h)}$ (last equality of \eqref{Vheff}), it is necessary to take into account the different definitions in \cite[Sections~(3.1.2),(3.1.4)]{AGG_CMP}.\end{remark}

\subsection{The renormalized expansion.}
\label{sec:trees}

Now that the action of the operators $\cL$ and $\cR$ on $B_\bH^{(h)}$ has been defined, we are ready to define the recursion formulas for the effective kernels as the half-plane analogues of Eqs.~\eqref{expgen}-\eqref{Bh-1L}. Note that \eqref{Bh-1L} was introduced to define the strategy to follow but we had not provided all the details: we will provide them now in the half-plane limit. In particular, for any $(\bs{\Psi},\bs y)\in \mathcal{M}_{\bar{\bH}}\times \mathcal{Q}(\partial{\bH})$, an expansion analogous to the one introduced in \eqref{Bkernel} for $\mathcal{B}_\bH^{(h)}(\phi,\bs{\varphi})$ will also be valid for $\mathcal{B}_\bH^{(h-1)}(\phi,\bs{\varphi})$ with $B_\bH^{(h)}(\bs{\Psi},\bs y)$ replaced by
\begin{equation}\label{Bh-1}\begin{aligned}
B_\bH^{(h-1)}( \bs{\Psi},\bs{y}) &= 
\sum_{s=1}^\infty \frac{1}{s!} \sum_{\substack{\bs{\Psi}_1,\ldots,\bs{\Psi}_s \in \mathcal{M}_{\bH}\\\bs{y}_1,\ldots,\bs{y}_s \in \mathcal{Q}(\partial \bH)}}^{(\bs{\Psi},\bs{y})}  \alpha_s (\bs{\Psi}, \bs{y})\,  \mathbb{E}_\bH^{(h)}(\phi(\bar{\bs{\Psi}}_1); \ldots;\phi(\bar{\bs{\Psi}}_s)) \\
& \times  \left( \prod_{j=1}^s \left( \cL B_\bH^{(h)}(\bs{\Psi}_j,\bs{y}_j) + \cR B_\bH^{(h)}(\bs{\Psi}_j,\bs{y}_j)+ \cL  W_\bH^{(h)}(\bs{\Psi}_j) + \cR W_\bH^{(h)}(\bs{\Psi}_j) \right)\right)\,,\end{aligned}
\end{equation}
where: 
\begin{itemize}
\item the superscript $(\bs{\Psi},\bs{y})$ on the second sum indicates that the sum runs over all ways of representing $\bs{\Psi}$ as an ordered sum of $s$ (possibly empty) tuples, $\bs{\Psi}_1'\oplus\cdots\oplus\bs{\Psi}'_s = \bs{\Psi}$, over all tuples $\mathcal{M}_{\bH}\ni \bs{\Psi}_j \supseteq \bs{\Psi}'_j$ and over the (possibly empty) tuples $\bs{y}_1,\ldots,\bs{y}_s$ such that $\bs{y}_1\oplus \cdots \oplus \bs{y}_s = \bs{y}$;
\item we let $\bar{\bs{\Psi}}_j:= \bs{\Psi}_j\setminus \bs{\Psi}'_j$ be the multilabels of the contracted fields and $\alpha_s(\bs{\Psi},\bs{y})$ be the sign of the permutation from $\bs{\Psi}_1 \oplus \bs y_1 \oplus \cdots \oplus \bs{\Psi}_s \oplus \bs y_s \to \bs{\Psi}\oplus \bs y \oplus \bar{\bs{\Psi}}_1 \oplus \cdots \oplus \bar{\bs{\Psi}}_s$;
\item  $\mathbb{E}_\bH^{(h)}(\phi(\bar{\bs{\Psi}}_1); \cdots;\phi(\bar{\bs{\Psi}}_s))$ is the truncated expectation of the monomials $\phi(\bar{\bs{\Psi}}_1),\ldots,\phi(\bar{\bs{\Psi}}_s)$ with respect to the Gaussian Grassmann measure with propagator $\mathfrak{g}_\bH^{(h)}$ (see Remark~\ref{BFF} below); the truncated expectation vanishes if one of $\bar{\bs{\Psi}}_i =\emptyset$, unless $s=1$, in which case $\mathbb{E}_\bH^{(h)}(\emptyset)=1$;
\item  in the last line we distinguish the terms that depend on boundary spin sources from those that do not depend on them: $\cL B_\bH^{(h)} + \cR B_\bH^{(h)}$ denotes the local-irrelevant decomposition of the kernel $B_\bH^{(h)}$ introduced in Section~\ref{LeR} and $\cL  W_\bH^{(h)}+ \cR  W_\bH^{(h)}$ denotes the local-irrelevant decomposition of the kernel $W_\bH^{(h)}$ (see Remark~\ref{Wgen}).
\end{itemize}
\begin{remark}\label{LB0} Note that at $h=0$ $$\cL B_\bH^{(0)} (\bs{\Psi}_j,\bs{y}_j) = B_\bH^{(0)}(\bs{\Psi}_j,\bs{y}_j) = Z^{-\frac12} \delta_{\bs{\Psi}_j,(-,0,y)}\delta_{\bs{y}_j,y}$$ and  $\cR B_\bH^{(0)}(\bs{\Psi}_j,\bs{y}_j) = 0$, so we can rewrite the right hand side of \cref{Bh-1} accordingly.\end{remark}
Similarly we define the recursion formulas for the kernels of the single-scale contributions to the generating function for any $h \le -1$: an expansion analogous to the one introduced in \eqref{wh-1}  for $ \mathcal{W}_\bH^{(h)}(\bs{\varphi})$ holds for $\mathcal{W}_\bH^{(h-1)}$ where $w_\bH^{(h-1)}(\bs{y})$ is defined by the counterpart of Eq.~\eqref{Bh-1} with $\bs{\Psi} = \emptyset$ (note that we have not included $\mathcal{W}_\L^{(h)}(\bs{\varphi})$ in r.h.s.\ of Eq.~\eqref{eq:eff_inter_iter2}, so there is no term with $s=1$ and $\bs{\Psi}_1=\emptyset$).

\begin{remark}\label{BFF}
In Eq.~\eqref{Bh-1}, the truncated expectation can be evaluated explicitly in terms of the following Pfaffian formula, originally due to Battle, Brydges and Federbush \cite{Bry86,BF78}, later improved and simplified \cite{AR98,BK87} and rederived in several review papers \cite{GM01,Giu10,GMR21}, see e.g.\ \cite[Lemma 3]{GMT17} and \cite[Eq.(3.17)]{AGG_CMP}. 

Let $\mathcal{S}(\bar{\bs{\Psi}}_1,\ldots,  \bar{\bs{\Psi}}_s)$ be the set of all the `spanning trees' which has as elements all the sets $T$ formed by ordered pairs of $(f,f')$, with $f \in \bar{\bs{\Psi}}_i$, $f' \in \bar{\bs{\Psi}}_j$ and $i<j$, such that the graph of $T$, with vertices $\set{1,\ldots,s}$ and edges $\set{(i,j) \in \set{1,\ldots,s}^2: \exists (f,f') \in T \mbox{ with } f \in \bar{\bs{\Psi}}_i, f' \in \bar{\bs{\Psi}}_j}$ is a tree graph. Then
\begin{equation} \mathbb{E}_\bH^{(h)}(\phi(\bar{\bs{\Psi}}_1); \ldots;\phi(\bar{\bs{\Psi}}_s)) = \sum_{T \in \mathcal{S}(\bar{\bs{\Psi}}_1,\ldots,  \bar{\bs{\Psi}}_s)} \mathfrak{G}_{T,\bH}^{(h)}
(\bar{\bs{\Psi}}_1,\ldots,  \bar{\bs{\Psi}}_s) \end{equation}
with 
  \begin{equation}  \label{fGT}\begin{aligned} 
	\fG_{T,\bH}^{(h)}(\bar{\bs{\Psi}}_{1},\ldots,\bar{\bs{\Psi}}_{s}) &= \alpha_{T} (\bar{\bs{\Psi}}_{1},\ldots,\bar{\bs{\Psi}}_{s}) \left( \prod_{{(f,f')} \in T} g_{{(f,f')}}^{(h)} \right)\\
	&\times \int P_{\bar{\bs{\Psi}}_{1},\ldots,\bar{\bs{\Psi}}_{s},T} (d \bs{t}) \Pf (G_{{\bar{\bs{\Psi}}_1,\ldots,\bar{\bs{\Psi}}_s},T} (\bs{t}))\,,
		 \end{aligned} \end{equation} 
where \begin{itemize}
\item  $\alpha_{T} (\bar{\bs{\Psi}}_{1},\ldots,\bar{\bs{\Psi}}_{s})$ is the sign of permutation from $ \bar{\bs{\Psi}}_{1}\oplus \cdots \oplus \bar{\bs{\Psi}}_{s}$ to $ T \oplus (\bar{\bs{\Psi}}_{1} \setminus T)\oplus \cdots \oplus( \bar{\bs{\Psi}}_{s}\setminus T)$; 
\item ${(f,f')} = ((\omega_i,D_i,z_i),(\omega_j,D_j,z_j))$, $g^{(h)}_{(f,f')} := \partial_{z_i}^{D_i}\partial_{z_j}^{D_j} g^{(h)}_{\omega_i\omega_j}(z_i,z_j)$ and $g^{(h)}_{\omega_i\omega_j}(z_i,z_j)$ is an element of the propagator $\fg_\bH^{(h)}$ introduced in Section~\eqref{propagatori} (if $h=1$ it is an element of the propagator $\fg_\bH^{(1)}$ and $D_i=D_j=0$, if $h \le 0$ it is an element of $\fg_\bH^{(h)}$, obtained by \eqref{gceta} combined with \eqref{g0h});
\item $\bs{t} = \set{t_{i,j}}_{1\le i,j\le s}$ and $P_{\bar{\bs{\Psi}}_{1},\ldots,\bar{\bs{\Psi}}_{s},T} (d \bs{t})$ is a probability measure with support on a set of $\bs{t}$ such that $t_{i,j} =\bs{u}_i \cdot \bs{u}_j$ for some family of vectors $\bs{u}_i = \bs{u}_i(\bs{t}) \in \mathbb{R}_s$ of unit norm;
\item letting $2q = \sum_{i=1}^s |\bar{\bs{\Psi}}_i|$, $G^{(h)}_{\bar{\bs{\Psi}}_1,\ldots, \bar{\bs{\Psi}}_s,T}(\bs{t})$ is an antisymmetric $(2q-2s+2)\times (2q-2s+2)$ matrix, whose off-diagonal elements are given by $(G^{(h)}_{\bar{\bs{\Psi}}_1,\ldots, \bar{\bs{\Psi}}_s,T}(\bs{t}))_{{\bar{f},\bar{f}'}}= t_{i(\bar{f}),i(\bar{f}')}g^{(h)}_{(\bar{f},\bar{f}')}$, where $\bar{f},\bar{f}'$ are elements of the tuple $(\bar{\bs{\Psi}}_1\setminus T) \oplus \cdots \oplus (\bar{\bs{\Psi}}_s\setminus T)$, and $i(\bar{f})$ (resp.\ $i(\bar{f}'))$ is the integer in $\set{1,\ldots,s}$ such that $\bar{f}$ (resp.\ $\bar{f}'$)  is an element of $\bar{\bs{\Psi}}_i \setminus T$ (resp.\ $\bar{\bs{\Psi}}_j \setminus T$).
\end{itemize}
Note that $\mathfrak{G}_{T,\bH}^{(h)}$ vanishes if one of $\bs{\bar{\Psi}}_i =\emptyset$ unless $s=1$, in which case $\mathfrak{G}_\emptyset^{(h)}(\emptyset)=1$; if $s=1$ and $\bs{\bar{\Psi}}_1 \ne \emptyset$, we let $\mathcal{S}(\bs{\bar{\Psi}}_1)=\set{\emptyset}$ and $\mathfrak{G}_\emptyset^{(h)}(\bs{\bar{\Psi}}_1):=\Pf (G_{\bs{\bar{\Psi}}_1}^{(h)})$ where $(G_{\bs{\bar{\Psi}}_1}^{(h)})_{\bar{f},\bar{f}'}= g^{(h)}_{(\bar{f},\bar{f}')}$ and $\bar{f},\bar{f}'$ are elements of the tuple $\bs{\bar{\Psi}}_1$. 

For later reference, we recall the following estimates, obtained using the estimates in \eqref{gdecay} and those on infinite and edge propagators mentioned below. The following estimate is the half-plane version of \cite[Eq.~(4.4.8)]{AGG_AHP} (or \cite[Lemma~3.15]{AGG_CMP}):
\begin{equation}\label{GTbound}
| \fG_{T,\bH}^{(h)}(\bar{\bs{\Psi}}_{1},\ldots,\bar{\bs{\Psi}}_{s})| \le C^{2q}  2^{(q+ \sum_{f \in \cup_i \bs{\bar{\Psi}}_i }|| D(f)||_1)h} e^{-{c_0} 2^h \sum_{(f,f')\in T} ||z(f)-z(f')||_1}\,,
\end{equation}
where $2q$ is defined in the last item below \eqref{fGT}, {$c_0$} is the same constant as in \eqref{gdecay}
and $\bar{\bs{\Psi}}_{k}= (\bs{\omega}_k, \bs{D}_k, \bs{z}_k)$, for $k \in 1, \ldots, s$, $s \ge 1$, are all non-empty field multilabels. Moreover, we define $\fG_{T,\infty}^{(h)}$ by the analogue of \eqref{fGT} where each propagator (both in the product over the spanning tree lines in the first line and in the Pfaffian in the last line) is replaced by its infinite plane limit as in Eq.~\eqref{ginfty} (and has argument $\bs{z}_k \in \mathbb{Z}^2$ one of the representatives with whom $\bs{z}_k \in \bH$ can be identified). $|\fG_{T,\infty}^{(h)}|$ satisfies the same estimate as $|\fG_{T,\bH}^{(h)}|$ in \eqref{GTbound} while $|\fG_{T,\bH}^{(h)}- \fG_{T,\infty}^{(h)}|$ satisfies a similar estimate where $\sum_{(f,f') \in T}||z(f)-z(f')||_1 + \dist (\bs{z},\partial \bH)$ replaces $\sum_{(f,f') \in T}||z(f)-z(f')||_1$. 
\end{remark}

\subsubsection{Trees and tree expansion.}\label{T}

In this section, we describe the expansion for the kernels of the effective boundary spin sources, as it arises from the iterative application of the half-plane limit of Eq.~\eqref{eq:eff_inter_iter2}. As in \cite[Section~4.3]{AGG_AHP} (or \cite[Section~3.2]{AGG_CMP}), it is convenient to graphically represent the result of the expansion in terms of Gallavotti-Niccol\`o (GN) trees, to obtain the following expression:
\begin{equation}
B_\bH^{(h)} \sim \sum_{\tau \in \mathcal{T}^{(h)}_{\Bs}}B_\bH[\tau]\,, \,\, \mbox{ with } \,\, B_\bH[\tau] = \sum_{\substack{\underline{P} \in \mathcal{P}(\tau)\\ P_{v_0} \ne \emptyset}} \sum_{\underline{T} \in \mathcal{S}(\tau,\underline{P})}  \sum_{\underline{D} \in \mathcal{D}(\tau,\underline{P})} B_\bH[ \tau, \underline{P},\underline{T},\underline{D} ]\,,\label{Btau}
\end{equation}
where the set $\mathcal{T}^{(h)}_{\Bs}$ is a family of GN trees. In the following we provide the definitions of trees and tree values, but we do not repeat the details of the iterative construction leading to the tree expansion, as it is completely analogous to that described in \cite{AGG_AHP, AGG_CMP}.
A generic element of $\mathcal{T}^{(h)}_{\Bs}$ is a tree $\tau$: it consists of vertices, elements of $V(\tau)$, connected by edges as in Figure~\ref{FigTauBar}; each vertex $v \in V(\tau)$ has a scale label $h_v \in [h+1,2]\cap \mathbb{Z}$ (vertical dashed lines in the figure). The leftmost vertex is $v_0$, the \textit{root}, always on scale $h_{v_0}=h+1$; the rightmost vertices (which have only one incident edge) are called \textit{endpoints}, elements of $V_e(\tau)\subset V(\tau)$, and can be of five different types represented with different symbols and colors as in  Figure~\ref{FigTauBar}, see the caption for a description and definition of these symbols and colors.
We can refer to an intuitive partial order on $V(\tau)$: if $v\in V(\tau)$ can be connected to $u\in V(\tau)$ by an edge path going only from right to left, then  $v \ge u$ ($v >u$ if they are distinct vertices) and we can say that `$v$ {follows} $u$' and that `$u$ {precedes} $v$'; if the edge path connecting $v$ and $u$ consists of a single edge and $v > u$, we say that `$v$ is an {immediate successor} of $u$'. Note that the definitions imply that any $v \in V(\tau) \setminus \set{v_0}$ can be the immediate successor of a unique vertex, which we denote by $v'$, while any $v \in V(\tau)\setminus V_e(\tau)$ can have more than one immediate successor: we denote by $S_v$ the set consisting of the immediate successors of $v$. Finally, for any $v \in V(\tau)\setminus V_e(\tau)\setminus \set{v_0}$, we let $\tau_v \in \mathcal{T}_{\Bs}^{(h_v-1)}$ be the \textit{subtree} with root $v$ on scale $h_v$, whose vertex set is $V(\tau_v):= \set{w \in V(\tau) : w \ge v}$.
\begin{figure}[ht]
\centering
\begin{tikzpicture}[baseline=-0.4em]
\draw[dashed, gray] (0,0)node[below]{$h+1$} -- (0,8)  ;
\draw[dashed, gray] (1,0) -- (1,8) ;
\draw[dashed, gray] (2,0) -- (2,8) ;
\draw[dashed, gray] (3,0)  -- (3,8) ;
\draw[dashed, gray] (4,0) node[below]{$0$} -- (4,8) ;
\draw[dashed, gray] (5,0) node[below]{$1$}-- (5,8) ;
\draw[dashed, gray] (6,0) node[below]{$2$} -- (6,8) ;
\draw (0,4) -- (1,4.5);
\draw (1,4.5)--(2,5)-- (3,5.5);
\draw (2,5) -- (3,4.5) --(4,4)--(5,3.5);
\draw (2,3)--(3,3.5);
 \draw (4,4) -- (5,4.5);
\draw(3,4.5)--(4,5)--(5,5.5)--(6,6);
\draw (0,4)--(1,3.5)--(2,3)--(3,2.5)--(4,2)--(5,1.5)--(6,1);
\draw (4,2)--(5,2.5);
\node[left] at (0,4) {$v_0$};
\draw[color=black]   (0,4) node[vertex,E] {};
\draw[color=black]   (1,4.5) node[vertex,E] {};
\draw[color=black]   (2,5) node[vertex,E] {};
\draw[color=black]   (3,4.5) node[vertex,E] {};
\draw[color=black]   (5,1.5) node[vertex] {};
\draw[color=black]   (6,1) node[vertex] {};
\draw[color=black]    (5,2.5) node[ctVertex] {};
\draw[color=black]   (2,3) node[vertex,E] {};
\draw[color=black]   (4,2) node[vertex,E] {};
\node[black] at  (3,5.5) {\pgfuseplotmark{*}};
\draw[color=black]  (3,5.5) node[ctVertex] {};
\draw[color=black]   (4,5) node[vertex,E] {} ;
\draw[color=black]   (4,4) node[vertex,E] {};
\draw[color=black]   (5,5.5) node[vertex,E] {};
\draw[color=black]   (6,6) node[vertex,E] {};
\draw[color=black]  (5,4.5)  node[ctVertex,E] {} ;
\draw[color=black]  (5,3.5)  node[FSSpinEP] {} ;
\draw[color=black]  (3,3.5) node[FSSpinEP] {} ;
\end{tikzpicture}
\caption{Example of a tree $\tau \in \mathcal{T}_{\Bs}^{(h)}$ with all possible types of endpoints: endpoints \protect\tikzvertex{FSSpinEP}\,\,\protect  graphically represent the local parts of the boundary spin sources, endpoints \protect\tikzvertex{vertex}\protect, \protect\tikzvertex{ctVertex}\,\protect, \protect\tikzvertex{ctVertex,E}\,\protect and \protect\tikzvertex{vertex,E}\,\,\protect  graphically represent contributions from the effective interactions and are the ones studied in \cite{AGG_CMP} (here adapted to the half-plane). Endpoints \protect\tikzvertex{vertex} and \protect\tikzvertex{vertex,E} can only be on scale $h_v=2$; endpoints \protect\tikzvertex{FSSpinEP}\,\,\protect, \protect\tikzvertex{ctVertex}\,\protect and  \protect\tikzvertex{ctVertex,E}\,\protect  can be on any scale $h_v \in [h+2,1]\cap \mathbb{Z}$ but are always preceded by a branching point on scale $h_v-1$. The root $v_0$, the leftmost vertex, can be a branching point, as in this figure, in which case it is represented \textit{dotted}, or it can not be, in which case it can be represented dotted or undotted; every other $v \in V(\tau)\setminus V_e(\tau) \setminus \set{v_0}$ is always represented with a dotted symbol. Note that the white vertices are `hereditary': if a vertex is white, then all the dotted vertices preceding it are white. \label{FigTauBar}}
\end{figure}

In addition to the scale label $h_v$, we assign two other labels to each $v \in V(\tau)$: $E_v$, which is $1$ if $v$ is white and $0$ otherwise (if $v_0$ is not dotted, we let $E_{v_0}$ be the same as $E_{w_0}$, with $w_0$ the immediate successor of $v_0$), and $m_v$\footnote{Note that $m_v$ in \cite{AGG_CMP} was associated with energy sources while here it is associated with boundary spin sources.}, which is the number of \tikzvertex{FSSpinEP} endpoints following $v$, i.e.\ if $v$ is a \tikzvertex{FSSpinEP} endpoint, $m_v=1$ and if $v \in V(\tau) \setminus V_e(\tau)$, $m_v := \sum_{w \in V_e(\tau)}^{w > v} m_w$. Note that since \tikzvertex{FSSpinEP} is white, and a white vertex must be preceded by a white vertex, all vertices with $m_v\ge 1$ are white and have $E_v=1$. 

Finally, since trees as the one in Figure~\ref{FigTauBar} are obtained from iterative graphical expansions, by iterating graphical equations analogous to the ones in \cite[Figures~1-2]{AGG_AHP}, for example, in principle there should be a label $\cR$ at all the intersections of branches with vertical dashed lines, except when $v_0$ or one of the endpoints is on the intersection; however, by convention, in order not to overwhelm the figures, we prefer not to indicate these labels explicitly since the correct placement of these labels can always be reconstructed. This label would indicate the action of a $\cR$ operator which acts on the kernel associated with the vertex on the intersection or, if there is no vertex on the intersection, associated with the first vertex located on the right on the same branch. For this reason, a \protect\tikzvertex{FSSpinEP} endpoint on scale $h$, which graphically represent the local part of the effective source, must be  preceded by a branching point on scale $h-1$: if this were not the case, then on scale $h-1$ there would be an $\cR$ operator acting on a local kernel, but this would annihilate it, see Remark~\ref{RL=0}.

Given $\tau \in \cup_{h\le 0} \mathcal{T}_{\Bs}^{(h)}$, we introduce the following notation needed to explain the second term in \eqref{Btau}: 
\begin{itemize}
\item  $\mathcal{P}(\tau)$ is the set of the allowed field labels, concerning the Grassmann fields $\phi$ (and $\xi$ if $v$ is on scale $h_v=2$), whose generic element $\underline{P}= \set{P_v}_{v \in V(\tau)} \in \mathcal{P}(\tau)$ is characterized by the following properties: \begin{itemize}

\item if $v \in V_e(\tau)$ we assign it a label $j_v \in \set{1,\ldots, |V_e(\tau)|}$ to each $v \in V_e(\tau)$ so that the endpoints as in the figure are ordered from top to bottom, for example; then, if $v \in V_e(\tau)$, we let $P_v = \set{f_1,\ldots,f_n}$ be the set of the $n$ fields associated with $v$: for any $k = 1,\ldots, n$, $f_k:= (o(f_k), \omega(f_k))$, where $o(f_k):=(j_v, k)$ represents the position of $f_k$ in an ordered list of field indices associated with the endpoints, 
and $\omega(f_k)  \in\set{\pm,\pm i}$ (resp.\ $\omega(f_k) \in \set{\pm}$) if $h_v = 2$ (resp.\  if $h_v \le 1$) is a field index such that $\pm i$ values correspond to $\xi$ fields and $\pm$ values to $\phi$ fields;  $n = 1$ if $v$ is a \tikzvertex{FSSpinEP} endpoint, $n = 2$ (resp.\ $n \in \set{2,4}$) if $v$ is a \tikzvertex{ctVertex} (resp.\ \tikzvertex{ctVertex,E}) endpoint and $n \in 2 \mathbb{N}$ if  $v$ is a \tikzvertex{vertex} or a \tikzvertex{vertex,E} endpoint;

\item if $v \in V(\tau)\setminus V_e(\tau)$, $P_v \subset \cup_{w \in S_v} P_w$, and for any $v \in V(\tau)\setminus\set{v_0}$,  $|P_v|$ is positive and $|P_v|+ m_v$ is even. Additionally, 
if $v \in V(\tau)\setminus V_e(\tau)$, we let $Q_v := \{\left( \cup_{w \in S_v} P_w\right)\setminus P_v\}$ be the set of fields `contracted on scale $h_v$', so that if $v$ is dotted, $|Q_v|\ge 2$ and $|Q_v| \ge 2(|S_v|-1)$, while $Q_{v_0}$ is the empty set if $v_0$ is not dotted. Note that the definitions imply that for $v,w\in V(\tau)$ such that neither vertex precedes the other (for example when $v' = w'$ but $v \ne w$), $P_v$ and $P_w$ are disjoint, as are $Q_v$ and $Q_w$. Finally, since the contracted fields on scale $1$ are all and only the $\xi$ fields, if $h_v=1$,  $Q_v =  \cup_{w \in S_v} \set{f \in P_w: \omega(f) \in \set{\pm i}}$, while if $h_v<1$, $\omega(f) \in \set{\pm}$ for all $f \in Q_v$. As a consequence, any $v \in V_e(\tau)$ with $h_{v'} < 1$ has $\omega(f) \in \set{\pm}$ for all $f \in P_v$.
 
\end{itemize}
\item given $\underline{P} \in \mathcal{P}(\tau)$, $\mathcal{S}(\tau, \underline{P})$ is the set of anchored trees, that is the set of $\underline{T}= \set{T_v}_{v \in V(\tau)\setminus V_e(\tau)}$ where $T_v = \set{(f_1,f_2),\ldots, (f_{{2|S_v|-3}},f_{{2|S_v|-2}})}\subset Q_v^2$, such that no $f$ appears more than once and the graph obtained by connecting pairs $v,w$ such that $f_{2j-1} \in P_v, \ f_{2j} \in P_w$ for some $j$ is a tree graph.
	Alternatively, the elements of $\cS(\tau,\ul P)$ can be generated by taking a tree with vertex set $V(\tau)$ and ``anchoring'' it by associating each edge $\{v,w\}$ with a pair in $P_v \times P_w$ or $P_w \times P_v$, again with no repetitions.\footnote{In \cite{AGG_AHP,AGG_CMP} the elements of $\cS(\tau, \ul P)$ are called ``spanning trees''.}
\item given $\underline{P} \in \mathcal{P}(\tau)$, $\mathcal{D}(\tau, \underline{P})$ is the set of allowed derivative maps $\underline{D}= \set{D_v}_{v \in V(\tau)}$ where for each $v \in V(\tau)$, $D_v$ denotes a map $D_v: P_v \to \set{D \in \set{0,1,2}^2: \|D\|_1 \le 2}$ (the reader should think that $\partial^{D_v(f)}$ acts on the field with label $f$). \end{itemize}
Given these along with a map $z: P_v \to \bH$, we let $\bs{\Psi}_v = \bs{\Psi} (P_v):= (\bs{\omega}_v, \bs{D}_v, \bs{z}_v)$ be the field multi-label associated with the tuples $\bs{\omega}_v, \bs{D}_v$ and $\bs{z}_v$ of components $\omega(f), D(f)$, and $z(f)$, with  $f \in P_v$, respectively. If $v \in V(\tau)\setminus V_e(\tau)$ and also the maps $z: P_w \to \bH$, for all $w \in S_v$, are assigned, for each $w \in S_v$ we let $\bar{\bs{\Psi}}_w=\bs{\Psi}(P_w \setminus P_v) =(\left. \bs{\omega}_w \right|_{P_w \setminus P_v},\left. \bs{D}_w \right|_{P_w \setminus P_v},\left. \bs{z}_w \right|_{P_w\setminus P_v})$  be the 
restriction of $\bs{\Psi}_w$ to the subset $(P_w \setminus P_v)\subset P_v$. Finally, each $v\in V(\tau)$ with $m_v \ge 1$ is associated with a tuple $\bs{y}_v \in \mathcal{Q}(\partial \bH)$: if $v\in V_e(\tau)$ (therefore $m_v = 1$), $\bs{y}_v$ is simply the position of the boundary spin source associated with the  \tikzvertex{FSSpinEP} endpoint, if $v \in V(\tau) \setminus V_e(\tau)$,  $\bs{y}_v = \oplus_{w \in S_v} \bs{y}_w$ where the elements of $S_v$ are ordered in the way induced by the ordering of the endpoints.

In terms of these definitions, as in \cite[Eq.~(4.3.5)]{AGG_AHP} or \cite[Eq.~(3.2.7)]{AGG_CMP},  we can recursively define the kernel $B_\bH[\tau, \ul P, \ul T, \ul D]$ in 
Eq.~\eqref{Btau}  as
\begin{equation}
	\begin{split}
		B_\bH[\tau, \ul P, \ul T, \ul D]\big((\bs \omega_0,\bs D_0,\bs z_0),\bs y_{v_0}\big)
		&=  \mathds 1(\bs \omega_0=\bs \omega_{v_0}) \mathds 1(\bs D_0=\bs D_{v_0}= \bs D_{v_0}')\, \frac{\alpha_{v_0}}{ |S_{v_0}|!} 
		\\ & \times
		\sum_{\substack{z: \cup_{v\in S_{v_0}}\!P_{v}\to\bar\bH:\\ 
		\bs z_0= \bs z_{v_0}}}\hskip-.2truecm
		\fG_{T_{v_0},\bH}^{(h_{v_0})}(\bar{\bs{\Psi}}_{v_1},\ldots,\bar{\bs{\Psi}}_{v_{|S_{v_0}|}})\, \prod_{v \in S_{v_0}}K_v(\bs{\Psi}_v, \bs y_{v}),	
	\end{split}
	\label{eq:W_Lambda_gray_iter}
\end{equation}
where $\bs D_{v_0}':= \oplus_{v \in S_{v_0}} \left.\bs{D}_v \right|_{P_{v_0}}$ (which equals $0$ if $h_{v_0}=1$), $\alpha_{v_0} := \alpha (\bs{\Psi}_{v_0},{\bs{y}_{v_0}}, |S_{v_0}|)$ is defined as below \eqref{Bh-1}, in the second line the summand must be interpreted as zero if $(\bs{\Psi}_v,\bs{y}_v) \notin \cM_{\bar{\bH}} \times \mathcal{Q}(\partial \bH)$ for some $v\in S_{v_0}$, $\fG_{T_{v_0},\bH}^{(h_{v_0})}$ is defined as in Eq.~\eqref{fGT} and if $v \in S_{v_0}$, $K_v(\bs{\Psi}_v, \bs y_{v}) := K [\tau_v, \ul P, \ul T, \ul D] (\bs{\Psi}_v, \bs y_{v})$ is defined as follows: if $m_v \ge 1$, then\footnote{By construction, in order for a vertex $v$ to have $m_v\ge 1$, it must be on scale $h_v\le 1$; moreover, if $v$ with $m_v\ge 1$ is on scale $h_v=1$, then it is an endpoint of type \tikzvertex{FSSpinEP} (and, therefore, it has $m_v=1$).}
\begin{equation}
	K_v(\bs{\Psi}_v, \bs y_{v})
	:=
	\begin{cases}
	 Z_{\Bs,h_v-1} \delta_{\bs{\Psi}_v,(-,0,y)}  \delta_{\bs{y}_v,y}
		&
		\textup{if } v \in V_e(\tau) \textup{ (so that $v$ is of type  \tikzvertex{FSSpinEP} and $h_v=h_{v_0}+1$)},
		\\
		\bar B_\bH[\tau_v, \ul P_v, \ul T_v, \ul D_v] (\bs{\Psi}_v,  \bs y_{v})
		&
		\textup{if } v \in V(\tau)\setminus V_e(\tau),
	\end{cases}
	\label{eq:K_gray_def}
\end{equation}
where $\ul P_v$ (resp.\ $\underline{T}_v$, resp.\ $\underline{D}_v$) denotes the restriction of $\ul P$ (resp.\ $\underline{T}$, resp.\ $\underline{D}$) to the subtree $\tau_v$ and 
\begin{equation}
	\bar B_\bH[\tau_v, \ul P_v, \ul T_v, \ul D_v]
	:=
	\left.
	\cR B_\bH[\tau_v, \ul P_v, \ul T_v, \ul D_v']
	\right|_{\bs D_v}
	\label{eq:Wbar_gray}
\end{equation}
where  $\ul D_v' := \set{D'_v}\cup \set{D_w}_{w \in V(\tau):w>v}$ ($D'_v$ is the map such that $\bs D_v'  := \left. \oplus_{w \in S_{v}} \bs{D}_w \right|_{P_{v}})$ and if $\bs D_v = (D_1,\ldots, D_n)$ with $n= |P_v|$ and $|| \bs D_v||_1= p$, $\left.\cR B_\bH[\tau_v, \ul P_v, \ul T_v, \ul D_v']\right|_{\bs D_v}$ denotes the restriction of $(\cR (B_\bH[\tau_v, \ul P_v, \ul T_v, \ul D_v']))_{n,p}$ to that specific choice of derivative label; if $m_v = 0$, we let $K_v(\bs{\Psi}_v, \emptyset)  = K_v(\bs{\Psi}_v)$, which is given by the expressions in \cite[Eqs.~(3.2.14)--(3.2.17)]{AGG_CMP} in the half-plane limit, namely: if $h_{v_0} = 1$, then $v$ is necessarily an endpoint either of type \tikzvertex{vertex} or \tikzvertex{vertex,E}, with value 
$ K_v(\bs{\Psi}_v):=  W_\infty^{(1)}(\bs{\Psi}_v)$ or $W_{\bH,E}^{(1)}(\bs{\Psi}_v)$ (see \cref{WE}), respectively;
if $h_{v_0} \le 0$, we set
 \begin{equation}
	 K_v(\bs{\Psi}_v)
	:=
\begin{cases}
	 v_{h_v-1} \cdot F_{B} (\bs{\Psi}_v)
		&
		\textup{if } v \in V_e(\tau) \textup{ is of type  \tikzvertex{ctVertex} (so that $h_v = h_{v_0}+1$)},
		\\
		\cR_{E} C^{(h_v-1)}_{\bH,E} (\bs{\Psi}_v)&
		\textup{if } v \in V_e(\tau) \textup{ is of type  \tikzvertex{ctVertex,E} (so that $h_v = h_{v_0}+1$)},\\
		(\cR_{\infty} W_{\infty}^{(1)}) (\bs{\Psi}_v)    &
		\textup{if } v \in V_e(\tau) \textup{ is of type \tikzvertex{vertex} (so that $h_v = 2$)},\\
		\cR_{E} (W_{\bH,E}^{(1)}+ \mathcal{E}_\bH W^{(1)}_{\infty})(\bs{\Psi}_v)  &
		\textup{if } v \in V_e(\tau) \textup{ is of type \tikzvertex{vertex,E} (so that $h_v = 2$)},\\

		\bar W_\bH[\tau_v, \ul P_v, \ul T_v, \ul D_v] (\bs{\Psi}_v)
		&
		\textup{if } v \in V(\tau)\setminus V_e(\tau),
	\end{cases}
	\label{KPsi}
\end{equation}
where the values in the r.h.s.\ are obtained by adapting the definitions in \cite[Eq.~(3.2.15)]{AGG_CMP} and  \cite[Eq.~(3.2.17)]{AGG_CMP}  to the half-plane limit. In particular, in the 
first line of \cref{KPsi}, $v_{h} \cdot F_{B} := 2^h \nu_h F_{\nu, B} + \zeta_h F_{\zeta,B} + \eta_h F_{\eta,B}$, where $\nu_h, \zeta_h,\eta_h$, called the \textit{running coupling 
constants}, only depend on $W^{(h)}_\infty$ and are the same defined in \cite[Eqs. (4.2.30), (4.5.2)]{AGG_AHP} and studied in \cite[Section 4.5]{AGG_AHP}, and 
$F_{\nu, B}, F_{\zeta,B}, F_{\eta,B}$ are the kernels of the \textit{local potentials} in the first two lines of \cite[Eq.(3.1.47)]{AGG_CMP} adapted to the half-plane; in the second line of \cref{KPsi}, $C_{E,\bH}^{(h)} = \sum_{k = h}^0 \sum^{\tikzvertex{vertex}}_{\substack{\tau \in \mathcal{T}^{(k)}}} \mathcal{E}_\bH W_\infty[\tau]$
(cf.\ the first expression in  \cite[Eq.~(3.2.18)]{AGG_CMP}) where the sum runs over trees with black dotted root, $W_\infty[\tau]$ is defined via \cite[Eqs.~(3.2.6)-(3.2.7)]{AGG_CMP}
and, letting $I_\bH(\boldsymbol{\Psi}):=\mathds 1(\boldsymbol{\Psi}\in\mathcal M_\bH)$, $\mathcal{E}_\bH W_\infty[\tau] = \cL_B(I_\bH\, W_\infty[\tau])-(\cL_\infty W_\infty[\tau])I_\bH+
\cR_B(I_\bH\, W_\infty[\tau])-(\cR_\infty W_\infty[\tau])I_\bH$, where $\cL_\infty,\cR_\infty$ are the infinite plane localization and renormalization operators defined in \cite[Sect.4.2]{AGG_AHP}, while $\cL_B,\cR_B$ are the analogous operators adapted to the half-plane (defined as in \cite[Sect.3.1.2]{AGG_CMP}, with the cylinder $\Lambda$ used there replaced by the half-plane); in the fourth line, $\mathcal R_E$ is the edge renormalization operator, defined as in \cite[Sect.3.1.2]{AGG_CMP}, with the cylinder $\Lambda$ used there replaced by the half-plane, $W^{(1)}_{\bH,E}$ was defined in \eqref{WE}, and $\mathcal E_\bH W^{(1)}_\infty= \cL_B(I_\bH\, W_\infty^{(1)})-(\cL_\infty W_\infty^{(1)})I_\bH+
\cR_B(I_\bH\, W_\infty^{(1)})-(\cR_\infty W_\infty^{(1)})I_\bH$; in the last line $\bar W_\bH[\tau_v, \ul P_v, \ul T_v, 
\ul D_v]$ is defined via the analogue of \eqref{eq:Wbar_gray}, with $\cR B_\bH$ in the right side replaced by a similar contribution, whose specific definition depends on whether the 
vertex is black or white and, 
more precisely: 
(1) if $E_v=0$ (i.e.\ if $v$ is black vertex), then $\bar W_\bH[\tau_v, \ul P_v, \ul T_v, \ul D_v] = \cR_\infty W_\infty[\tau_v, \ul P_v, \ul T_v, \ul D_v']\big|_{{\boldsymbol{D}_v}}$; 
(2) if $E_v=1$ (i.e.\ if $v$ is a white vertex), then $\bar W_\bH[\tau_v, \ul P_v, \ul T_v, \ul D_v] = \cR_E W_\bH[\tau_v, \ul P_v, \ul T_v, \ul D_v']\big|_{{\boldsymbol{D}_v}}$, 
with $W_\bH$ defined as in \cite[Eq.~(3.2.12)]{AGG_CMP} or as in \cite[Eq.~(3.2.13)]
{AGG_CMP}, depending on whether $v$ is followed only by black vertices or not (once again, the definitions of \cite{AGG_CMP}, written there for the case of a finite cylinder, should be adapted here to the half-plane by taking the appropriate limit).

Furthermore, a tree expansion analogous to \eqref{Btau} also holds for the kernel of the single-scale contribution to the generating function in Eq.~\eqref{wh-1}, namely
\begin{equation}\begin{aligned}
&w_\bH^{(h)} \sim \sum_{\substack{\tau \in \mathcal{T}^{(h)}_{\Bs}:\\ {m_{v_0}\ge1} } }^{*}w_\bH[\tau]\,, \,\, \mbox{ with  }\,\, w_\bH[\tau] = \sum_{\substack{\underline{P} \in \mathcal{P}(\tau):\\ P_{v_0} = \emptyset}} \sum_{\underline{T} \in \mathcal{S}(\tau,\underline{P})}  \sum_{\underline{D} \in \mathcal{D}(\tau,\underline{P})} B_\bH[ \tau, \underline{P},\underline{T},\underline{D} ]\,,\label{wtree}
\end{aligned}
\end{equation}
where in the first sum $*$ indicates that the sum is over the trees with dotted $v_0$  (cf.\ second line of \cite[Equation~(3.2.5)]{AGG_CMP}). The analogous expansion for the interaction kernel in Eq.~\eqref{Vheff} is similar to the one in the first line of \cite[Eq.~(3.2.5)]{AGG_CMP}, in which the trees must be considered without energy sources and with coordinates in the half-plane rather than in the cylinder: they coincide with the trees introduced in this section if the root has no boundary spin sources.

As in \cite[Remark~3.13]{AGG_CMP} and \cite[Remark~4.4]{AGG_AHP},  we let $R_v=\| \bs D_v\|_1-\sum_{w\in S_v}\| \bs D_w\big|_{P_v}\|_1\equiv \| \bs D_v\|_1-\| \bs D_v'\|_1$.
\begin{remark}
\label{Rv}
From the definitions, $R_{v_0}=0$, while if $v \in V(\tau)\setminus V_e(\tau)\setminus \set{v_0}$, $R_v$ is non vanishing and equal to $1$ only if $|P_v| = m_v = 1$ and  $||\bs{D}'_v|| = 0$, and, if $m_v=0$, it takes the values specified in \cite[Remark~3.13]{AGG_CMP} at $m_v=0$ (note the change in notation: in \cite{AGG_CMP} $m_v$ indicates the 
number of energy source endpoints following $v$ on $\tau$, while here it indicates the number of boundary spin source endpoints following $v$ on $\tau$; of course, for the vertices such that $\tau_v$ has no source endpoints, the values of $R_v$ here and in \cite{AGG_CMP} are the same). If we let 
\begin{equation}
	d_v
	:=
	2 - \tfrac12 |P_v| - \|\bs D_v \|_1  - \tfrac12 m_v -E_v
	\label{eq:scaling_d}\,,
\end{equation}
be the scaling dimension of $v$, then these conventions are exactly such that 
\begin{equation}
	R_v
	=  
	\max \set{d_v+1, 0}\,. \label{Rvdv}
\end{equation}
Note that if $m_v \ge 1$, $E_v=1$ so that \eqref{eq:scaling_d} gives $d_v = 1- \frac{|P_v|}{2}-||\bs{D}_v||_1 - \frac{m_v}{2}$, which is the same definition we introduced in Remark~\ref{sd}; if $m_v = 0$,  \eqref{eq:scaling_d} coincides with the expression in \cite[Eq.~(3.2.24)]{AGG_CMP} at $m_v=0$.

From the definitions, in particular from the definitions of the allowed $\ul D$ (see \cite[Remark~4.4]{AGG_AHP}), for all $v \in V(\tau) \setminus \set{ v \in V_e(\tau): h_v = h_{v'}+1}$, $d_v \le -1$: this will play an important role in the uniformity in $h_{v_0}$ of the expansion in GN trees.
 \end{remark}

\subsubsection{Bounds on the kernels of the boundary spin source.}
In this section, we state two key bounds on the kernels $B_\bH[\tau, \ul P, \ul T, \ul D]$, which will be used later in order to estimate the logarithmic derivatives of the generating function of boundary spin correlations, once we express them in the form of sums over GN trees. These bounds are obtained by adapting those of \cite[Section~3.3]{AGG_CMP} to the presence of boundary spin sources in the  half-plane  limit.

We can repeat the same argument as in the proof of \cite[Proposition~3.19]{AGG_CMP} to obtain the following proposition\footnote{Proposition 3.9 of \cite{AGG_CMP} was formulated and proved for $\theta=3/4$: this restriction is unnecessary, cf.~\cite[Footnote 19]{AGG_CMP}, and was made only to mildly simplify the explicit choice of a few constants 
here and there in its proof. It can be easily checked that the proof of \cite[Proposition~3.19]{AGG_CMP} can be generalized to any $\theta\in (0,1)$ at the cost of 
having a $\theta$-dependent constant $C=C(\theta)$ in the right hand side of \cite[Eq.(3.3.31)]{AGG_CMP}, which may diverge as $\theta\to 1$. Similar considerations apply to Prop.~3.17, Prop.~3.23 and to the discussion of Sect.~4 of \cite{AGG_CMP}. We take the occasion to mention that the way in which \cite[Theorem 1.1]{AGG_CMP} is formulated is 
slightly mistaken: the remainder $R_\Lambda(\bs y)$, which is well defined and analytic in $\lambda$ for $|\lambda|<\lambda_0$, uniformly in $\Lambda,\bs y$, 
satisfies the bound \cite[Eq.(1.6)]{AGG_CMP} only in an a priori smaller domain, $|\lambda|<\lambda_1(\theta,\varepsilon)$, with $\lambda_1:(0,1)\times(0,\frac12)\to (0,+\infty)$ 
a function which may vanish in the limit as $\theta\to 1$ and/or $\varepsilon\to 0$. Similar considerations are valid for \cite[Theorem 1.1]{AGG_AHP} and 
\cite[Theorems 1.1 and 1.2]{GGM}.}
\begin{proposition} Let $B_\bH[\tau, \ul P, \ul T, \ul D]$ be inductively defined as in Section~\ref{T}. Given $\theta\in[0,1)$, there exist $C,\kappa,\lambda_0>0$ such that, for any $\tau\in \cup_{h ^*-1 \le h \le -1} \cT^{(h)}_{\Bs}$ with $m_{v_0}\ge 1$,  
$\bs y_{v_0} \in (\mathcal{Q}(\partial \bH))^{m_{v_0}}$,
$\ul P\in \cP(\tau)$, $\ul T\in \cS(\tau,\ul P)$, $\ul D\in \cD(\tau,\ul P)$, and $|\lambda|\le \lambda_0$, 
\begin{equation}\begin{aligned}
	\|B_\bH [\tau,\ul P,\ul T,\ul D](\bs y_{v_0})\|_{h_{v_0}}
	\le \, & C^{m_{v_0} + \sum_{v\in V_e(\tau)}|P_v|}    \Big(\prod_{v \in V(\tau)\setminus V_e(\tau)}
		      \frac{2^{(\frac{|Q_v|}2+ \sum_{w\in S_v}\| \bs D_w |_{Q_v} \|_1 - R_v + 2(1 -  \left|S_v\right|))h_v}}{ |S_v|!}  \Big)  \\
	&\times \Big(\prod_{\substack{v\in V(\tau)\setminus \set{v_0}:\\ m_v=0}}2^{-E_v(h_v-h_{v'})} \Big)  \Big(\prod_{\substack{v\in V(\tau):\\
		    m_v\ge 1}} 2^{2[|S^*_v|-1]_+h_v} e^{-{\frac{c_0}{12}}2^{h_v}\delta(\bs y_v)} \Big) \\
		    &\times \prod_{\substack{v \in V_e(\tau)}}\piecewise{
				|Z_{\Bs,h_v-1}|  & \text{ if $m_v=1$ }\\
		|\lambda|^{\max\{1,\kappa |P_v|\}} 2^{(2-\frac{|P_v|}{2}-||\bs D_v||_1)h_v}2^{\theta h_v}  & \text{ if $m_v=0$}} 
			      \label{BPD} \end{aligned}	 \end{equation}
			      where: in the last product in the second line, $S_v^*=\{w\in S_v: m_w \ge 1\}$, $[\cdot]_+$ is the positive part, and $c_0$ is the same constant as in Eq.~\eqref{gdecay} (and then as in \cite[Equation~(3.8)]{AGG_CMP}). 
\label{prop1}
\end{proposition}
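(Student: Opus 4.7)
The plan is to prove \eqref{BPD} by induction on the number of vertices of $\tau$, starting from the endpoints (highest scale) and moving toward the root, essentially adapting the argument of \cite[Proposition 3.19]{AGG_CMP} to the presence of boundary spin sources. The key input is \eqref{eq:W_Lambda_gray_iter}: at the root $v_0$ we contract fields on scale $h_{v_0}$ using the truncated expectation $\fG^{(h_{v_0})}_{T_{v_0},\bH}$, whose Battle--Brydges--Federbush bound \eqref{GTbound} controls both the scaling factors $2^{h_{v_0}(\cdots)}$ and the exponential decay along the anchored tree $T_{v_0}$. Combining this with the inductive hypothesis applied to each $K_v$, $v\in S_{v_0}$, will produce the stated bound after we sum over the positions $z:\cup_{v\in S_{v_0}}P_v\to\bar\bH$ consistent with $\bs z_0=\bs z_{v_0}$.

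The technical heart of the argument is the summation over positions. Following the standard strategy, I would use part of the exponential decay along the spanning tree $T_{v_0}$ (from \eqref{GTbound}) to perform, for each $v\in S_{v_0}$, the sum over the positions in $Q_v$ relative to a reference point in $P_v$, picking up a factor $C2^{-2h_{v_0}}$ per edge of $T_{v_0}$, i.e.\ $C^{|S_{v_0}|-1}2^{-2(|S_{v_0}|-1)h_{v_0}}$ in total. This produces the factor $2^{2(1-|S_{v_0}|)h_{v_0}}$ appearing in \eqref{BPD}. The remaining fraction of the exponential decay is then reassembled so as to reconstitute the weight $e^{\frac{c_0}{2}2^{h_{v_0}}\delta(\bs z_{v_0},\bs y_{v_0})}$ required by the norm $\|\cdot\|_{h_{v_0}}$ on the left-hand side; here one has to exploit the inductive $e^{-\frac{c_0}{12}2^{h_v}\delta(\bs y_v)}$ factors carried by the subtrees with $m_v\ge 1$, together with the usual triangle-inequality manipulations relating $\delta(\bs y_{v_0})$ to $\sum_{v\in S^*_{v_0}}\delta(\bs y_v)$ plus the distances between the tuples $\bs y_v$. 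The factor $2^{2[|S^*_{v_0}|-1]_+h_{v_0}}$ is produced precisely when bridging distinct $\bs y_v$'s through tree paths passing through massive/critical propagators on scale $h_{v_0}$.

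For the endpoint factors, the $m_v=1$ case directly contributes $|Z_{\Bs,h_v-1}|$ by the definition \eqref{ZbsL} of the local kernel at a \tikz[baseline=-0.5ex]{\node[FSSpinEP,inner sep=1.2pt]{};} endpoint, while the $m_v=0$ endpoints are bounded exactly as in \cite[Prop.~3.19]{AGG_CMP}, using \eqref{WintL}--\eqref{WEbound} and the analogous estimates for $v_{h}\cdot F_B$ and $C^{(h)}_{\bH,E}$. The renormalization operator $\cR$ applied at each non-root, non-endpoint vertex yields the improvement $R_v=\max\{d_v+1,0\}$ in the exponent of $2^{h_v}$, by the standard interpolation estimate underlying \eqref{boundRHH}: in particular, for the new case $|P_v|=m_v=1$, $\|\bs D_v'\|_1=0$, the factor $2^{-h_v}$ arising from $(\cR B_\bH^{(h)})_{1,1,1}$ exactly provides $R_v=1$, whereas for $m_v=0$ the proof reduces verbatim to \cite{AGG_CMP}. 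The extra $2^{\theta h_v}$ in the endpoints with $m_v=0$ is, as in \cite[Prop.~3.19]{AGG_CMP}, absorbed by trading a fraction of the exponential decay of the propagators on the edge subtree or by invoking \eqref{WEbound}, and is responsible for the $\theta$-dependent constant.

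The main obstacle I anticipate is bookkeeping the factor $2^{2[|S^*_v|-1]_+h_v}e^{-\frac{c_0}{12}2^{h_v}\delta(\bs y_v)}$ at every branching vertex that carries at least one boundary-source subtree: one must check that the decay inherited from the inductive hypotheses at the $w\in S^*_v$, together with the decay of the propagators along paths in $T_v$ connecting the different $\bs y_w$'s, can be consistently repackaged into a single decay factor $e^{-\frac{c_0}{12}2^{h_v}\delta(\bs y_v)}$ after paying the $2^{2[|S^*_v|-1]_+h_v}$ price for the spatial integrations that merge the separate boundary-source clusters into one. Once this is verified, the rest of the proof is a direct translation of the cylindrical argument of \cite{AGG_CMP} to the half-plane, using the propagator estimates \eqref{gdecay} already available in Section \ref{propagatori}.
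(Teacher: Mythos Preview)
Your proposal is correct and follows essentially the same approach as the paper: both adapt \cite[Proposition~3.19]{AGG_CMP} to the half-plane with boundary spin sources, the only new ingredient being the treatment of the \tikzvertex{FSSpinEP} endpoints. The paper's proof is in fact more terse than your outline---it simply notes that the sole modification to the argument of \cite[Proposition~3.19]{AGG_CMP} is replacing the energy-observable endpoint estimate \cite[Eq.~(3.3.37)]{AGG_CMP} by the identity $\sum_{z:P_v\to\bar\bH}e^{\alpha c_02^{h_{v_0}}\delta(\bs z_v,\bs y_v)}|K_v(\bs\Psi_v,\bs y_v)|=|Z_{\Bs,h_v-1}|$, valid for any $\alpha>0$ because $K_v$ is a delta at $z=y$.
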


Aside from noting that the norm used here coincides with the one appearing in \cite[Proposition~3.19]{AGG_CMP} for the kernel in question, the only difference between the proof of 
\cite[Proposition~3.19]{AGG_CMP} and that of Proposition \ref{prop1} is due to the presence of the endpoints with external fields (which are now \tikzvertex{FSSpinEP}), whose
value $K_v$ is defined in the first line of \eqref{eq:K_gray_def}. These can be treated analogously to the energy observable endpoints in the proof 
of \cite[Proposition~3.19]{AGG_CMP}, using
$$\sum_{z: P_v \to \bar{\bH}}  e^{\alpha c_02^{h_{v_0}}\delta(\bs{z}_v,\bs{y}_v)}|K_v(\bs{\Psi}_v,\bs{y}_v)|=  | Z_{\Bs,h_v-1}|$$
for any $\alpha>0$ in place of \cite[Eq.~(3.3.37)]{AGG_CMP} and the previous inequality.

We now make a rearrangement similar to the one leading to \cite[Proposition~3.23]{AGG_CMP}, bearing in mind however the different definition of the scaling dimension $d_v$ for kernels with boundary spin sources.
\begin{proposition}\label{prop2} Under the same assumptions as Proposition~\ref{prop1},
\begin{equation}\label{eq:WL_bigbound}
	\begin{aligned}
	& \|B_\bH[\tau,\ul P,\ul T,\ul D](\bs y_{v_0})\|_{h_{v_0}}  \le     
		\frac{C^{m_{v_0}+\sum_{v\in V_e(\tau)}|P_v|}}{|S_{v_0}|!}
		\, 2^{h_{v_0}d_{v_0} }    
		  \Big(
			\prod_{v \in V(\tau)\setminus\set{v_0}}\frac1{|S_{v}|!}
		2^{(h_v-h_{v'}) d_v}  \Big) 
		\\
		&\qquad \times 
		\Big(\prod_{\substack{v\in V(\tau):\\
		    m_v\ge 1}} 2^{[|S^*_v|-1]_+h_v} e^{-{\frac{c_0}{12}}2^{h_v}\delta(\bs y_v)}\Big)
				 \prod_{\substack{v \in V_e(\tau)}}\piecewise{
				 |Z_{\Bs,h_v-1}| & \text{ if $m_v=1$ }\\
		|\lambda|^{\max\{1,\kappa |P_v|\}} (C_{\theta_v})^{\mathds 1(h_v<0)} 2^{\theta_v h_v}  & \text{ if $m_v=0$}}
		\,,
	\end{aligned}
\end{equation}
with $d_v = 2 - \tfrac12 |P_v| - \|\bs D_v \|_1  - \tfrac12 m_v -E_v$ as in Eq.~\eqref{eq:scaling_d}.
\end{proposition}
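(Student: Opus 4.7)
The statement is a pure algebraic rearrangement of the bound \eqref{BPD} from Proposition~\ref{prop1}, in direct analogy with the passage from \cite[Prop.~3.19]{AGG_CMP} to \cite[Prop.~3.23]{AGG_CMP}. The core idea is to regroup the scale factors $2^{c_v h_v}$ appearing in \eqref{BPD} through the telescoping identity
\begin{equation*}
  \sum_{v\in V(\tau)} c_v\, h_v
  \;=\;
  h_{v_0}\!\!\sum_{w\in V(\tau)}\! c_w
  \;+\;\sum_{v\in V(\tau)\setminus\{v_0\}} (h_v-h_{v'})\!\!\sum_{w\ge v}\! c_w,
\end{equation*}
with the local exponents $c_v$ chosen so that the partial sums $\sum_{w\ge v} c_w$ reproduce the scaling dimension $d_v$ of \eqref{eq:scaling_d}.

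The first step is to exploit the parent/child identities
\[
|P_v|+|Q_v|=\sum_{w\in S_v}|P_w|,\qquad
\|\bs D_v\|_1+\sum_{w\in S_v}\|\bs D_w|_{Q_v}\|_1 \;-\; R_v \;=\; \sum_{w\in S_v}\|\bs D_w\|_1,
\]
together with the conservation $m_v=\sum_{w\in S_v}m_w$ and the definition $R_v=\max\{d_v+1,0\}$ from \eqref{Rvdv}, to rewrite the exponent $\tfrac{|Q_v|}{2}+\sum_{w}\|\bs D_w|_{Q_v}\|_1-R_v+2(1-|S_v|)$ of the first product in \eqref{BPD} in a form which isolates the contribution $-d_v+\sum_{w\in S_v}d_w$ at each dotted vertex. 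The contribution $\prod_{v:m_v=0}2^{-E_v(h_v-h_{v'})}$ is already in scale-difference form and directly feeds the $-E_v$ summand in $d_v$; the endpoint factor $2^{(2-|P_v|/2-\|\bs D_v\|_1)h_v}$ for endpoints with $m_v=0$ likewise supplies the $2-|P_v|/2-\|\bs D_v\|_1$ part of $d_v$ at the highest scale of each branch. Applying the telescoping identity then yields exactly $h_{v_0}d_{v_0}+\sum_{v\ne v_0}(h_v-h_{v'})d_v$ with $d_v$ as in \eqref{eq:scaling_d}.

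Next I would handle the remaining pieces. The factor $\prod_{v:m_v\ge 1}2^{2[|S_v^*|-1]_+h_v}$ in \eqref{BPD} is split as $2^{[|S_v^*|-1]_+h_v}\cdot 2^{[|S_v^*|-1]_+h_v}$: one half is absorbed into the regrouping above, precisely compensating for the mismatch produced by the $-m_v/2$ contributions along the $|S_v^*|$ sub-branches that carry boundary-source endpoints (this is the point where hereditariness of the whiteness label and the conservation $m_v=\sum_{w\in S_v}m_w$ combine), while the other half is left over and matches the corresponding factor in \eqref{eq:WL_bigbound}. The exponential factor $e^{-\frac{c_0}{12}2^{h_v}\delta(\bs y_v)}$ is unchanged. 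For endpoints with $m_v=1$ the factor $|Z_{\Bs,h_v-1}|$ is kept as in \eqref{eq:K_gray_def}; for endpoints with $m_v=0$, the surplus factor $2^{\theta h_v}$ surviving after the $d_v$-bookkeeping is renamed $C_{\theta_v}^{\mathds 1(h_v<0)}2^{\theta_v h_v}$ with $\theta_v=\theta$, which is a harmless relabelling since $2^{\theta h_v}\le C_\theta\cdot 2^{\theta h_v}\mathds 1(h_v<0)+2^{\theta\cdot 2}\mathds 1(h_v\ge 0)$.

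The anticipated main obstacle is the combinatorial bookkeeping in the second step, namely verifying that the $2[|S_v^*|-1]_+$ factor really does split as claimed when one telescopes the $-m_v/2$ contributions along the sub-branches carrying boundary sources. This is the place where the odd-order nature of the observable (a single Grassmann field per boundary spin, cf.\ Remark~\ref{sd}) enters nontrivially, distinguishing the present argument from that of \cite[Prop.~3.23]{AGG_CMP} where the observables were even-order. Aside from this point, no new algebraic input beyond the identities recalled above is needed, and the proof reduces to the same term-by-term verification carried out in \cite{AGG_CMP}.
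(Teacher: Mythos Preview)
Your approach is correct and is essentially the same as the paper's, which also reduces the result to an exponent identity (their Eq.~\eqref{D1}) proved by exactly the telescoping and parent/child relations you list. Regarding your flagged obstacle: the absorbed copy satisfies $\sum_{v:\,m_v\ge1}[|S_v^*|-1]_+ h_v=-h_{v_0}-\sum_{v\ne v_0:\,m_v\ge1}(h_v-h_{v'})+\sum_{v\in V_e:\,m_v=1}h_v$, so it supplies the $-E_v$ part of $d_v$ at the $m_v\ge1$ vertices rather than the $-m_v/2$ part; the $-m_v/2$ contributions then come from telescoping the residual endpoint term $-\tfrac12\sum_{v\in V_e:\,m_v=1}h_v$ (left over after the \eqref{rewAGG}-type rewriting at $m_v=1$ endpoints, where $|P_v|=1$, $\bs D_v=\bs0$) via $m_{v_0}h_{v_0}+\sum_{v\ne v_0}m_v(h_v-h_{v'})=\sum_{v\in V_e:\,m_v=1}h_v$.
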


\begin{proof}
	Comparing \cref{BPD,eq:WL_bigbound}, 
	evidently the result follows immediately if we show that
	\begin{equation}\label{D1}
		\begin{split}
			&
			\sum_{v \in V(\tau)\setminus V_e(\tau)}
			\left(\frac{|Q_v|}2+ \sum_{w\in S_v}\| \bs D_w |_{Q_v} \|_1 - R_v + 2(1 -  \left|S_v\right|)\right)h_v
			-
			\sum_{\substack{v\in V(\tau)\setminus \set{v_0}:\\ m_v=0}}
			 E_v (h_v-h_{v'})
			\\ & \quad
			+
			\sum_{\substack{v\in V(\tau):\\  m_v\ge 1}}
			2[|S^*_v|-1]_+ h_v
			+
			\sum_{\substack{v \in V_e(\tau) \\ m_v = 0}}
			\left(2-\frac{|P_v|}{2}-\|\bs D_v\|_1\right)h_v 
			\\ & =
			d_{v_0} h_{v_0}			
			+
			\sum_{v \in V(\tau)\setminus\set{v_0}}
			d_v(h_v-h_{v'}) 
			+
			\sum_{\substack{v\in V(\tau):\\ m_v\ge 1}}
			[|S^*_v|-1]_+ h_v
			.
		\end{split}
	\end{equation}
Proceeding as described in Eqs.~\cite[Eqs.~(4.4.17)-(4.4.21)]{AGG_AHP} we first note that
\begin{equation}\begin{aligned}\label{rewAGG}
&\sum_{v \in V(\tau)\setminus V_e(\tau)}
\left(\frac{|Q_v|}2+ \sum_{w\in S_v}\| \bs D_w |_{Q_v} \|_1 - R_v + 2(1 -  \left|S_v\right|)\right)h_v \\
&\qquad =\left(2-\frac{|P_{v_0}|}{2}- \|\bs D_{v_0}\|_1\right)h_{v_0} + \sum_{v \in V(\tau)\setminus \set{v_0}} {\left(2-\frac{|P_{v}|}{2}- \|\bs D_{v}\|_1\right)(h_{v}-h_{v'})}\\
&\qquad - \sum_{v \in V_e(\tau)} \left( 2 - \frac{|P_v|}{2} - \|\bs D_v \|_1 \right) h_{v}
\end{aligned}
\end{equation}
and, using also the fact that $m_{v_0} h_{v_0}+	\sum_{v \in V(\tau)\setminus \set{v_0}} m_v (h_{v}-h_{v'})=\sum_{\substack{v \in V_e(\tau) \\ m_v = 1}}  h_{v}$, 
\begin{equation}\label{following}
	\sum_{\substack{v \in V(\tau) \\ m_v \ge 1}}[|S_v^*|-1]_+ h_v
	=\sum_{\substack{v \in V(\tau)\setminus V_e(\tau) \\ m_v \ge 1}}(|S_v^*|-1) h_v
	=-h_{v_0}-	\sum_{\substack{v \in V(\tau)\setminus\{v_0\} \\ m_v \ge 1}} (h_v - h_{v'})+\sum_{\substack{v \in V_e(\tau) \\ m_v = 1}} h_v	,\end{equation}
so the l.h.s. of \cref{D1} reduces to
\begin{equation}\label{rewrit}
	\begin{split}
&\left(1-\frac{|P_{v_0}|}{2}- \|\bs D_{v_0}\|_1\right)h_{v_0} \\
&+ \sum_{v \in V(\tau)\setminus \set{v_0}} \left(2-\frac{|P_{v}|}{2}- \|\bs D_{v}\|_1 - E_v\mathds 1(m_v=0)-\mathds 1(m_v=1) \right)(h_{v}-h_{v'})	\\ 
&-\sum_{\substack{v \in V_e(\tau) \\ m_v = 1}} \left(1 - \frac{|P_v|}{2} - \|\bs D_v \|_1 \right) h_{v} + \sum_{\substack{v\in V(\tau):\\ m_v\ge 1}}
			[|S^*_v|-1]_+ h_v.\end{split}\end{equation}			
Recall that the endpoints with $m_v=1$ have $|P_v|=1$ and $\bs D_v=\bs 0$, so the first term in the last line is equal to $-\frac12\sum_{\substack{v \in V_e(\tau) \\ m_v = 1}}h_{v}$. Using again the identity in the line preceding \eqref{following}, this can be rewritten as $-\frac12{m_{v_0}}h_{v_0}-\frac12	\sum_{v \in V(\tau)\setminus \set{v_0}} m_v (h_{v}-h_{v'})$. Plugging this back into \eqref{rewrit}, and recalling that $E_v=1$ for all vertices with $m_v\ge 1$, including $v_0$, we get the desired identity, \eqref{D1}, with $d_v = 2 - \tfrac12 |P_v| - \|\bs D_v \|_1  - \tfrac12 m_v -E_v$.\end{proof}

\medskip

\cref{prop2} can be used to bound the sums in \eqref{Btau}, in analogy with \cite[Lemma~4.8]{AGG_AHP}, as well as the beta function controlling the flow of the running coupling constants, in particular $Z_{\Bs, h}$, as discussed in the following subsection.  

\subsubsection{Beta Function Equation and analyticity of $Z_\Bs$.}\label{beta}

The definition of the running coupling constant $Z_{\Bs,h}$ in Eq.~\eqref{ZbsL}, combined with the GN tree expansion for the boundary spin source kernels derived in \cref{T} (cf.\ first expression in Eq.~\eqref{Btau}), implies that the running coupling constants $\underline{Z}_{\Bs} := \set{{Z}_{\Bs,h}}_{h \le -1}$  satisfy the following equation, for all $h \le -1$ and for any $y  \in \partial \bH$: 
\begin{equation} \label{Ztau}Z_{\Bs,h} = \sum_{\substack{\tau \in \cT_{\Bs}^{(h)}:\\m_{v_0}=1}}  \sum_{z'\in {\bH}}
	  (B_\bH[\tau])_{1,0,1}( (-,0,z'), y)\,,
\end{equation}
where, in view of the iterative definition of $B_\bH[\tau]$, see Eqs.~\eqref{Btau}-\eqref{eq:W_Lambda_gray_iter}, the right hand side of this equation depends on the restriction of the sequence $\underline{Z}_{\Bs}$ to the scale indices larger than $h$, i.e., upon $\set{{Z}_{\Bs,k}}_{k > h}$. Therefore, it is a recursive equation for the components of $\underline{Z}_\Bs$: given the initial value $Z_{\spin,0} = Z^{-1/2}$ we can reconstruct the values at all the lower scales. We shall see that the result is bounded uniformly in the scale index and 
$Z_{\Bs,h}$ admits a limit as $h \to -\infty$.

Note that in r.h.s.\ of \eqref{Ztau}, the sum is over trees with exactly one \tikzvertex{FSSpinEP} endpoint: among these we can distinguish the trees that have the \tikzvertex{FSSpinEP} endpoint as the only endpoint and the trees that have, in addition to the \tikzvertex{FSSpinEP} endpoint, also other endpoints of other types. In the first case (\tikzvertex{FSSpinEP} as the only endpoint), there is only the `trivial' tree with (undotted) root on scale $h+1$ and the  \tikzvertex{FSSpinEP} endpoint on scale $h+2$ which gives contribution $Z_{\Bs,h+1}$ (the same tree in which the root is on scale $h+1$ and the \tikzvertex{FSSpinEP} endpoint on scale $h+3$ or higher, gives zero contribution since an operator $\cR$ acts on scale $h+2$, and so was excluded from $\cT_\spin$ in \cref{T}). Then, we can write  Eq.~\eqref{Ztau}  as
\begin{equation} \label{Bflow} Z_{\Bs,h} = Z_{\Bs,h+1} + B_{h+1}[\underline{Z}_\Bs]\,, \end{equation}
where 
\begin{equation}
	B_{h+1}[\underline{Z}_\Bs]
	:=
	 \sum^*_{\substack{\tau \in \cT_{\Bs}^{(h)}:\\m_{v_0}=1}}  \sum_{z'\in {\bH}}
	 (B_\bH[\tau])_{1,0,1}( (-,0,z'), y)\,,
	\label{eq:beta_def}
\end{equation}
where the $*$ on the sum indicates the constraint that the sum is over trees with a dotted root.
The function $B_{h+1}[\underline{Z}_\Bs]$ is called \textit{beta function} and Eq.~\eqref{Bflow} is called the \textit{beta function flow equation} for the running coupling constants $\underline{Z}_\Bs$. Eq.\eqref{eq:beta_def} can be bounded via \cref{prop2}: proceeding as in  \cite[Lemma~4.8]{AGG_AHP} (and \cite[Remark~3.24]{AGG_CMP}), we see that under the same assumptions as Proposition~\ref{prop2}, for $h \le -1$ and any $\theta \in (0,1)$, there exists $C_{\theta}>0$ such that 
\begin{equation}\label{shortmemory}
\begin{aligned}
2^{-\theta h}\sum^{*}_{\substack{\tau \in \mathcal{T}^{(h)}_{\Bs}:\\ m_{v_0}= 1}} \sum_{\substack{\underline{P}\in \mathcal{P}(\tau):\\ |P_{v_0}|=1}}\sum_{\substack{\underline{T}\in \mathcal{S}(\tau,\underline{P})}} \sum_{\substack{\underline{D}\in \mathcal{D}(\tau\,\underline{P}):\\ ||\bs D_{v_0}||_1=0}}& || (B_\bH[ \tau,\underline{P},\underline{T},\underline{D}])_{1,0,1}||_{h+1}\\ & \qquad   \le C_\theta |\lambda| \left( \max_{h < h'\le -1}  Z_{\Bs,h'}\right) \,,
\end{aligned}
\end{equation}
where  the norm in the left hand side is the one defined in \eqref{normHH} (note that since $m_{v_0}=1$ there is no dependence on the boundary spin position $y_{v_0} = ((y_{v_0})_1,1)$, due to translational invariance in the horizontal direction). Note that the bound proportional to $|\lambda|$ is due to the fact that the trees contributing to the left hand side must have at least one endpoint associated with the interaction term in addition to the boundary spin endpoint. 

Using the bound \cref{shortmemory} in \eqref{Bflow}-\eqref{eq:beta_def}, we first see iteratively that $|Z_{\spin,h}|$ remains bounded,
and then also that for $k < h \le 0$
\begin{equation}
	\left|Z_{\Bs,k} - Z_{\Bs,h} \right|
	\le
	\sum_{j = k+1}^h \left| B_j [\underline{Z}_\Bs]\right|
	\le C_\theta'|\lambda| 2^{\theta h},
	\label{eq:Zs_cauchy}
\end{equation}
for some $C_\theta'>0$, so that $ \underline{Z}_\Bs$ is a Cauchy sequence and the limit $Z_{\Bs} := Z_{\Bs}(\lambda) = \lim_{h \to -\infty} Z_{\Bs,h}$ exists with
\begin{equation}
	|Z_{\Bs,h} - Z_{\Bs}|
	\le C_\theta''
	  |\lambda| 2^{\theta h}\,,
	\label{eq:Zs_finite_scale}
\end{equation}
for any $\theta\in(0,1)$ and some $C_\theta''>0$. Note also that, from the analyticity of $Z=Z(\lambda)$ (see  \cite[Proposition 4.11]{AGG_AHP}) and that of $B_{j}[\underline{Z}_\Bs]$, thought of as a function of $\lambda$ (which, in turn, follows from the fact that the beta function itself is defined via an absolutely convergent sum over trees, whose values are by definition analytic function of $\lambda$ as well), 
the limiting, renormalized, coupling $Z_\Bs$ is an analytic function of $\lambda$ in a sufficiently small complex circle centered at the origin. 

\section{Correlation functions}\label{sec:corr}

In this section we obtain the expression for the boundary spin correlation functions and conclude the proof of \cref{prop:main}.
The necessary calculations are similar to \cite[Section~4]{AGG_CMP};
note that some of the most prominent changes regard the fact that some of the effective boundary spin source functions have an odd number of massless fields, which loosens an important constraint on the terms appearing in the correlation functions.

We consider $\bs{y} = (y_1, \ldots, y_{m}) \in (\partial \bH)^m$ distinct right-to-left ordered boundary positions corresponding to the locations of the spins $\sigma_{y_1}, \ldots ,\sigma_{y_{m}}$, $m \in 2 \mathbb{N}$, which we intend to compute the correlations of. We let 
\begin{equation}\label{evspinQtrunc}\braket{\s_{y_1}; \cdots; \s_{y_{m}} }_{\l,t;\L} =  \frac{\partial^{m}}{\partial  \varphi_{y_1} \cdots \partial  \varphi_{y_{m}}}
\, \log \Xi_{\l,t;\L} (\bs{\varphi})  \Big|_{\bs{\varphi}=\0}
\end{equation}
be the {\it truncated} spin correlations in finite volume, and as usual $\braket{\s_{y_1}; \cdots; \s_{y_{m}} }_{\l,t;\bH}$ the corresponding half-plane limit. Note that, due 
to the Grassmann nature of the $\bs{\varphi}$ field, Eq.\eqref{evspinQtrunc} defines {\it fermionic} truncated correlations. For example, recalling that in our case the expectation of an 
odd number of spin observables vanishes:  $\braket{\s_{y_1};\s_{y_{2}} }_{\l,t;\bH}= \braket{\s_{y_1}\s_{y_{2}} }_{\l,t;\bH}$, 
\begin{equation}\begin{split} \braket{\s_{y_1};\s_{y_{2}};\s_{y_{3}};\s_{y_{4}}}_{\l,t;\bH}&=  \braket{\s_{y_1}\s_{y_{2}}\s_{y_{3}}\s_{y_{4}}}_{\l,t;\bH}- \braket{\s_{y_1}\s_{y_{2}}}_{\l,t;\bH}
 \braket{\s_{y_3}\s_{y_{4}}}_{\l,t;\bH}\\
 & -\braket{\s_{y_1}\s_{y_{4}}}_{\l,t;\bH} \braket{\s_{y_2}\s_{y_{3}}}_{\l,t;\bH}+\braket{\s_{y_1}\s_{y_{3}}}_{\l,t;\bH}
 \braket{\s_{y_2}\s_{y_{4}}}_{\l,t;\bH},\end{split}\label{eq:4.2}\end{equation}
 and similarly for higher order truncated correlations (notice the plus sign in front of the last term). 
Of course, simple correlations can iteratively be expressed in terms of truncated correlations and vice versa (starting from the two-spins case, for which $\braket{\s_{y_1};\s_{y_{2}} }_{\l,t;\bH}= \braket{\s_{y_1}\s_{y_{2}} }_{\l,t;\bH}$, next inverting \eqref{eq:4.2}, then proceeding iteratively in $m$ for the case of $m$-spins correlations, $m\ge 4$), and in order to prove the theorem we will use this correspondence.

Recall that $\log\Xi_{\l,t;\Lambda}(\bs{\varphi})$ in \eqref{evspinQtrunc} can be expressed via \eqref{eventually}, so that, in the half-plane limit, 
\begin{equation}\braket{\s_{y_1}; \cdots; \s_{y_{m}} }_{\l,t;\bH}=\sum_{h\le -1} \frac{\partial^{m}}{\partial  \varphi_{y_1} \cdots \partial  \varphi_{y_{m}}}
\, \cW_\bH^{(h)}(\bs{\varphi}) \Big|_{\bs{\varphi}=\0},\end{equation}
where $\cW^{(h)}_\bH(\bs{\varphi})$ can be expressed as in Eq.~\eqref{wh-1} with the kernel $w_\bH^{(h)}$ expressed via the GN tree expansion \eqref{wtree}. Assembling all this, we have
\begin{equation}
	\langle \sigma_{y_1}; \cdots; \sigma_{y_{m}} \rangle_{\lambda,t;\bH}
	=(-1)^{m/2}
	\sum_{\pi} (-1)^\pi\sum_{h\le-1}
	\sum_{\substack{\tau \in \cT^{(h)}_{\Bs}: \\ m_{v_0} = m}}^*w_\bH[\tau](\pi(\bs y))
	\label{corr}
\end{equation}
where the first sum runs over the $m!$ permutations $\pi$ of the $m-$tuple $\bs{y}$, $(-1)^{\pi}$ is the corresponding sign, and the $*$ on the last sum indicates the constraint that the root $v_0$ is dotted. 
This is the analogue of \cite[Eq.~(4.1)]{AGG_CMP}. However, comparing with that equation, note that in \eqref{corr} there is no term corresponding to what in \cite[Eq.~(4.1)]{AGG_CMP} was denoted $w_\Lambda(\pi(\bs y))$, the reason being that the interaction term $\cV_\L^{\rm int}$ in Eq.~\eqref{XiL} is independent of $\bs{\varphi}$.

In order to prove \cref{prop:main}, we need to compare \eqref{corr} with the rescaled, critical, integrable version of $\langle \sigma_{y_1}; \cdots; \sigma_{y_{m}} \rangle_{\lambda,t;\bH}$, namely  with
\begin{equation}\label{corrsl}
	(Z_{\Bs})^{m}
	\langle \sigma_{y_1};\cdots; \sigma_{y_{m}} \rangle_{0,t^*; \bH}\,,
\end{equation}
where $Z_{\Bs}=Z_{\Bs}(\l)$ is the limiting value of $Z_{\Bs,h}$ (see before \eqref{eq:Zs_finite_scale}): in particular, we will prove that the difference 
$\langle \sigma_{y_1}; \cdots; \sigma_{y_{m}} \rangle_{\lambda,t;  {\bH}}-  (Z_{\Bs})^{m}
\langle \sigma_{y_1};\cdots ;\sigma_{y_{m}} \rangle_{0,t^*;\bH}$ 
can be bounded as the remainder of Eq.~\eqref{10b}. To compare the two correlations, we first express \eqref{corrsl} as a sum like the one in the r.h.s.\ of \eqref{corr}: in fact this can be 
computed from the derivatives of $\log \Xi^\qf_{0,t^*;\bH}$, where $\Xi^\qf_{0,t^*;\bH}  \propto \exp \Big(\sum_{h\le -1} \cW_{\bH,\qf}^{(h)}(\bs{\varphi})\Big)$ can be obtained with the same procedure with which the half-plane limit of Eq.~\eqref{eventually} was obtained, but starting from a `quasi-free' generating function $\Xi^\qf_{0, t^*;\L} \propto \int P_c^* (\cD\phi)P_m^* (\cD\xi) e^{Z_{\Bs}\sum_{z \in \partial^l \L }\phi_{-,z}\varphi_{z}}$ instead of the one in Eq.~\eqref{eq:startfrom}. Then we can obtain  $\cW_{\bH,\qf}^{(h)}(\bs{\varphi})=\sum_{\substack{\bs{y} \in \mathcal{Q}(\partial \bH)}}w_{\qf}^{(h)}(\bs{y}) {\varphi}(\bs{y})$ (cf.\ Eq.~\eqref{wh-1}) and express the kernel $w_{\qf}^{(h)}$ via the GN tree expansion analogous to the one in \cref{wtree} 
with $\lambda =0$, which is extremely simple: the only endpoints are of type \tikzvertex{FSSpinEP} with 
$K_{v} (\bs{\Psi}_{v},\bs{y}_{v}) =  Z_{\Bs}\delta_{\bs{\Psi}_{v},(-,0,y)}\delta_{\bs{y}_v,y}$.
Since the \tikzvertex{FSSpinEP} endpoint $v$ has $|P_v|=1$, without other types of endpoints only one tree is possible with a dotted root on any given scale, the one that has two \tikzvertex{FSSpinEP} endpoints, $v_1$ and $v_2$, on scale $h_{v_1}=h_{v_2}=h+2$, and root  $v_0$ on scale $h_{v_0}=h+1$, which automatically has $P_{v_0}=0$. Let $\tau_{2,h}$ be this tree: the expression in \eqref{corrsl} for $m=2$ then becomes   
\begin{equation} \begin{aligned}\label{wfreetree}	
&(Z_{\Bs})^{2}
	\langle \sigma_{y_1} ;\sigma_{y_{2}} \rangle_{0,t^*;\bH} = - \sum_{\pi}(-1)^\pi \sum_{h\le-1}
 w_\qf [\tau_{2,h}](\pi(y_1,y_2))\,,\end{aligned}
\end{equation}
where $w_\qf[\tau_{2,h}]$ can be obtained via the same recursive definition of the second expression in Eq.~\eqref{wtree}, computed at $\tau = \tau_{2,h}$, i.e.\ $w_\qf[\tau_{2,h}] =  
B_\qf[\tau_{2,h}, \ul P_2, T_2, \ul 0]$, where $\ul P_2$ is the unique element of $\cP(\tau_{2,h})$ giving a nonvanishing contribution, and $T_2$ denotes the unique element of 
$\cS(\tau,\ul P_2)$; more concretely, 
$w_\qf[\tau_{2,h}] = -(Z_{\Bs})^{2} g^{(h)}_{\bH,--}(y_1,y_2)$, with $g^{(h)}_{\bH,--}(y_1,y_2)$ as in \eqref{g0h}-\eqref{gceta}. On the other hand, if $m>2$ then
the truncated correlations \eqref{corrsl} vanish, 
which is equivalent to saying that the standard (non-truncated) boundary spin correlations at $\lambda=0$ have a Pfaffian structure, as well known in the theory of the nearest neighbor 
2D Ising model. 

In view of \eqref{corr}, \eqref{wfreetree} and these considerations, we rewrite $$\langle \sigma_{y_1}; \cdots ;\sigma_{y_{m}} \rangle_{\lambda,t;\bH} -(Z_{\Bs})^m
\langle \sigma_{y_1};\cdots; \sigma_{y_{m}} \rangle_{0,t^*;\bH}$$ for $m=2$, as
\begin{equation}\begin{aligned} \label{s1s2}
	&\langle \sigma_{y_1}  ;  \sigma_{y_{2}} \rangle_{\lambda,t;\bH} - (Z_{\Bs})^2
	\langle  \sigma_{y_1}   ; \sigma_{y_{2}} \rangle_{0,t^*;\bH}  =-
	\sum_{\pi}(-1)^\pi\\
	&\times\Big\{
	\sum_{h\le-1} 
	\Big(
		w_\bH [\tau_{2,h}](\pi( y_1,y_2))
		- 
		w_\qf[\tau_{2,h}](\pi( y_1,y_2))
	+  \sum^*_{\substack{\tau \in \cT_\Bs^{(h)}:\ \tau \ne \tau_{2,h},\\ m_{v_0}=2}}
	w_\bH[\tau](\pi( y_1,y_2))	\Big)\Big\}\,,
		\end{aligned}
\end{equation}
 and, for $m > 2$, as
\begin{equation}\begin{aligned} \label{s1sm}
	&\langle \sigma_{y_1}; \ldots  ;\sigma_{y_{m}} \rangle_{\lambda,t;\bH}
	 =(-1)^{m/2}
	\sum_{\pi}(-1)^\pi
		\sum_{h\le-1}\sum^*_{\substack{\tau \in \cT_\Bs^{(h)} :\\ m_{v_0}=m}}
	w_\bH[\tau](\pi(\bs y))\,.
\end{aligned}\end{equation}
Note that both the sum over trees with $m_{v_0} = 2$ other than $\tau_{2,h}$ (last term in the r.h.s.\ of \eqref{s1s2}) and the sum over trees with $m_{v_0}=m > 2$ (in the r.h.s.\ of \eqref{s1sm}) involve only trees that have at least one other endpoint of another type other than \tikzvertex{FSSpinEP}, i.e.\ trees with $|V_e(\tau)| > m$. Then we let
\begin{equation} \label{R1}\begin{split}
	R_1(y_1,y_2):=& \sum_{h\le-1} 
	\Big(w_\bH [\tau_{2,h}](y_1,y_2)
		- 
		w_\qf[\tau_{2,h}](y_1,y_2)
	\Big)\\
	\equiv & \sum_{h\le -1}\big((Z_{\Bs,h})^2-(Z_\Bs)^2\big)g^{(h)}_{\bH,--}(y_1,y_2),\end{split}
\end{equation}
where for the second identity we used the very definitions of $w_\bH [\tau_{2,h}](y_1,y_2)$ and $w_\qf[\tau_{2,h}](y_1,y_2)$, and 
\begin{equation}\begin{split} \label{R2}
	R_2(\bs y):= & \sum_{h\le -1}\begin{cases} \sum^*_{\substack{\tau\in \cT_\Bs^{(h)}:\\ m_{v_0}=2}}
	w_\bH[\tau]( y_1,y_2) \mathds 1 (\tau\ne\tau_{2,h}) & \text{if $m=2$}\\ \sum^*_{\substack{\tau \in \cT_\Bs^{(h)} :\\ m_{v_0}=m}}
	w_\bH[\tau](\bs y) &  \text{if $m>2$}\end{cases}\\
	=&	\sum_{h\le-1}\sum^*_{\substack{\tau \in \cT_\Bs^{(h)} :\ m_{v_0}=m,\\ |V_e(\tau)|>m}}
	w_\bH[\tau](\bs y)\,,
	\end{split}
\end{equation}
so that 
\begin{equation}\label{R2primo}\begin{split} & \langle \sigma_{y_1}; \cdots ;\sigma_{y_{m}} \rangle_{\lambda,t;\bH}-  (Z_{\Bs})^m
	\langle \sigma_{y_1};\cdots; \sigma_{y_{m}} \rangle_{0,t^*;\bH}\\
	&\qquad \qquad =(-1)^{m/2}\sum_\pi(-1)^\pi\Big(R_1(\pi(\bs y))\mathds 1(m=2)+R_2(\pi(\bs y))\Big)\end{split}\end{equation}
The remainder term $R_1(y_1,y_2)$ can be readily bounded from its explicit expression in the second line of \eqref{R1}: in fact, using the first of \eqref{gdecay} as well as the bound \eqref{eq:Zs_finite_scale} we find that, for any $\theta\in(0,1)$ and some $C_\theta,C_\theta'>0$, 
\begin{equation} |R_1(y_1,y_2)|\le C_\theta\sum_{h\le -1}2^{h(1+\theta)}e^{-c_0 2^h|y_1-y_2|}\le C_\theta'\, |y_1-y_2|^{-1-\theta}.\label{R2secondo}\end{equation}
On the other hand, $R_2(\bs{y})$ can be bounded using the same approach as in \cite[Sections~4.1 and~4.2]{AGG_CMP}, with some modifications as follows.

\medskip

\textit{The remainder term $R_2$.} Compared to \cite[Section~4.2]{AGG_CMP}, here the term $R_2$ is different because the parity constraints on $\ul P$ are different; also in some respects it is more similar to the subdominant terms studied in \cite[Section~4.2]{GGM} since, as we will see, in the presence of boundary spin sources there are no non-endpoint vertices with $d_v \ge 0$.
We can bound $R_2$ of \eqref{R2} as
\begin{equation}\begin{aligned}
| R_2( \bs{y})| \le& \sum_{h \le-1} \sum_{\substack{\tau \in \mathcal{T}^{(h)}_{\Bs}:\\ m_{v_0}=m, |V_e(\tau)| > m}}^* e^{-\frac{c_0}{2} 2^{h_{v_0}} \delta (\bs{y})} 
 \sum_{\substack{\ul{P}\in \mathcal{P}(\tau):\\ P_{v_0} = \emptyset}} \sum_{\ul{T} \in \mathcal{S}(\tau,\ul{P})}  \sum_{\ul{D} \in \mathcal{D}(\tau,\ul{P})}  \| B_\bH[\tau, \ul P, \ul T, \ul D](\bs{y})\|_{h_{v_0}}
\end{aligned}
\end{equation}
where in the r.h.s.\ $h_{v_0}=h+1$ and the norm is defined as in \eqref{normHH}: $\| B_\bH[\tau, \ul P, \ul T, \ul D](\bs{y})\|_{h_{v_0}}$ can be bounded by Eq.~\eqref{eq:WL_bigbound} (noting that if $|P_{v_0}| =0$ also $|D_{v_0}|=0$ and that if $m_{v_0}>0$ necessarily $E_{v_0}=1$). 
Bounding the sum over $\ul T$ using  \cite[Eq.~(4.4.27)]{AGG_AHP}, 
\begin{equation}\begin{aligned}\label{R22}
| R_2( \bs{y})| \le& \sum_{h \le-1} \sum_{\substack{\tau \in \mathcal{T}^{(h)}_{\Bs}:\\ m_{v_0}=m, |V_e(\tau)| > m}} e^{-\frac{c_0}{2} 2^{h_{v_0}} \delta (\bs{y})} %
 \sum_{\substack{\ul{P}\in \mathcal{P}(\tau):\\ P_{v_0} = \emptyset}}  {C^{\sum_{v\in V_e(\tau)}|P_v|}} 2^{h_{v_0}(1-\frac{m}{2})}\\
&\times \sum_{\ul{D} \in \mathcal{D}(\tau,\ul{P})}   \Big(
	      \prod_{v \in V(\tau)\setminus \set{v_0}}	      2^{(h_v-h_{v'}) d_v}  \Big) \Big(\prod_{\substack{v\in V(\tau):\\
		    m_v\ge 1}} 2^{[|S^*_v|-1]_+h_v} e^{-{\frac{c_0}{12}}2^{h_v}\delta(\bs y_v)}\Big)
		    \\
		    & \times
		      \Big(\prod_{\substack{v \in V_e(\tau):\\ m_v=1}}  |Z_{\Bs,h_v-1}| \Big) \Big(\prod_{\substack{v \in V_e(\tau):\\ m_v=0}} |\lambda|^{\max\{1,\kappa |P_v|\}}2^{\theta h_v} \Big)\,,
      \end{aligned}
\end{equation}
where we recall that $h_{v_0}=h+1$ and in general $C$ may depend upon the choice of $\theta$ (a similar comment holds for the other constants $C',C'', \ldots$ below, 
as well as for the re-definitions of $C$ introduced in the following).
This has the same form as \cite[Eq.(4.20)]{GGM} (with $N=0$) or \cite[Eq.(4.2.3)]{AGG_CMP} (without end-point vertices with $m_v =1$ and $d_v=0$), with the differences that 
the factor $2^{h_{v_0}(2-m)}$ in these equations is replaced here by $2^{h_{v_0}(1-m/2)}$, and that $[|S_v^*| -1]_+$ appears here at exponent without a factor of $2$. Moreover, bearing in mind our current definition in Eq.~\eqref{eq:scaling_d},  $d_v$ is now bounded by $d'_v:= \min \{ 1- \frac{|P_v|}{2} - \frac{m_v}{2}, -1 \}$ (cf.\ first line of \cite[Eq.(4.2.4)]{AGG_CMP}). Once this is done, given $\tau$ and $\ul P$, the sum over $\ul D$ can be trivially bounded by the number of elements of $\mathcal D(\tau,\ul P)$, which is smaller than 
$({\rm const.})^{|V(\tau)|}$ for an appropriate constant, cf. \cite[Remark 3.14]{AGG_CMP}. Next, proceeding as described below \cite[Eq.(4.2.4)]{AGG_CMP}, we let $\tau^*:= \tau^*(\tau)$ be the `fully pruned' minimal subtree\footnote{Compared to the notation used in \cite{GGM}, here we are directly considering fully pruned minimal subtrees, which have only `non-trivial' vertices (i.e., non-endpoint vertices $v$ distinct from the root with $|S^*_v|\ge 2$) in addition to the root and endpoints.} of $\tau$ as shown in \cref{fig:pruning}: this is obtained by removing from $\tau$ all vertices having $m_v = 0$ and all the vertices $v\neq v_0$ with $m_v = m_{w}$ for some $w \in S_w$ (after having removed such a vertex $v$ we connect $w$ and $v'$ in $\tau^*$, and then iterate the procedure if necessary), so that the vertex set of $\tau^*$ is 
$$V(\tau^*):= \set{v_0} \cup \set{ v \in V_e(\tau): m_v=1} \cup \set{ v \in V(\tau):|S^*_v| \ge 2} \subset V(\tau),$$
and $\mathcal{T}^*_{h,h_0^*,m}$ be the set of these trees that have dotted root $v_0$ on scale $h+1$,  the leftmost branching point $v_0^*\in V(\tau^*)$ on scale $h_0^* \ge h+1$ (if $h_0^* = h+1$, $v_0^*=v_0$) and $m$ boundary spin end-points. 
\begin{figure}[h]
\centering
\begin{tikzpicture}[baseline=-0.4em]
\draw[dashed, gray] (0,0)node[below]{$h+1$} -- (0,8)  ;
\draw[dashed, gray] (1,0) -- (1,8) ;
\draw[dashed, gray] (2,0) -- (2,8) ;
\draw[dashed, gray] (3,0)  -- (3,8) ;
\draw[dashed, gray] (4,0) node[below]{$0$} -- (4,8) ;
\draw[dashed, gray] (5,0) node[below]{$1$}-- (5,8) ;
\draw[dashed, gray] (6,0) node[below]{$2$} -- (6,8) ;
\draw[red] (0,4) -- (1,4.5);
\draw[red] (1,4.5)--(2,5)-- (3,5.5);
\draw[red] (2,5) -- (3,4.5) --(4,4)--(5,3.5);
\node[left,red] at (0,4) {$v_0$};
\node[red] at (0,4) {\pgfuseplotmark{o}};
\node[black] at (1,4.5) {\pgfuseplotmark{o}};
\node[red] at  (2,5) {\pgfuseplotmark{o}};
\node[above,red] at (2,5) {$v_1$};
\node[black]  at  (3,4.5) {\pgfuseplotmark{o}};
\draw[red] (0,4)--(1,3.5)--(2,3);
\draw[red](2,3)--(3,2.5)--(4,2);
\draw[black] (4,2)--(5,1.5)--(6,1);
\draw[red] (4,2)--(5,2.5);
\node[black] at  (5,1.5) {\pgfuseplotmark{*}};
\node[black] at  (6,1) {\pgfuseplotmark{*}};
\draw[color=red]   plot[mark=triangle,mark size=3.4pt] (5,2.5);
\node[above,red] at (5,2.5) {$v_6$};
\node[red] at  (2,3) {\pgfuseplotmark{o}};
\node[above,red] at (2,3) {$v_2$};
\node[black] at  (4,2) {\pgfuseplotmark{o}};
\node[above,red] at (3,5.5) {$v_3$};
\draw[color=red]   plot[mark=triangle,mark size=3pt] (3,5.5);
\draw[red] (2,3)--(3,3.5);
 
\draw[black](3,4.5)--(4,5)--(5,5.5)--(6,6);
\node[black] at  (4,5) {\pgfuseplotmark{o}};
\node[black] at  (4,4) {\pgfuseplotmark{o}};
\node[black] at  (5,5.5) {\pgfuseplotmark{o}};
\node[black] at  (6,6) {\pgfuseplotmark{o}};

\draw[color=black]   plot[mark=diamond,mark size=3pt] (5,4.5);
\draw[black] (4,4) -- (5,4.5);

\node[above,red] at (5,3.5) {$v_5$};
\draw[color=red]  plot[mark=triangle,mark size=3pt] (5,3.5) ;
\node[above,red] at (3,3.5) {$v_4$};
\draw[color=red]  plot[mark=triangle,mark size=3pt] (3,3.5) ;
\end{tikzpicture}
\caption{The red highlighted subtree is $\tau^{*}(\tau)$ and $V ( \tau^*) = \set{v_0,v_1,\ldots, v_6}$. Note in particular that $v_{5}$ and $v_{6}$ as endpoints of $\tau$ have $h_v = h_{v'}+1$, while as endpoints of $\tau^*$ they have  $h_v > h_{v'}+1$.} \label{taustar}
\label{fig:pruning}
\end{figure}
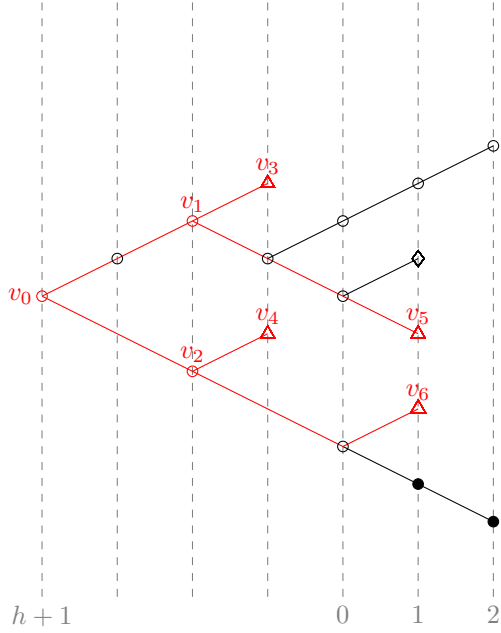

The idea is to rewrite the sum over $\tau$ in \eqref{R22} as a sum over the pruned trees times a sum over $\tau$ compatible with pruning, to first sum over compatible $\tau$ and $\ul P$ at fixed pruning and then over $\tau^*$. Exploiting the fact that some vertices can be considered both vertices of $\tau$ and $\tau^*$, we replace $d'_v$ with 
$d_v' -\tilde{d}_v +\tilde{d}_v$, where  $\tilde{d}_v$ is chosen so as to extract from $d'_v$ a minimal negative part to be associated as $2^{(h_v-h_{v'})\tilde{d}_v}$ with the vertices of $\tau^*$ and, at the same time, make $2^{(h_v-h_{v'})(d'_v-\tilde{d}_v)}$ exponentially small, both in $(h_v-h_{v'})$ and in $|P_v|$, to be associated with the vertices of $\tau$
that are not in $\tau^*$. Concretely, we let $\varepsilon = \frac{(1-\theta)}{3}$, with $\theta$ the same as in \eqref{R22}, and
\begin{equation}\label{dtilde}
	\tilde{d}_v 
	:= \begin{cases}
	\frac12 + \varepsilon - \frac{m_v}{2} & \mbox{ if } m_v > 1\\
	-\theta & \mbox{ if } m_v = 1\\
	 0 & \mbox{ if } m_v =0\,
	\end{cases} 
	 \end{equation}
	 so that $d_v' -\tilde{d}_v \le -\frac{\varepsilon}{2}|P_v|- \theta \delta_{m_v,0}$, which is easily verified from the definitions (cf. \cite[Eq.(4.2.6)]{AGG_CMP}). 
	 Thanks to this bound, the considerations spelled out 
	 after \eqref{R22}, the remark that $|V(\tau)|\le ({\rm const.})\sum_{v\in V_e(\tau)}|P_v|$, and recalling that $ |Z_{\Bs,h_v-1}| \le 1+({\rm const.})|\lambda|$ for an appropriate constant independent of $h$, we find that \eqref{R22} can be further bounded as
\begin{equation}\begin{aligned}\label{R222}
| R_2( \bs{y})| & \le  \sum_{h \le-1}2^{h(1-\frac{m}2)}  \sum_{h < h_0^* \le 0} \sum_{\tau^*\in\mathcal{T}^*_{h,h_0^*,m}}
 \Big(\prod_{v \in V(\tau^*)\setminus \set{v_0}}2^{(h_v-h_{v'})\tilde d_v}  \Big)\\
 &\times \Big(\prod_{v\in V(\tau^*)} 2^{[|S_v|-1]_+h_v} e^{-{\frac{c_0}{12}}2^{h_v}\delta(\bs y_v)}\Big)\sum^*_{\tau} 
 \sum_{\substack{\ul{P}\in \mathcal{P}(\tau):\\ P_{v_0} = \emptyset}} {(C')^{\sum_{v\in V_e(\tau)}|P_v|}}  \\
&\times \Big(\prod_{v \in V(\tau)\setminus \set{v_0}}2^{-\frac{\varepsilon}2|P_v|(h_v-h_{v'})} \Big) \Big(\prod_{\substack{v \in V(\tau):\\ m_v=0}}2^{-\theta(h_v-h_{v'})} \Big)\Big(\prod_{\substack{v \in V_e(\tau):\\ m_v=0}} |\lambda|^{\max\{1,\kappa |P_v|\}}2^{\theta h_v} \Big)\,,
      \end{aligned}
\end{equation}
for some $C,C'>0$, where $\sum^*_\tau$ indicates the sum over the trees in $\mathcal T^{(h)}_\Bs$ with $m_{v_0}=m$, $|V_e(\tau)|>m$ and $\tau^*(\tau)=\tau^*$. Note that, in 
order to obtain this estimate from  \eqref{R22}, we also dropped the 
factor $e^{-c_0 2^{h} \delta (\bs{y})}$ in the first line of  \eqref{R22}, since it is comparable with the factor $e^{-{\frac{c_0}{12}}2^{h_{v_0}}\delta(\bs y_{v_0})}$ (here $\bs y_{v_0}=\bs y$) appearing in the first product
in the second line and, therefore, un-necessary for the subsequent estimates, and for any $\tau\in \mathcal T^*_{h,h_0^*,m}$ we bounded $\prod_{\substack{v\in V(\tau):\\
m_v\ge 1}} 2^{[|S^*_v|-1]_+h_v} e^{-{\frac{c_0}{12}}2^{h_v}\delta(\bs y_v)}$ from above by 
$\prod_{v\in V(\tau^*)} 2^{[|S_v|-1]_+h_v} e^{-{\frac{c_0}{12}}2^{h_v}\delta(\bs y_v)}$. 
Next, letting $h_b^*:=\max\{h_{v'} : v\in V_e(\tau^*)\}$, we use that, for some $C''>0$, 
\begin{equation}\begin{split} & \sum^*_{\tau} 
 \sum_{\substack{\ul{P}\in \mathcal{P}(\tau):\\ P_{v_0} = \emptyset}} {(C')^{\sum_{v\in V_e(\tau)}|P_v|}}  
 \Big(\prod_{v \in V(\tau)\setminus \set{v_0}}2^{-\frac{\varepsilon}2|P_v|(h_v-h_{v'})} \Big) \Big(\prod_{\substack{v \in V(\tau):\\ m_v=0}}2^{-\theta(h_v-h_{v'})} \Big) \\
 &\times \Big(\prod_{\substack{v \in V_e(\tau):\\ m_v=0}} |\lambda|^{\max\{1,\kappa |P_v|\}} 2^{\theta h_v}\Big)\le (C'')^m|\lambda| 2^{\theta h^*_b} \,,
      \end{split}
\end{equation} 
which is the analogue of \cite[Eq.(4.23)]{GGM}. Therefore, up to a redefinition of the constant $C$, 
\begin{equation}\begin{aligned}\label{R2222}
| R_2( \bs{y})| & \le C^m |\lambda|  \sum_{h \le-1}2^{h(1-\frac{m}2)}  \sum_{h < h_0^* \le 0} \sum_{\tau^*\in\mathcal{T}^*_{h,h_0^*,m}} 2^{\theta h^*_b}
 \Big(\prod_{v \in V(\tau^*)\setminus \set{v_0}}2^{(h_v-h_{v'})\tilde d_v}  \Big)\\
 &\times \Big(\prod_{v\in V(\tau^*)} 2^{[|S^*_v|-1]_+h_v} e^{-{\frac{c_0}{12}}2^{h_v}\delta(\bs y_v)}\Big).
      \end{aligned}
\end{equation}
If we now isolate from the two last products the contributions from the (possibly coinciding) vertices $v_0$ and $v_0^*$, use the fact that $\tilde d_{v_0^*}=\frac12+\varepsilon-
\frac{m}2$, that $\delta(\bs y_{v_0^*})=\delta(\bs y_{v_0})=\delta(\bs y)$, and recall that $h_{v_0}=h+1$, we obtain, up to a further redefinition of $C$, 
 \begin{equation}\begin{aligned}\label{R23}
|R_2(\bs{y})|  \le & C^m|\lambda| \sum_{h \le -1}  \sum_{h < h_0^* \le -1} 2^{(h-h_0^*)(\frac12-\varepsilon)} \sum_{\tau^*\in\mathcal{T}^*_{h,h_0^*,m}} 2^{\theta h^*_b}  2^{(|S_{v_0^*}|- \frac{m}{2})h_0^*} e^{-{\frac{c_0}{12}}2^{h_0^*}\delta(\bs y)}\\ 
	& \times \Big( \prod_{\substack{v \in V(\tau^*) :\\ v>v_0^*}}   2^{(h_v-h_{v'}) \tilde{d}_v}  2^{[|S^*_v|-1]_+h_v} e^{-{\frac{c_0}{12}}2^{h_v}\delta(\bs y_v)}\Big)
\end{aligned}\end{equation}
We can easily sum over the classes of trees which differ only in the scale of the root and of the endpoints (given all the scales of other vertices fixed), obtaining, 
up to a further redefinition of $C$, 
\begin{equation}\begin{aligned}\label{R24}
	|R_2(\bs{y})|  \le & C^m |\lambda|  \sum_{h_0^* \le -1}  
		     \sum_{\tau^*\in\mathcal{T}^{**}_{h_0^*,m}} 2^{\theta h^*_b}  2^{(|S_{v_0^*}|- \frac{m}{2})h_0^*} e^{-{\frac{c_0}{12}}2^{h_0^*}\delta(\bs y)}\\ 
	& \times \Big( \prod_{\substack{v \in V(\tau^*)\setminus V_e(\tau^*) : \\ v>v_0^*}}   2^{(h_v-h_{v'}) \tilde{d}_v}  2^{(|S_v|-1)h_v} e^{-{\frac{c_0}{12}}2^{h_v}\delta(\bs y_v)}\Big)
\end{aligned}\end{equation}
(cf.\ \cite[Equation~(4.27)]{GGM})
where $ \mathcal{T}^{**}_{h_0^*,m}$ is the subset of $\mathcal{T}^{*}_{h_0^*,h_0^*,m}$ with $h_w = h_w'+1$ for all endpoints, and implicitly $v_0 = v_0^*$. 
Regrouping the factors depending on the different $h_v$, we can recast \eqref{R24} in the following form: 
 \begin{equation}\begin{aligned}\label{R34}
		 |R_2(\bs{y})|\le C^m  |\lambda|   \sum_{ h_0^* \le -1}  \sum_{\tau^*\in\mathcal{T}^{**}_{h_0^*,m}} \prod_{\substack{v \in V(\tau^*)\setminus V_e(\tau^*)}}   2^{\alpha_v h_v} e^{-{\frac{c_0}{12}}2^{h_v}\delta(\bs y_v)} 
	  \end{aligned}\end{equation}
(cf.\ \cite[Equation~(4.31)]{GGM}),
where, denoting, for any $v\in V(\tau^*)\setminus V_e(\tau^*)$, $S_v^1 := S_v \cap V_e(\tau^{*})$ and $S_v^2 := S_v\setminus S_v^1$,
\begin{align}
	\alpha_{v_0^*}
	&
	\begin{aligned}[t]
		& =
		|S_{v_0^*}| - \frac{m}{2}
		-
		\sum_{w \in S_{v_0^*}^2} \tilde d_w
		=
		|S_{v_0^*}| - \frac{m}{2}
		-
		(\tfrac12 + \varepsilon)|S_{v_0^*}^2|
		+
		\sum_{w \in S_{v_0^*}^2} \tfrac{m_w}{2}
		\\ &
		=
		(\tfrac12 - \varepsilon)|S_{v_0^*}|
		+
		(\tfrac12 + \varepsilon)|S_{v_0^*}^1|
		-
		\tfrac12 |S_{v_0^*}^1|
		=
		(\tfrac12 - \varepsilon)|S_{v_0^*}|
		+
		\varepsilon |S_{v_0^*}^1|
		,
	\end{aligned}
	\label{eq:alpha_v0}
	\\
	\alpha_{v}
	&
	\begin{aligned}[t]
		& =
		\tilde d_v
		+|S_{v}| - 1
		-
		\sum_{w \in S_{v}^2} \tilde d_w
		=
		(\tfrac12 + \varepsilon)(1-|S_{v}^2|)
		+
		|S_{v}| - 1
		-
		\frac12 \Big( m_v - \sum_{w \in S_{v}^2} m_w \Big)
		\\ &
		=
		(\tfrac12 - \varepsilon)(|S_{v}|-1)
		+
		(\tfrac12 + \varepsilon)|S_{v}^1|
		-
		\tfrac12 |S_{v}^1|=
		(\tfrac12 - \varepsilon)(|S_{v}|-1)
		+
		\varepsilon |S_{v}^1|
		,
		\vphantom{\sum_v}
		\quad\quad v \neq v_0^*,v_b
	\end{aligned}
	\label{eq:alpha_v}
	\\
	\alpha_{v_b}
	& =
	(\tfrac12 - \varepsilon)(|S_{v_b}|-1)
	+
	\varepsilon |S_{v_b}^1|
	+ \theta 
	\label{eq:alpha_vb}
\end{align}
where $v_b$ is an arbitrary vertex in $V(\tau)\setminus V_e(\tau^*)$ with $h_{v_b}= h^*_b$.
In order to perform the sums in the right hand side of \eqref{R34}, we proceed similarly to what is described below \cite[Eq.(4.12)]{GGM}: we bound $\delta(\bs y_v)\ge (|S_v|-1)d$, with 
$d$  the minimum pairwise distance among $\bs{y} = (y_1,\dots,y_m)$, and write the sum over $\tau^*\in \mathcal T^{**}_{h_0^*,m}$ as the sum over the ``shape'' (or ``skeleton'') $t$ of $\tau^*$ times the sum over the scale assignments of the non-endpoint vertices of $t$, thus finding 
 \begin{equation}\label{R345}
		 |R_2(\bs{y})|\le C^m  |\lambda|  \sum_{t\in \mathfrak{T}_{m}}\Big[ \prod_{v \in V(t)\setminus V_e(t)} \Big(\sum_{h_v\in\mathbb Z}2^{\alpha_v h_v} e^{-{\frac{c_0}{12}}2^{h_v}(|S_v|-1)d}\Big)\Big],\end{equation}
where $ \mathfrak{T}_{m}$ is the set of possible shapes of the tree (cf. with the definition of $\mathfrak{T}_{0,m}$ given below \cite[Eq.(4.13)]{GGM}). Now,  noting that all the $\alpha_v$ are positive and
\begin{equation*}
	\sum_{v \in V(\tau^*) \setminus V_e(\tau^*)} 
	\left( \left|S_v\right|-1 \right)
	= m-1
	, \quad
	\sum_{v \in V(\tau^*) \setminus V_e(\tau^*)} 
	|S_v^1|
	=m,
\end{equation*}
so that 
\begin{equation}
	\sum_{v \in V(\tau^*) \setminus V_e(\tau^*)} 
	\alpha_v
	=
	\tfrac{m}2 +\theta,
	\label{eq:alpha_sum}
\end{equation}
we perform the sums over the scale labels via  \cite[Eq.(4.13)]{GGM}, with $c_\alpha$ as in \cite[Eq.(4.14)]{GGM}, which is such that $c_\alpha\le ({\rm const.})^\alpha\alpha^\alpha$ for an appropriate constant (see line after 
\cite[Eq.(4.14)]{GGM}), so that, up to a redefinition of $C$, 
 \begin{equation}\label{R3456}
		 |R_2(\bs{y})|\le C^m  |\lambda|  \sum_{t\in \mathfrak{T}_{m}}\Big[ \prod_{v \in V(t)\setminus V_e(t)} \Big(\frac{\alpha_v}{(|S_v|-1)d}\Big)^{\alpha_v}
\Big].\end{equation}
Using the fact that $\alpha_v\le |S_v|-1$ for any non-endpoint vertex of $t$, and that the number of elements of $\mathfrak{T}_m$ is smaller  than $16^m$ (see \cite[Lemma A.1]{GM01}), we find, up to another redefinition of $C$,  
 \begin{equation}\label{R34567}
		 |R_2(\bs{y})|\le C^m  |\lambda|  d^{-m/2 -\theta}.
\end{equation}
By plugging this and \eqref{R2secondo} in \eqref{R2primo} we finally obtain
\begin{equation}\label{R2finalissimo}\Big| \langle \sigma_{y_1}; \cdots ;\sigma_{y_{m}} \rangle_{\lambda,t;\bH}-  (Z_{\Bs})^m
	\langle \sigma_{y_1};\cdots; \sigma_{y_{m}} \rangle_{0,t^*;\bH}\Big|\le C^m |\lambda|m! d^{-m/2-\theta},\end{equation}
for any even, non zero, $m$, and it should be recalled that $\langle \sigma_{y_1};\cdots; \sigma_{y_{m}} \rangle_{0,t^*;\bH}$ is non zero only for $m=2$, in which case 
\begin{equation}\label{eq:edaje1}(Z_\Bs)^2\big|\langle \sigma_{y_1}; \sigma_{y_{2}} \rangle_{0,t^*;\bH}\big|\le C'|y_1-y_2|^{-1}\end{equation}
 and, therefore, in view of \eqref{R2finalissimo}, 
\begin{equation}\label{eq:edaje2}\big|\langle \sigma_{y_1}; \sigma_{y_{2}} \rangle_{\lambda,t;\bH}\big|\le C''|y_1-y_2|^{-1},\end{equation} 
for some $C',C''>0$. Without loss of generality, we can choose these constants $C',C''$ 
and the constant $C$ in \eqref{R2finalissimo} (possibly by increasing some of them) in such a way that $C'=C''=C^2$, and we shall do so. 

\medskip

In order to prove Theorem \ref{prop:main}, we are left with translating the bound \eqref{R2finalissimo} into an analogous bound for the standard, non-truncated, 
correlations. We use the fact that 
\begin{equation} \langle \sigma_{y_1} \cdots \sigma_{y_{m}} \rangle_{\lambda,t;\bH}=\sum_{P\in\mathcal P_m^{\text{even}}}s(P)\prod_{Y\in P}\langle\sigma(\bs y(Y))\rangle^T_{\lambda,t;\bH}\label{eq:daje3}
\end{equation}
where $\mathcal P_m^{\text{even}}$ is the set of partitions of $(1,\ldots,n)$ whose elements all have even cardinality (we will think any $P\in\mathcal P_m^{\text{even}}$ 
as a collection $P=\{Y_1,\ldots, Y_k\}$, with $1\le k\le m/2$, whose elements $Y_i$ are tuples of the form $Y_i=(j_1^i,\ldots,j_n^i)$ with $j_1^i<\cdots<j_n^i$). Moreover, 
if $Y=(j_1,\ldots,j_n)\subseteq(1,\ldots,n)$ with $j_1<\cdots<j_n$, we let $\langle\sigma(\bs y(Y))\rangle^T_{\lambda,t;\bH}:=\langle\sigma_{y_{j_1}};\cdots;
\sigma_{y_{j_n}}\rangle_{\lambda,t;\bH}$, and $s(P)$ with $P=\{Y_1,\ldots, Y_k\}$ is the sign of the permutation from $(1,\ldots,n)$ to $Y_1\oplus\cdots\oplus Y_k$. Thanks 
to \eqref{eq:daje3} we can write: 
\begin{equation} \begin{split} & \langle \sigma_{y_1} \cdots \sigma_{y_{m}} \rangle_{\lambda,t;\bH}-(Z_\Bs)^{m}\langle \sigma_{y_1} \cdots \sigma_{y_{m}} \rangle_{0,t^*;\bH}
=\sum_{P\in\mathcal P_m^{\text{even}}}^*s(P)\sum_{\substack{P_1,P_2\neq\emptyset : \\ P_1\cup P_2=P}}Z_{\Bs}^{2|P_1|}
\Big(\prod_{Y\in P_1}\langle\sigma(\bs y(Y))\rangle^T_{0,t^*;\bH}\Big)\\
&\qquad \times \Big(\prod_{Y'\in P_2}
\big(\langle\sigma(\bs y(Y'))\rangle^T_{\lambda,t;\bH}-Z_{\Bs}^2\langle\sigma(\bs y(Y'))\rangle^T_{0,t^*;\bH}\big)\Big)+
\sum_{P\in\mathcal P_m^{\text{even}}}^{**}s(P)\prod_{Y\in P}\langle\sigma(\bs y(Y))\rangle^T_{\lambda,t;\bH},\end{split}\label{eq:daje4}
\end{equation}
where the $*$ on the first sum in the right hand side indicates the constraint that $P$ has $m/2$ elements (i.e., all the elements of $P$ are pairs), while 
the $**$ on the second sum indicates that the sum is over the remaining elements of $\mathcal P_m^{\text{even}}$(i.e.\ those with at least one elements of $P$ whose cardinality is larger than 2). 
Now, using \eqref{R2finalissimo}, \eqref{eq:edaje1} and \eqref{eq:edaje2} with $C'=C''=C^2$ in \eqref{eq:daje4}, we find: 
\begin{equation}\label{R2d}
	\begin{split} \Big| \langle \sigma_{y_1} \cdots\sigma_{y_{m}} \rangle_{\lambda,t;\bH}-  (Z_{\Bs})^m
\langle \sigma_{y_1}\cdots \sigma_{y_{m}} \rangle_{0,t^*;\bH}\Big|&\le
\frac{C^m}{d^{m/2}}\sum_{P\in\mathcal P_m^{\text{even}}}^* \sum_{k=1}^{m/2}{m/2\choose k}(|\lambda|d^{-\theta})^k\\
&\quad +\frac{C^m |\lambda|}{d^{m/2+\theta}}\sum_{P\in\mathcal P_m^{\text{even}}}^{**}\prod_{Y\in P}|Y|!\, ,\end{split}
\end{equation}
in particular the factor of $|\lambda|$ in the second line is obtained using \cref{R2finalissimo}, noting that $\langle \sigma_{y_1};\cdots; \sigma_{y_{m}} \rangle_{0,t^*;\bH} =0$ when $m > 2$.

Noting that $\sum_{k=1}^{m/2}{m/2\choose k}(|\lambda|d^{-\theta})^k=(1+|\lambda|d^{-\theta})^{m/2}-1$, which is smaller than $2^{m/2}|\lambda|d^{-\theta}$ for $\lambda$ small enough, \eqref{R2d} implies
\begin{equation}\label{R3d} \Big| \langle \sigma_{y_1} \cdots\sigma_{y_{m}} \rangle_{\lambda,t;\bH}-  (Z_{\Bs})^m
\langle \sigma_{y_1}\cdots \sigma_{y_{m}} \rangle_{0,t^*;\bH}\Big|\le
\frac{C^m|\lambda|}{d^{m/2+\theta}}\sum_{P\in\mathcal P_m^{\text{even}}}\prod_{Y\in P}|Y|!\end{equation}
For any positive even $m$, the right hand side can be further bounded from above by $A_m:=\sum_{P\in\mathcal P_m}\prod_{Y\in P}|Y|!$, 
where $\mathcal P_m$ is the set of all partitions of $(1,\ldots,n)$. Note that $A_m$ is well defined for any positive integer, and its first few values are $A_1=1$, $A_2=3$, $A_3=13$, 
$A_4=73$, $\ldots$ In general, for any $m\ge 1$, $A_m$ satisfies the recursion 
\begin{equation}A_{m+1}=\sum_{k=0}^m{m\choose k}(k+1)! A_{m-k},\label{induct}\end{equation}
with the understanding that $A_0:=1$. From this, it is easy to prove inductively that $A_m\le 2^m m!$: first of all, the inequality is valid for $m=0,1$; second, 
assuming inductively its validity for all 
integers $m\le n$, \eqref{induct} implies
\begin{equation} A_{n+1}\le 2^n n! \sum_{k=0}^n (k+1) 2^{-k}, \end{equation}
from which the desired bound follows, simply because $\sum_{k=0}^n(k+1)2^{-k}\le 2(n+1)$ for all non-negative integers $n$, as the reader can easily check. In conclusion, 
$A_m\le 2^m m!$ and, therefore, from \eqref{R3d}, 
\begin{equation}\label{R4d} \Big| \langle \sigma_{y_1} \cdots\sigma_{y_{m}} \rangle_{\lambda,t;\bH}-  (Z_{\Bs})^m
\langle \sigma_{y_1}\cdots \sigma_{y_{m}} \rangle_{0,t^*;\bH}\Big|\le \frac{(2C)^m|\lambda|m!}{d^{m/2+\theta}},\end{equation}
which concludes the proof of \eqref{10b} and, therefore, of \cref{prop:main}.

\appendix
\section{Grassmann representation: spins on both boundaries of the cylinder.} 
\label{ARC}

In \cref{lambda0} we derived the representation of correlations of spins placed on a single boundary of the cylinder for the Ising model with periodic horizontal boundary conditions. In 
this section we illustrate how to adapt the procedure to derive the representation of correlations of spins placed on both boundaries of the cylinder. This may be used to derive the 
scaling limit of multipoint spin correlations on a cylinder with spin observables located on both boundaries for the same class of non-integrable Ising models considered n this paper. 
The construction would require further generalization of the method of this and preceding papers. We expect this to be straightforward, but we prefer not to belabor the details here, in 
order not to overwhelm an already lengthy paper. 

\medskip

We start by treating the case $\lambda=0$, i.e., the integrable, nearest-neighbor, case.
Let $\partial \L =\partial^l\L \cup \partial^u\L$  be the boundary of the cylinder, where $\partial^l\L=\{z=((z)_1,(z)_2) \in\L: (z)_2=1\}$ is the lower boundary and $\partial^u\L:=\{z=((z)_1,(z)_2) \in\L : (z)_2=M\}$ is the upper boundary. Let $n_l$ (resp.\ $n_u$) be the number of spins on the $\partial^{l} \L$ (resp.\ on the $\partial^{u} \L$), $n_l,n_u \in  \mathbb{N}_0$ such that $n_l+n_u = m\in 2 \mathbb{N}$; let $\bs{y} = (y_1,\ldots,y_m) \in (\partial \L)^m$ be a specific `cyclically ordered'  sequence of distinct boundary vertices, that is, first on the lower boundary, ordering the $n_l$ vertices from right to left, then on the upper boundary, ordering the $n_u$ vertices from left to right.

For reasons that will become clear later, we consider the case of an odd number of spins on each boundary for the Ising model  with both periodic and anti-periodic horizontal boundary 
conditions. In this case, in fact, we will obtain the Grassmann representation of  the most general generating function, from which also  the correlations of an even-even number of spins 
on the upper-lower boundaries may be obtained.

For $n_l,n_u \in 2\mathbb{N}+1$ such that  $n_l + n_u = m \in 2 \mathbb{N}$, we consider the $\bs y=( y_1,\ldots, y_{m})\in (\partial\L)^m$ cyclically ordered sequence of distinct boundary vertices. We let $\tilde{G}'_\L = (\L, \mathfrak{B}_\L \cup \tilde{\mathfrak{B}}')$ be the weighted graph, shown in Fig.~\ref{LambdanOO}, obtained from the graph $G_\L = (\L, \mathfrak{B}_\L)$ by adding the edge set $\tilde{\mathfrak{B}}'$ consisting of $m/2$ auxiliary edges connecting the sites labelled by $\bs y$ according to the cyclic order. Let $\tau \in \{+,-\}$ and let
\begin{equation}\label{H_addbOO}
	\tilde{H}_{\L}^\tau(\s)
	=
	- \sum_{i=1}^2 J_i \sum_{\{z, z+\hat{e}_i\} \in \mathfrak{B}_\L} \sigma_{z}\sigma_{z+\hat{e}_i}  
	- \sum_{k=1}^{m/2} \tilde{J}_{k}  \s_{y_{2k-1}}\s_{y_{2k}} \,,
\end{equation}
be an auxiliary spin Hamiltonian with periodic or anti-periodic horizontal boundary conditions, depending on whether $\tau=+$ or $\tau=-$\footnote{That is, if $z = (L,(z)_2)$, $(z)_2 \in \{1,\ldots,M\}$, we interpret 
$\sigma_{z+\hat{e}_1}$ as being equal to $\tau \sigma_{1,(z)_2}$. As per the vertical boundary conditions, we assume them to be open, as in the case \eqref{eq:HM}, i.e., 
if $z=((z)_1,M)$ for some $(z)_1\in\{1,\ldots,L\}$, then we interpret $\sigma_{z+\hat e_2}$ as being equal to zero.}. Note that $\tilde H_\Lambda^\tau(\sigma)$ depends upon the auxiliary couplings $\tilde{\bs{J}} = (\tilde{J}_1,\ldots, \tilde{J}_{m/2})$ and describes a nearest-neighbor model defined on the modified graph $\tilde{G}'_\L$. Note that,
if $\tau=+$ the first term in the r.h.s.\ is nothing but the model in \eqref{eq:HM} with $\lambda=0$.
\begin{figure}[h]
\centering
\includegraphics[scale=0.5]{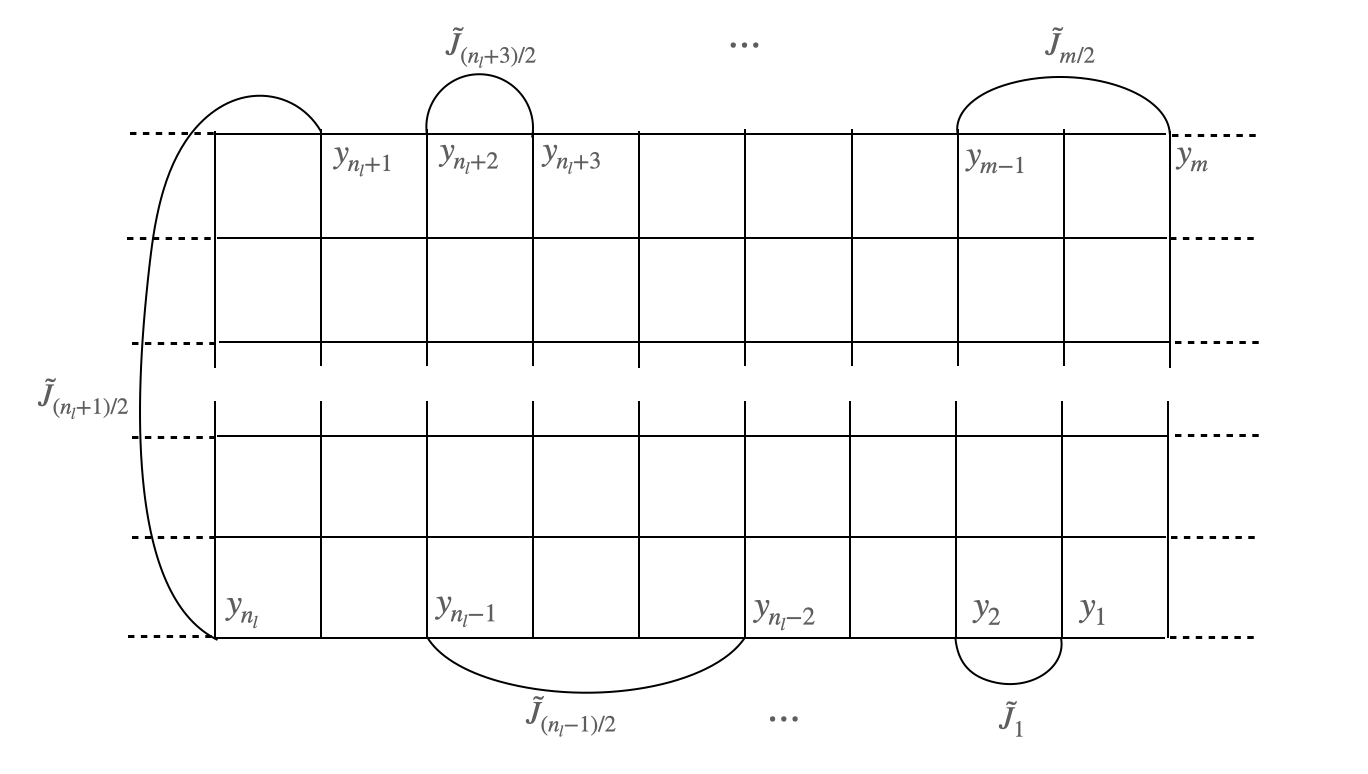}
\caption{The weighted graph  $\tilde{G}_\L'$ associated with the auxiliary Hamiltonian \eqref{H_addbOO}. The $n_l$ vertices on the lower boundary are ordered from right to left while the $n_u$ vertices on the upper boundary are ordered from left to right and $m =n_l+n_u$. Note that the auxiliary edges do not intersect each other; the auxiliary edges connecting the vertices on the same boundary 
 do not intersect any other horizontal
or vertical edge of the graph, including those connecting the first and last columns of the graph; on the other hand, the auxiliary edge connecting the two opposite boundaries intersects
all and only the horizontal dashed edges connecting the first and last columns of the graph. 
}\label{LambdanOO}
\end{figure}
Let $Z^\tau_{0,t;\L} (\tilde{\bs{J}})=\sum_{\sigma \in \Omega_\L}e^{- \beta \tilde{H}^\tau_{\L} (\sigma)}$ be the partition function corresponding to \eqref{H_addbOO}: then
\begin{equation}\label{evspinOO}\begin{aligned}
&\media{\sigma_{y_1}\cdots\sigma_{y_{m}}}^\tau_{0,t;\L}=\left. \frac{\beta^{-m/2}}{Z^\tau_{0,t;\L}( \tilde{\bs{J}})} \frac{\partial^{m/2}}{ \partial \tilde{J}_1 \cdots \partial  \tilde{J}_{m/2}}\, Z^\tau_{0,t;\L} ( \tilde{\bs{J}})  \right|_{\tilde{\bs{J}}=\0}\,.\end{aligned}
\end{equation}
We repeat the procedure described below \eqref{evspin} to obtain a decorated weighted graph  $\tilde{G}'_*$: in addition to the replacements of \cref{replacement123}, those in Fig.~\ref{addVb} must now also be considered. 
\begin{figure}[h!]
\centering
\begin{subfigure}[h]{0.4\textwidth}
      \includegraphics[scale=0.65]{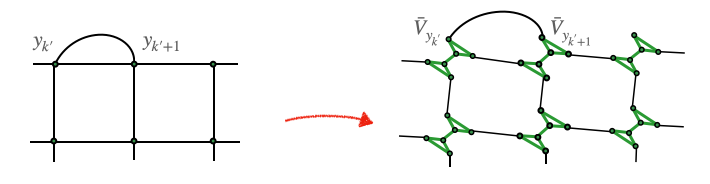}    
\caption{} \label{fig:subfigAA}
\end{subfigure}\qquad \qquad\qquad
\begin{subfigure}[h]{0.4\textwidth}
      \includegraphics[scale=0.5]{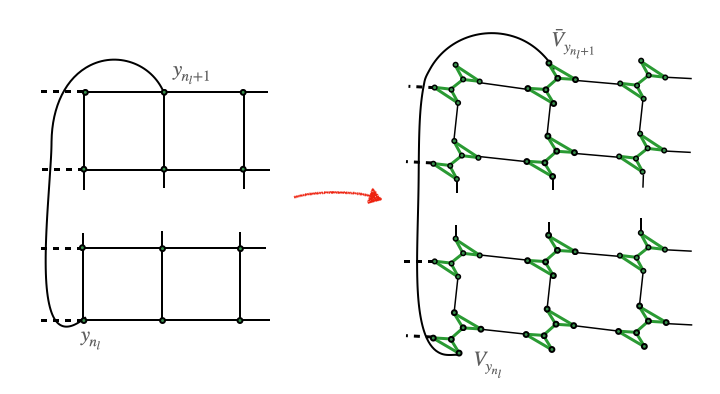}    
\caption{} \label{fig:subfigBA}
\end{subfigure}
\caption{Fig.~\ref{fig:subfigAA} shows an additional edge of $\tilde G'_\Lambda$ connecting two sites on the upper boundary (left), which in the decorated graph $\tilde G'_*$ 
corresponds to an edge connecting two sites of type $\bar{V}$ (right); 
Fig.~\ref{fig:subfigBA} shows the additional edge of $\tilde G'_\Lambda$ connecting the two sites $y_{n_l}$ and $y_{n_l+1}$ on the opposite boundaries (left), which 
in the decorated graph $\tilde G'_*$ corresponds to an edge connecting a site of type $V$ in the lower boundary with a site of type $\bar{V}$ in the upper boundary (right).
\label{addVb}}\end{figure} 
The main difference compared to \cref{lambda0} is that now the additional edge connecting the two opposite boundaries of the cylinder intersects the dashed horizontal edges, so that $\tilde{G}'_*$ is non planar (it cannot be embedded on a cylinder without crossings); rather, it can be embedded on a two-dimensional torus. 
In this case, Kasteleyn's solution tells us that the partition function of the dimer model on $\tilde{G}'_*$ is a linear combination of the Pfaffians of four different antisymmetric matrices, denoted by $A^{(\alpha \tau,\alpha')}_{K}$, $\alpha,\alpha' \in \{+,-\}$, and defined as follows, see e.g.\ \cite{Ga99}.
\begin{figure}\centering
      \includegraphics[scale=0.55]{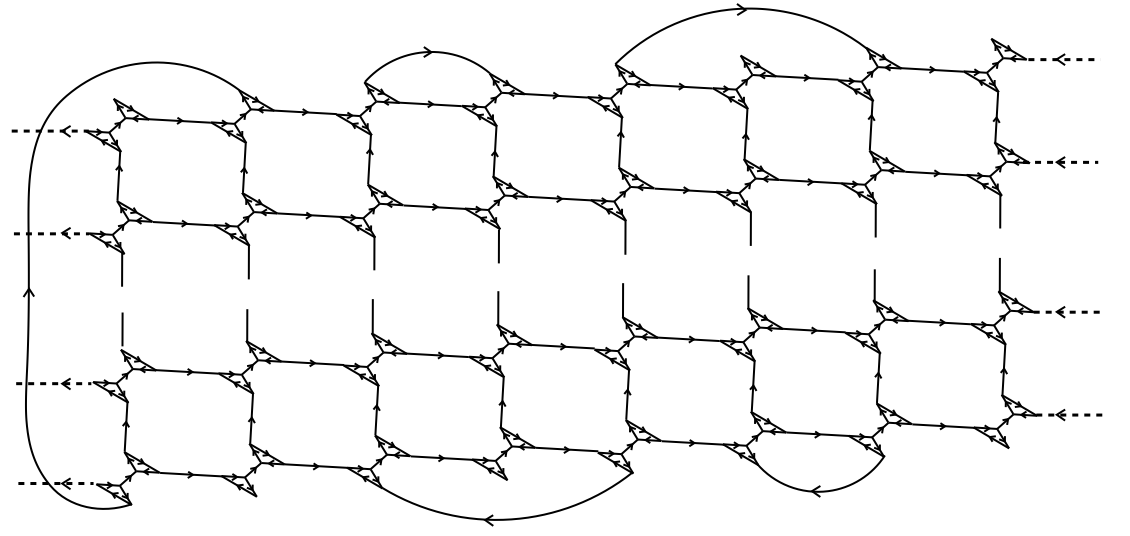}    
\caption{The decorated graph $\tilde{G}'_*$ in which the arrows indicate the principal orientation of the graph. The graph $\tilde{G}'_*$ is not planar due to the intersections between the edge connecting the two boundaries and the dashed horizontal edges connecting the first and last columns: therefore we define the principal orientation starting from the clockwise odd orientations that would occur respectively in the absence of the edge that connects the two boundaries (orienting the dashed edges from the first to the last column, as in \cref{figOK}) and in the absence of the dashed edges (orienting the edge that connects the two boundaries from bottom to top, regardless of the value of $M$). 
Note that the auxiliary edge connecting the two boundaries is directed from $V$ to $\bar{V}$, contrary to every other vertical edge, which is directed from $\bar{V}$ to $V$ instead.
\label{figB}}
\end{figure} 
Similarly to $A_{K}$, see \eqref{eq:Z'Pf} and following lines, the entries of each $A^{(\alpha \tau,\alpha')}_{K}$, $\alpha,\alpha'\in \{+,-\}$, are labelled by pairs of vertices connected to each other by an edge  in $\tilde{G}'_*$ and correspond to the weight of the edge as introduced below \eqref{eq:Z'Pf} multiplied by a sign which depends on the orientation of the edge as illustrated in \cref{figB}.
Each of the four matrices differs in the orientation of the edges which, by intersecting, break the planarity: $A^{(-\tau,-)}_{K}$ denotes the matrix associated with the {\it principal orientation} shown in Fig.~\ref{figB}, $A_{K}^{(\tau,-)}$ (resp.\ $A_{K}^{(-\tau,+)}$) denotes the matrix associated with the orientation obtained from the principal one by reversing the orientation of the edges between the first and last columns (resp.\  of the edge connecting the two boundaries) and $A_{K}^{(\tau,+)}$ denotes the matrix associated with the orientation obtained from the principal one by reversing both  the orientation of the edges between the first and last columns and that of the edge connecting the two boundaries. 
Kasteleyn's solution is then given explicitly by
\begin{equation}\label{Z4Pf}
Z^\tau_{0,t;\L}( \tilde{\bs{J}}) =  {C}_\beta(J_1,J_2,\tilde{\bs{J}})   \sum_{\alpha,\alpha'\in\set{\pm}} \frac{c_{\alpha,\alpha'}}{2} \Pf{A_{K}^{(\alpha\tau,\alpha')}} \,,
\end{equation}
with ${C}_\beta (J_1,J_2,\tilde{\bs{J}})$ defined after \eqref{eq:Z'Pf} and $c_{+,-}=c_{-,+}=c_{-,-}= -c_{+,+}=+1$. Next, by rewriting each Pfaffian in \eqref{Z4Pf} in the form of a Gaussian Grassmann integral, we obtain
\begin{equation}
Z^\tau_{0,t;\L}(\tilde{\bs{J}}) = (-1)^{LM}{C}_\beta(J_1,J_2,\tilde{\bs{J}}) \sum_{\alpha,\alpha'=\pm} \frac{c_{\alpha,\alpha'}}{2} 
\int  \mathcal D \Phi \, e^{\cS^{\alpha\tau}_{t}(\Phi)  +  \cS^{\alpha'}_{\tilde{t}}(\Phi)},
\label{Z_Grass_OO}
\end{equation} 
where $\mathcal D \Phi$ is defined as in \eqref{diffPhi}, $\cS^{\alpha\tau}_{t}(\Phi)$ as in \eqref{cS_def2} with the only difference that $H_{(L+\hat{e}_1,(z)_2)}$,  $(z)_2 \in \set{1,\ldots,M}$, should be interpreted as representing $\alpha\tau H_{(1,(z)_2)}$ (so \cref{cS_def2} corresponds to the case with $\alpha=-$ and $\tau=+$), and
\begin{equation} 
\cS_{\tilde{t}}^{\alpha'}(\Phi) :=  \alpha' \tilde{t}_{(n_l+1)/2}(\beta) \bar{V}_{y_{n_l+1}}  V_{y_{n_l}} +\sum_{j=1}^{(n_l-1)/2} \tilde{t}_j(\beta) V_{y_{2j} }V_{y_{2j-1} }   + \sum_{j'=(n_l+3)/2}^{(n_l+n_u)/2} \tilde{t}_{j'}(\beta) \bar{V}_{{y}_{2j'-1} }\bar{V}_{{y}_{2j'} } \,,
\label{eq:cS_def_addbOO}
\end{equation}
with $\tilde{t} = \{\tilde{t}_k(\beta)\}_{k=1}^{m/2}= \{\tanh\beta \tilde{J}_k\}_{k=1}^{m/2}$ and $n_l+n_u=m$. 
Note that the Grassmann integral in the r.h.s.\ of \eqref{Z_Grass_OO} depends on $\tilde{\bs{J}}$ via the dependence of $\cS_{\tilde{t}}^{\alpha'}$ on $\tilde{t}$;  so when $\tilde{\bs{J}}=\0$ the terms in \cref{Z_Grass_OO} with $\alpha=+$ cancel, giving
\begin{equation}\label{Z0tau}
Z^\tau_{0,t;\L}\equiv Z^\tau_{0,t;\L}({\0}) = (-1)^{LM}{C}_\beta(J_1,J_2,{\0}) \int  \mathcal D \Phi \, e^{\cS^{- \tau}_{t}(\Phi)  }\,,
\end{equation}
consistently with \cref{eq:Z_Grass_addb2,cS_def2}.
Let us consider for a moment the simplest case where there are only two spins with $y_1 \in \partial^l\L$ and  $y_2 \in \partial^u\L$. By plugging Eq.~\eqref{Z_Grass_OO} into Eq.~\eqref{evspinOO} with $n_l=n_u=1$, $m=2$, we express the $2$-point correlation as
\begin{equation}\label{spin_corr_VOO2}\begin{aligned}
\braket{ \s_{y_1} \s_{y_{2}}}^\tau_{0,t;\L} =& \frac{1}{Z^\tau_{0,t;\L}(\bs 0) }  (-1)^{LM}{C}_\beta(J_1,J_2,\bs 0)   
\sum_{\alpha,\alpha'=\pm} \alpha' \frac{c_{\alpha,\alpha'}}{2} \int \mathcal D \Phi \, e^{\cS^{\alpha \tau}_{t}(\Phi)   }  \bar{V}_{y_{2}} V_{y_1}    \\
 =&   \frac{\int \mathcal D \Phi \, e^{\cS^\tau_{t}(\Phi) } \left(- \bar{V}_{y_{2}}V_{y_1} \right) }{
\int \mathcal D \Phi \, e^{\cS^{-\tau}_{t}(\Phi) }}=  
\frac{Z^{-\tau}_{0,t;\L}}{Z^\tau_{0,t;\L}}  
\frac{\int \mathcal D \Phi \, e^{\cS^\tau_{t}(\Phi) } {V_{y_1}\bar{V}_{y_{2}}   }}{\int \mathcal D \Phi \, e^{\cS^{\tau}_{t}(\Phi) }}\,,
\end{aligned}
\end{equation}
where we used that here the terms with $\alpha=-$ are those who cancel after setting $\tilde{\bs J}=\bs 0$. 

\begin{remark}
The Grassmann representation of the spin-spin correlation with 
spin located on two opposite boundaries in \eqref{spin_corr_VOO2} has different features, as compared to the one of two spins located on the same boundary. In fact, in the case that spins are both located on (say) the bottom boundary, then the spin-spin correlation of $\sigma_y$ with $\sigma_{y'}$ (with $y$ to the right of $y'$) with $\tau$ boundary conditions in the horizontal direction equals the Grassmann expectation of $V_yV_{y'}$ with respect to the Gaussian Grassmann measure $\propto \mathcal D\Phi e^{\mathcal S_t^{-\tau}(\Phi)}$: note the switch in the sign of the boundary conditions from the spin to the Grassmann case (in Section \ref{lambda0} we worked this out only for the case $\tau=+$, but it is straightforward to check that the same discussion applies to $\tau=-$, too). On the contrary, \eqref{spin_corr_VOO2} shows that the spin-spin correlation of a spin located on the bottom boundary 
(at site $y_1$) with one located on the top boundary (at site $y_2$) is proportional to the Grassmann expectation of $V_{y_1}{\bar V}_{y_2}$ with respect to the Gaussian Grassmann 
measure $\propto {\mathcal D}\Phi e^{{\mathcal S}_t^{\tau}(\Phi)}$ (note here that the Grassmann measure has the {\bf same} horizontal boundary conditions  as the original spin
Hamiltonian) times a correcting factor given by the ratio of the partition functions with distinct horizontal boundary conditions. 
\end{remark}
Similarly to \eqref{spin_corr_VOO2}, starting from \eqref{Z_Grass_OO}, 
for general $n_l,n_u \in 2 \mathbb{N}+1$, $n_l+n_u = m$, we obtain the following Grassmann representation for the multipoint boundary spin 
correlations:
\begin{equation}\label{spin_corr_VOO}\begin{aligned}
\braket{ \s_{y_1}\cdots \s_{y_{m}}}^\tau_{0,t;\L} &=  \frac{Z^{-\tau}_{0,t;\L}}{Z^\tau_{0,t;\L}}  \,   \frac{\int \mathcal D \Phi \, e^{\cS^\tau_{t}(\Phi) } {V_{y_1}\cdots V_{y_{n_l}}\bar{V}_{y_{n_l+1}} \cdots \bar{V}_{y_{m}}  }}{\int \mathcal D \Phi \, e^{\cS^\tau_{t}(\Phi) }},
\end{aligned}
\end{equation}
while for $n_l,n_u \in 2 \mathbb{N}+1$, $n_l+n_u = m$, we get: 
\begin{equation}\label{spin_corr_VOOeveneven}\begin{aligned}
\braket{ \s_{y_1}\cdots \s_{y_{m}}}^\tau_{0,t;\L} &=  \frac{\int \mathcal D \Phi \, e^{\cS^{-\tau}_{t}(\Phi) } {V_{y_1}\cdots V_{y_{n_l}}\bar{V}_{y_{n_l+1}} \cdots \bar{V}_{y_{m}}  }}{\int \mathcal D \Phi \, e^{\cS^{-\tau}_{t}(\Phi) }}.
\end{aligned}
\end{equation}
Both \eqref{spin_corr_VOO} and \eqref{spin_corr_VOOeveneven} can be expressed in the form of Pfaffians, as follows: let  $A_{\L,\tau}$ be the $4LM\times 4LM$ antisymmetric matrix defined by $S_t^{\tau}(\Phi) = \frac12 (\Phi, A_{\L,\tau}\Phi)$, in terms of which the expectation of a Grassmann binomial (like the one appearing in the r.h.s.\ of \eqref{spin_corr_VOO2} or in \eqref{spin_corr_VOOeveneven} with $m=2$), can be written as
\begin{equation}
\frac{\int \mathcal D \Phi \, e^{\cS^\tau_{t}(\Phi) } {F_{y_1}F'_{ y_2}}}{\int \mathcal D \Phi \, e^{\cS^{\tau}_{t}(\Phi) }} = -[A_{\L,\tau}^{-1}]_{(F,{y_1} ),(F',{y_2})}
\end{equation}
with $F_{y_i} = V_{y_i}$ (resp.\ $F_{y_i} =\bar{V}_{y_i}$) if $y_i \in \partial^l\L$ (resp.\ $y_i \in \partial^u\L$), $i=1,2$, and similarly for $F'_{y_i}$, so that  $F,F' \in \{V,\bar{V}\}$. 
We thus have 
\begin{equation}
\media{\sigma_{y_1}\sigma_{y_2}}^\tau_{0,t;\L}= \begin{cases} 
  -[A_{\L,- \tau}^{-1}]_{(V,{y_1} ),({V},{y_2})} & \mbox{ if } y_1,y_2\in \partial^l\L\,,\\
- \frac{Z^{-\tau}_{0,t;\L}}{Z^\tau_{0,t;\L}}   [A_{\L,\tau}^{-1}]_{(V,{y_1} ),(\bar{V},{y_2})} & \mbox{ if } y_1\in \partial^l\L, y_2 \in \partial^u\L\,,\\
-  [A_{\L,- \tau}^{-1}]_{(\bar{V},{y_1} ),(\bar{V},{y_2})} & \mbox{ if } y_1, y_2 \in \partial^u\L\,.\end{cases}
\end{equation}
and 
\begin{equation}
\frac{\int \mathcal D \Phi \, e^{\cS^\tau_{t}(\Phi) } F_{y_1}\cdots F_{y_{m}} }{\int \mathcal D \Phi \, e^{\cS^{\tau}_{t}(\Phi) }} = \Pf M_{\L,\tau}(\bs y)
\end{equation}
where $M_{\L,\tau}(\bs y)$  is the $m\times m$ antisymmetric matrix with elements \begin{equation}
[M_{\L,\tau}(\bs y)]_{i,j} = -[A_{\L,\tau}^{-1}]_{(F,{y_i} ),(F',{y_j})} = \begin{cases} \frac{Z^{\tau}_{0,t;\L}}{Z^{-\tau}_{0,t;\L}}  \media{\sigma_{y_i}\sigma_{y_j}}^\tau_{0,t;\L}  & \mbox{ if } y_i, y_j \mbox{ are on opposite boundaries }\\
 \media{\sigma_{y_i}\sigma_{y_j}}^{-\tau}_{0,t;\L}  & \mbox{ if } y_i, y_j \mbox{ are on the same boundary\,.}\end{cases}\end{equation} 
 Putting things together we finally find that, for any pair of non-negative integers $n_l,n_u$ such that 
 $n_l+n_u = m \in 2 \mathbb{N}$, 
\begin{equation}
\media{\sigma_{y_1}\cdots\sigma_{y_{n_l}}\sigma_{y_{n_l+1}}\cdots \sigma_{y_m}}^\tau_{0,t;\L}= \begin{cases}
\Pf  M_{\L,-\tau}(\bs y) & \mbox{ if } n_l,n_u \in 2 \mathbb{N}_0\,,\\
\frac{Z^{-\tau}_{0,t;\L}}{Z^{\tau}_{0,t;\L}} \Pf  M_{\L,\tau}(\bs y) & \mbox{ if } n_l,n_u \in 2 \mathbb{N}_0+1\,.
\end{cases}
\label{eq:spin_Pfaff_final}
\end{equation}
which is the desired generalization of \eqref{PfForm}. Starting from this, one can compute the scaling limit of all the multipoint boundary spin correlations: for this purpose, one needs in particular to compute the scaling limit of the ratios $\frac{Z^{-\tau}_{0,t;\L}}{Z^{\tau}_{0,t;\L}}$, which can be worked out by proceeding as in \cite{Greenblatt24}, see also 
\cite{GrJMP23}, which systematically computes subleading contributions to the partition function, starting from which one can compute the scaling limit of the ratios of interest. Note that these ratios can be understood as pair correlations of disorder insertions.  Similarly, the form of \cref{eq:spin_Pfaff_final} can be understood in terms of Kadanoff-Ceva fermions, and the associated disorder insertions cancel only for correlation functions with an even number of spins on each boundary component.

\medskip

Let us now consider the general, non-integrable, case, with $\lambda\neq0$. In this case we proceed as described after Eq.\eqref{spin_corr_V} and in \cref{lambdane0}, and 
obtain a representation that has similar features to that described in Proposition \ref{evspinQ}. Importantly, as in the case described in Section \ref{lambdane0}, 
the representation of the perturbation does not involve the auxiliary edges with couplings $\tilde{\bs J}$ and this implies that the Grassmann representation of the source 
term, denoted by $\mathcal B^{\alpha'}_\Lambda$ below, is independent of $\lambda$. More precisely, we find that the 
correlation of any even number of boundary spins located at boundary sites $y_1,\ldots,y_m$ ordered in cyclic order (as explained above in the $\lambda=0$ case) can be represented as 
\begin{equation}\label{corrgen}
\braket{ \s_{y_1}\cdots \s_{y_{m}}}^\tau_{\l,t;\L}= \frac1{ \Xi^\tau_{\l,t;\L} (\bs{0}) } \frac{\partial^{m}}{\partial  \varphi_{y_1}\partial \varphi_{y_2} \cdots \partial  \varphi_{y_{m-1}}\partial \varphi_{y_{m}}} \Xi^\tau_{\l,t;\L} (\bs{\varphi}) \Big|_{\bs{\varphi}=\bs 0}\,,
\end{equation}
where, letting $\bs{\varphi} := {\set{\varphi_z}_{z \in \partial \L}}$, $\partial \L = \partial^l \L \cup \partial^u \L$, a collection of Grassmann variables, playing as above the role of 
source fields, 
\begin{equation}
	 \Xi^\tau_{\l,t;\L} (\bs{\varphi}) := \sum_{\alpha,\alpha' \in \set{\pm}} \frac{c_{\alpha,\alpha'}}{2} \int \cD \Phi \; e^{\cS^{\alpha\tau}_{t}(\Phi) 
		+ \cB^{\alpha'}_\L(\Phi,\bs{\varphi})+ \cV_\L^{\rm int}(\Phi)} \,,
	\label{XiLgen}
			\end{equation}
with
\begin{equation}\label{eq:Bs_bis}
 \cB^{\alpha'}_\L(\Phi,\bs{\varphi}) := \sum_{z \in \partial^l\L} V_{z} \varphi_{z} -\alpha'  \sum_{z \in \partial^u\L} \bar{V}_{z} \varphi_{z} \,,\end{equation}
and $\cV_\L^{\rm int}(\Phi)$ as in \eqref{eq:B_expansion_bis}.

Eqs.\eqref{corrgen}-\eqref{XiLgen} generalize \eqref{evspinQ}. In particular, it is easy to check that if all the spins are located on the bottom boundary, i.e., if 
$m = n_l \in 2 \mathbb{N}$ we recover \cref{spin_corr_V}. Based on this representation, one can apply the methods developed in this and preceding works to study the scaling limit of 
the multipoint boundary spin correlations in the general case that $\Lambda$ scales to a finite cylinder and there are spin insertions both on the bottom and upper boundaries. 
As already discussed in the $\lambda=0$ case, the control of the general case requires, in particular, to understand the scaling limit of ratios of interacting partition functions of the form 
$\frac{Z^{-}_{\lambda,t;\L}}{Z^{+}_{\lambda,t;\L}}$ in the critical case. The problem of controlling the scaling limit of such ratios is analogous 
to the study of the subleading corrections to the critical free energy \cite{GM13}, and will be postponed to future work. 
 
 \section{First order computation of $Z_{\Bs}$} 
\label{ordineL}

In this appendix, after making an explicit choice of the interaction $V$ in \eqref{eq:HM}, we compute $Z_\Bs(\lambda)$ at first order in $\lambda$ 
and show that the first order correction to $Z_\Bs(0)=1$ is non zero, thus proving that, in the considered case, $Z_\Bs(\lambda)$ is a non-trivial analytic function of $\lambda$. 
From the computation below, it is clear that the choice of the interaction is not particularly important, if not for simplifying here and there a few calculations: in this sense, the
computation indicates that generically $Z_\Bs$ is a non-trivial, i.e., non-constant, function of $\lambda$. 

\medskip

We let $V(X)$ in \eqref{eq:HM} be equal to 1 if $X=\{z-\hat e_2,z+\hat e_2\}$ for some $z\in\Lambda$, and zero otherwise, 
and $J_1=J_2=1$, in which case, letting $z_2$ be the second component of $z=(z_1,z_2)\in \bH$ (to lighten notation, in this appendix we write $(z_1,z_2)$ instead of $((z)_1,(z)_2)$), 
\begin{equation}\label{app.formVint}
\cV_\L^{\rm int} (\Phi) = \alpha_0 \lambda\sum_{\substack{z\in\Lambda : \\ 1<z_2<M}}
\big(\bar{V}_{z-\hat{e}_2} V_{z} + \bar{V}_{z} V_{z+\hat{e}_2} + 2 \bar{V}_{z-\hat{e}_2} V_{z}\bar{V}_{z} V_{z+\hat{e}_2}\big) + O (\lambda^2),
\end{equation}
where $\alpha_0:=2(\sqrt2-1)^2\tanh^{-1}(\sqrt2-1)$. In order to compute $Z_\Bs$ at first order in $\lambda$, we need to compute at the same precision $\mathcal V^{(1)}_\Lambda(\xi,\phi)$ and, therefore, in view of \eqref{qualequesta}, $Z=Z(\lambda)$, $\beta_c=\beta_c(\lambda)$ and $t^*_1=t_1^*(\lambda)$. Moreover, we recall from \cite[Prop.4.11]{AGG_AHP} that $Z=Z(\lambda)$, $\beta_c=\beta_c(\lambda)$ and $t^*_1=t_1^*(\lambda)$ can be computed from $\nu_1,\zeta_1,\eta_1$ via
\cite[Eq.(4.5.4) or equivalently Eqs.(4.5.13)--(4.5.15)]{AGG_AHP}, where 
\begin{equation}\begin{split} & 2\nu_1=2\alpha_0\lambda\big[-1+\partial_{1,2}g_{\infty,-+}(z,z)-\partial_{1,2}g_{\infty,-+}(z,z-\hat e_2)\big]+O(\lambda^2)\\
&\zeta_1=O(\lambda^2)\\
&\eta_1=\alpha_0\lambda\big[-1+2\partial_{1,2}g_{\infty,-+}(z,z)\big]+O(\lambda^2),\end{split}\label{B.2}\end{equation}
and $\partial_{1,2}g_{\infty,-+}(z,z'):=g_{\infty,-+}(z+\hat e_2,z')-g_{\infty,-+}(z,z')$, with 
$g_{\infty,-+}(z,z')=g_{\infty,-+}(z-z',0)=\tfrac{\sqrt2+1}{2}\int_{[-\pi,\pi]^2} \frac{d^2k}{(2\pi)^2}  e^{-ik\cdot (z-z')}\frac{1-B(k_1)e^{ik_2}}{2-\cos k_1-\cos k_2}$, $k=(k_1,k_2)$ and 
$B(k_1)=(\sqrt2-1)(\sqrt2+\cos k_1)$. The explicit expressions of $\nu_1,\zeta_1,\eta_1$ in \eqref{B.2} follow from the first order computation of the corresponding beta functions: in fact, 
they can be computed via \cite[Eq.(4.5.10)]{AGG_AHP} as follows (for the involved notations and corresponding definitions, we refer the reader to \cite{AGG_AHP})
\begin{equation} 2\nu_1=-\sum_{h\le 1}2^{h-1}B^\nu_h[\underline{v}], \qquad \zeta_1=-\sum_{h\le 1}B^\zeta_h[\underline{v}], \qquad \eta_1=-\sum_{h\le 1}B^\eta_h[\underline{v}].\label{Bapp.extra}\end{equation}
In general, these equations should be read as fixed point equations, whose solution is unique in a sufficiently small neighborhood of the origin. However, if we limit 
ourselves to first order in $\lambda$, it is easy to check that the right hand sides of the three equations in \eqref{Bapp.extra} are independent of $\nu_1,\zeta_1,\eta_1$ and read:
\begin{equation}\label{expl.app} 2\nu_1=-\lambda \hat W_{\infty,+-}^{(1)}(0)+O(\lambda^2),\quad \zeta_1=-i\lambda \partial_{1}\hat W_{\infty,--}^{(1)}(0)+O(\lambda^2),\quad 
\eta_1=-\tfrac{i}2\lambda \partial_{2}\hat W_{\infty,+-}^{(1)}(0)+O(\lambda^2),\end{equation}
where $\hat W_{\infty,\omega\omega'}^{(1)}(k)$, with $\omega,\omega'\in\{+,-\}$, are the coefficients of the first order contributions in $\lambda$ to 
$\hat W_{\infty,\omega\omega'}(k)=\sum_{z\in\mathbb Z^2}e^{-ik\cdot z}$ $W_{\infty,\omega\omega'}(z,0)$ where $W_{\infty,\omega\omega'}(z,z')=W_{\infty,\omega\omega'}(z-z',0)$ 
is the infinite-plane limit of the kernel of the quadratic part of the effective potential 
\begin{equation} V_{\Lambda,\text{eff}}(\psi)=\log\int \int P_c^*(\mathcal D\phi)P_m^*(\mathcal D\xi)e^{\widetilde{\mathcal V}^{{\rm int}}_\Lambda(\xi,\phi+\psi)}.
\label{app.effective}\end{equation}
We recall that $\widetilde {\mathcal V}^{{\rm int}}_\Lambda(\xi,\phi)$ was defined right after \eqref{qualequesta}, and from \eqref{app.formVint} we find that in the considered case
$$\widetilde {\mathcal V}^{{\rm int}}_\Lambda(\xi,\phi)= \alpha_0 \lambda\sum_{\substack{z\in\Lambda : \\ 1<z_2<M}}
\big(\phi_{+,z-\hat{e}_2} \phi_{-,z} + \phi_{+,z} \phi_{-,z+\hat{e}_2} + 2 \phi_{+,z-\hat{e}_2} \phi_{-,z}\phi_{+,z} \phi_{-,z+\hat{e}_2}\big) + O (\lambda^2).$$
The straightforward computation leading from \eqref{expl.app}-\eqref{app.effective} to \eqref{B.2} is left to the reader. Given \eqref{B.2} and the explicit form of the propogator
defined right after \eqref{B.2} we can compute \eqref{B.2} more explicitly. In fact, by definition,
\begin{equation}\partial_{1,2}g_{\infty,-+}(z,z')=\tfrac{\sqrt2+1}{2}\int_{[-\pi,\pi]^2} \frac{d^2k}{(2\pi)^2}  e^{-ik\cdot (z-z')}\frac{1-B(k_1)e^{ik_2}}{2-\cos k_1-\cos k_2}(e^{-ik_2}-1)
\end{equation}
 We now use the fact that 
\begin{equation} \begin{split} & \int_{-\pi}^\pi \frac{dk_1}{2\pi}\frac{1-B(k_1)e^{ik_2}}{2-\cos k_1-\cos k_2} =
e^{ik_2}\Big(\sqrt2-1-\frac{\sqrt2(1-\cos k_2)+i\sin k_2}{\sqrt{(1-\cos k_2)(3-\cos k_2)}}\Big),\end{split}\label{app.vasco3}\end{equation}
to rewrite
\begin{equation}\partial_{1,2}g_{\infty,-+}(z,z')=\tfrac{\sqrt2+1}{2}\int_{-\pi}^\pi\frac{dk_2}{2\pi}  e^{-ik\cdot (z-z')}(1-e^{ik_2})
\Big(\sqrt2-1-\frac{\sqrt2(1-\cos k_2)+i\sin k_2}{\sqrt{(1-\cos k_2)(3-\cos k_2)}}\Big),
\end{equation}
so that, in particular, 
\begin{equation}\begin{split}\partial_{1,2}g_{\infty,-+}(z,z)&=\tfrac12-\tfrac{\sqrt2+1}{2}\int_{-\pi}^\pi\frac{dk_2}{2\pi}  \frac{\sqrt2(1-\cos k_2)^2+\sin^2 k_2}{\sqrt{(1-\cos k_2)(3-\cos k_2)}}\\
&=-\frac{\sqrt2}{2}-\frac1\pi,\\
\partial_{1,2}g_{\infty,-+}(z,z-\hat e_2)&=-\tfrac12+\tfrac{\sqrt2+1}{2}\int_{-\pi}^\pi\frac{dk_2}{2\pi}  \frac{\sqrt2(1-\cos k_2)^2-\sin^2 k_2}{\sqrt{(1-\cos k_2)(3-\cos k_2)}}\\
&=-1-\frac{\sqrt2}{2}+\frac{(\sqrt2+1)^2}\pi
\end{split}
\end{equation}
so that, from \eqref{B.2}, 
\begin{equation}\label{rockyr}\begin{split} & 2\nu_1=-\frac8\pi(2-\sqrt2)\beta_0\lambda+O(\lambda^2)\equiv2\nu_1^{(1)}\lambda+O(\lambda^2)\\
&\zeta_1=O(\lambda^2)\\
&\eta_1=-2(\sqrt2-1+\tfrac{2(3-2\sqrt{2})}\pi)\beta_0\lambda+O(\lambda^2)\equiv \eta_1^{(1)}\lambda+O(\lambda^2)\end{split}\end{equation}
where $\beta_0:=\tanh^{-1}(\sqrt2-1)$. Substituting these expressions into \cite[Eq.(4.5.4) or equivalently Eqs.(4.5.13)--(4.5.15)]{AGG_AHP} we can solve 
for $Z,\beta_c,t_1^*$ thus finding
\begin{equation}\label{firstorders.app}\begin{split} & Z=1+(2-\sqrt{2})(2-\tfrac4\pi)\beta_0\lambda+O(\lambda^2)\equiv 1+Z^{(1)}\lambda+O(\lambda^2)\\
& \beta_c=\beta_0\Big(1-\tfrac{2\sqrt{2}}\pi\lambda\Big)+O(\lambda^2)\equiv \beta_0+\beta_1\lambda+O(\lambda^2)\\
& t^*_1=(\sqrt2-1)\Big(1-2\sqrt{2}\beta_0\lambda\Big)+O(\lambda^2)\equiv \sqrt2-1+\tau_1\lambda+O(\lambda^2).\end{split}\end{equation}
Using \cite[Eqs.(2.1.12)-(2.1.13)]{AGG_AHP}, taking the half-plane limit,  letting $t_1(\lambda)=t_2(\lambda)=\tanh\beta_c(\lambda)\equiv t$ and recalling that 
$t_2^*=\tfrac{1-t_1^*}{1+t_1^*}$, we find
\begin{equation}\label{app.V1bH}\begin{split}
\cV_\bH^{(1)} (\xi,\phi) &=(t-t_1^*)\sum_{z\in\bH}\xi_{+,z}\xi_{-,z+\hat e_1}+
 \sum_{z_2\ge 1}\int_{-\pi}^\pi \frac{dk_1}{2\pi} \Big[\big(-\tfrac{b_t(k_1)}{Z}+b_{t^*_1}(k_1)\big)\phi_{+,z_2}(-k_1)\phi_{-,z_2}(k_1)\\
 &+\big(\tfrac{t}{Z}-t_2^*\big)\phi_{+,z_2}(-k_1)\phi_{-,z_2+1}(k_1)
 -\tfrac{i}{2}\big(\tfrac{\Delta_t(k_1)}{Z}-\Delta_{t_1^*}(k_1)\big)\sum_{\omega=\pm}\omega\phi_{\omega,z_2}(-k_1)\phi_{\omega,z_2}(k_1)\Big]\\
 &+
\alpha_0 \lambda\sum_{\substack{z\in\Lambda : \\ z_2\ge 2}}
\big(\phi_{+,z-\hat{e}_2} \phi_{-,z} + \phi_{+,z} \phi_{-,z+\hat{e}_2} + 2 \phi_{+,z-\hat{e}_2} \phi_{-,z}\phi_{+,z} \phi_{-,z+\hat{e}_2}\big) + O (\lambda^2),\end{split}
\end{equation}
where $b_t(k)=\tfrac{1-t^2}{|1+te^{ik}|^2}$ and $\Delta_t(k)=\tfrac{2t\sin k}{|1+te^{ik}|^2}$. Note in particular that, thanks to \cite[Eq.(4.5.4)]{AGG_AHP}, $-\tfrac{b_t(0)}{Z}
+b_{t^*_1}(0)+\tfrac{t}{Z}-t_2^*\equiv2\nu_1$ and $\tfrac{t}{Z}-t_2^*\equiv2\eta_1$. For later convenience, we denote the quadratic-in-$\phi$ first order contribution to the right hand side of \eqref{app.V1bH} by $\lambda V_2^{(1;1)}(\phi)$, and the quartic term by $\lambda V_4^{(1;1)}(\phi)$: 
\begin{equation}\label{app.V2V4}
\begin{split}
V_2^{(1;1)}(\phi) & = \sum_{z_2\ge 1}\int_{-\pi}^\pi \frac{dk_1}{2\pi} \Big[\big(-\tfrac{b_t(k_1)}{Z}+b_{t^*_1}(k_1)\big)^{(1)}\phi_{+,z_2}(-k_1)\phi_{-,z_2}(k_1)\\
 &+\eta_1^{(1)}\phi_{+,z_2}(-k_1)\phi_{-,z_2+1}(k_1)
 -\tfrac{i}{2}\big(\tfrac{\Delta_t(k_1)}{Z}-\Delta_{t_1^*}(k_1)\big)^{(1)}\sum_{\omega=\pm}\omega\phi_{\omega,z_2}(-k_1)\phi_{\omega,z_2}(k_1)\Big]\\
&+\alpha_0 \sum_{\substack{z\in\Lambda : \\ z_2\ge 2}}
\big(\phi_{+,z-\hat{e}_2} \phi_{-,z} + \phi_{+,z} \phi_{-,z+\hat{e}_2}\big),\\
V_4^{(1;1)}(\phi) & = 2\alpha_0 \sum_{\substack{z\in\Lambda : \\ z_2\ge 2}} \phi_{+,z-\hat{e}_2} \phi_{-,z}\phi_{+,z} \phi_{-,z+\hat{e}_2} ,\end{split}
\end{equation}
where $\eta_1^{(1)}= -2(\sqrt2-1+\tfrac{2(3-2\sqrt{2})}\pi)\beta_0$ is the coefficient of the first order contribution to $\eta_1$, cf. \eqref{rockyr}, and similarly for 
$\big(-\tfrac{b_t(k_1)}{Z}+b_{t^*_1}(k_1)\big)^{(1)}$ and $\big(\tfrac{\Delta_t(k_1)}{Z}-\Delta_{t_1^*}(k_1)\big)^{(1)}$, i.e., letting $b(k):=b_t(k)\big|_{t=\sqrt2-1}$, 
$b'(k):=\partial_tb_t(k)\big|_{t=\sqrt2-1}$, $\Delta(k):=\Delta_t(k)\big|_{t=\sqrt2-1}$, and $\Delta'(k):=\partial_t\Delta_t(k)\big|_{t=\sqrt2-1}$: 
\begin{equation}\begin{split} 
& \big(-\tfrac{b_t(k)}{Z}+b_{t^*_1}(k)\big)^{(1)}= Z^{(1)}\big(b(k)-b'(k)\big),\\
& \big(\tfrac{\Delta_t(k)}{Z}-\Delta_{t_1^*}(k)\big)^{(1)}= -Z^{(1)}\big(\Delta(k)-\Delta'(k)\big)\end{split}\end{equation}
where $Z^{(1)}$ was defined in \eqref{firstorders.app}, together with $\beta_1$ and $\tau_1$, and we used the fact that $2(\sqrt2-1)\beta_1-\tau_1\equiv Z^{(1)}$. 

\bigskip

In order to compute $Z_\Bs$ at first non trivial order, we start from the iterative definition \cref{Bflow}. Recalling that $Z_{\Bs,0}=Z^{-1/2}$, that $Z_{\Bs,h}=1+O(\lambda)$
uniformly in $h\le 0$, 
and that the beta function is an analytic function of $\lambda$ of order $\lambda$, we can write 
\begin{equation} 
Z_{\Bs} = Z^{-1/2}+\sum_{h\le 0}B_{h+1}[\underline{1}]+O(\lambda^2),\label{vasco.app}
\end{equation}
where $\underline 1$ is the sequence, indexed by the scale labels $h\le 0$, with elements all equal to 1. Note that $\sum_{h\le 0}B_{h+1}[\underline{1}]$ is also an analytic function 
of $\lambda$, of order $\lambda$. At first non-trivial order, it can be computed as the sum over all the trees with one endpoint of type \tikzvertex{FSSpinEP} and one additional `interaction' endpoint, i.e., of type \tikzvertex{ctVertex}, \tikzvertex{ctVertex,E}, \tikzvertex{vertex} or \tikzvertex{vertex,E}, of the local part of the corresponding tree values (with the kernel associated with the \tikzvertex{FSSpinEP} endpoint computed by replacing $Z$ by $1$ and the one associated with the interaction endpoints replaced by the first order 
contribution in $\lambda$ of the first four lines of \cref{KPsi}). By construction, the sum over all these trees of the corresponding tree values equals the first order contribution 
in `naive' perturbation theory to $\sum_{z\in\bH} W_{1,1}(z,y)$, where $y$ is an arbitrary site in $\partial\bH$, and  
$W_{1,1}(z,y)=\frac{\partial^2}{\partial\varphi_y\partial\psi_{-,z}}W_\bH(\bs \varphi,\psi)\big|_{\bs \varphi=\psi=0}$, with 
$W_\bH(\bs\varphi,\psi)$ the half-plane limit of the effective potential
\begin{equation} W_\Lambda(\bs\varphi,\psi)=\log\int P_c^*(\mathcal D\phi)P_m^*(\mathcal D\xi)e^{\mathcal V^{(1)}_\Lambda(\xi,\phi+\psi)+\mathcal B_\Lambda(\phi,\bs\varphi)},
\end{equation}
with $\psi$ a Grassmann field. 
Denoting by $\lambda B_\Bs^{(1)}$ the first order contribution to $\sum_{h\le 0}B_{h+1}[\underline{1}]$, and using the expression of $\mathcal V^{(1)}_\bH$ in \eqref{app.V1bH}
as well as the definitions in \eqref{app.V2V4}, we find: 
\begin{equation}\label{appB.10} B_\Bs^{(1)}=\sum_{z\in\bH}\frac{\partial}{\partial\psi_{-,z}}\langle \big(V_2^{(1;1)}(\phi+\psi)+V_4^{(1;1)}(\phi+\psi)\big);\phi_{-,y}\rangle_0\Big|_{\psi=0}.
\end{equation}
where $\langle\cdot;\cdot\rangle_0$ indicates the truncated expectation with respect to the Grassmann field $\phi$, with propagator 
\begin{equation} \begin{aligned} g_\bH(z,z')&=\begin{pmatrix}g_{\bH,++}(z,z') & g_{\bH,+-}(z,z') \\ g_{\bH,-+}(z,z') &g_{\bH,-+}(z,z') \end{pmatrix}=\tfrac{\sqrt2+1}{2}\int\limits_{[-\pi,\pi]^2}\frac{dk_1\, dk_2}{(2\pi)^2}\frac{e^{-ik_1(z_1-z'_1)}}{2-\cos k_1-\cos k_2} \\
 &\hskip3.8truecm \times \left[e^{-ik_2(z_2-z'_2)}  \begin{pmatrix}  -i\sin{k_1} & -1+B(k_1)e^{-ik_2} \\ 1-B(k_1)e^{ik_2} &i\sin{k_1} \end{pmatrix} \right.\\
&\hskip4.15truecm \left. -  \,e^{-ik_2(z_2+z'_2)}  \begin{pmatrix}-i\sin{k_1}&  -1+B(k_1)e^{ik_2}) \\  1-B(k_1)e^{ik_2}& \frac{1-B(k_1)e^{-ik_2}}{1-B(k_1)e^{ik_2}}i\sin{k_1}\end{pmatrix}  \right]\,,
 \end{aligned}
\end{equation}
where $B(k)=(\sqrt2-1)(\sqrt2+\cos k)$. From \eqref{appB.10}, a straightforward computation shows that, fixing $y=y_0:=(0,1)$, 
\begin{equation}\label{appB.11}\begin{split} B_\Bs^{(1)}&=
\sum_{z\in\bH}\Big[(2\nu_1^{(1)}+2\alpha_0)g_{\bH,-+}(y_0,z)-\alpha_0\delta_{z_2,1} g_{\bH,-+}(y_0,z)\\
&+2\alpha_0\Big(-g_{\bH,-+}(y_0,z-\hat e_2)\partial_{1,2}g_{\bH,-+}(z,z)+g_{\bH,-+}(y_0,z)\partial_{1,2}g_{\bH,-+}(z,z-\hat e_2)\Big)\Big]\end{split}
\end{equation}
where $2\nu_1^{(1)}=-\frac8\pi(2-\sqrt2)\beta_0$ is the first order contribution to $2\nu_1$, cf. \eqref{rockyr}, and, for $z_2=0$, $g_{\bH,-+}(y_0,z)$ should be interpreted 
as being zero. Moreover, $\partial_{1,2}g_{\bH,-+}(z,z'):=g_{\bH,-+}(z+\hat e_2,z')-g_{\bH,-+}(z,z')$ and similarly, for later reference, 
$\partial_{2,2}g_{\bH,-+}(z,z'):=g_{\bH,-+}(z,z'+\hat e_2)-g_{\bH,-+}(z,z')$. 

Recalling the fact that $2\nu_1^{(1)}=2\alpha_0\big[-1+\partial_{1,2}g_{\infty,-+}(z,z)-\partial_{1,2}g_{\infty,-+}(z,z-\hat e_2)\big]$, cf. \eqref{B.2}, and letting 
$g_{\E}(z,z')=g_\bH(z,z')-g_\infty(z,z')$, we can recast \eqref{appB.11} in the following form: 
\begin{equation}\label{appB.14}\begin{split} B_\Bs^{(1)}&=
-\alpha_0\sum_{z_1=-\infty}^\infty g_{\bH,-+}(y_0,(z_1,1))\\
&+2\alpha_0\sum_{z\in\bH}\big(-g_{\bH,-+}(y_0,z-\hat e_2)\partial_{1,2}g_{\E,-+}(z,z)+g_{\bH,-+}(y_0,z)\partial_{1,2}g_{\E,-+}(z,z-\hat e_2)\big).\end{split}
\end{equation}
Note that, from its definition, 
\begin{equation} \sum_{z_1=-\infty}^\infty g_{\bH,-+}(y_0,(z_1,z_2))=
(\sqrt2+1)\int_{-\pi}^\pi \frac{dk_2}{2\pi}\frac{\sin k_2\, \sin (k_2z_2)}{1-\cos k_2}.\end{equation}
The integral in the right hand side can be computed explicitly and is identically equal to 1 for any $z_2\ge 1$, irrespective of the choice of $z_2\ge 1$, so that 
\begin{equation} \sum_{z_1=-\infty}^\infty g_{\bH,-+}(y_0,(z_1,z_2))\equiv\sqrt2+1,\qquad  \forall z_2\ge 1.\end{equation}
Using this identity in \eqref{appB.14}, together with the remark that both $\partial_{1,2}g_{\E,-+}(z,z)$ and 
$\partial_{1,2}g_{\E,-+}(z,z-\hat e_2)$ are independent of $z_1$, gives
\begin{equation}\label{appB.15}\begin{split} B_\Bs^{(1)}&=2\alpha_0(\sqrt2+1)\Big[-\tfrac12+
\partial_{1,2}g_{\E,-+}((0,1),(0,1))\\
&+\sum_{z_2\ge 1}\big(-\partial_{1,2}g_{\E,-+}((0,z_2),(0,z_2))+\partial_{1,2}g_{\E,-+}((0,z_2),(0,z_2-1))\big)\Big].\end{split}
\end{equation}
Let us now compute explicitly the terms in brackets, by using the definition of $g_{\E,-+}(z,z')$. The second term is 
\begin{equation}\label{app.vasco2}\partial_{1,2}g_{\E,-+}((0,1),(0,1))=-
\tfrac{\sqrt2+1}{2}\int\limits_{[-\pi,\pi]^2}\frac{dk_1\, dk_2}{(2\pi)^2}\frac{e^{-2ik_2}}{2-\cos k_1-\cos k_2}  (1-B(k_1)e^{ik_2}). \end{equation}
Using \eqref{app.vasco3}, we find
\begin{equation}\label{appappvascovasco} \eqref{app.vasco2}=
\tfrac{\sqrt2+1}{2}\int_{-\pi}^\pi\frac{dk_2}{2\pi}\frac{\sqrt2(1-\cos k_2)\cos k_2+\sin^2 k_2}{\sqrt{(1-\cos k_2)(3-\cos k_2)}}
=\tfrac{(\sqrt2+1)^2}4(\sqrt2-\tfrac4\pi) \end{equation}
The summation in the second line of \eqref{appB.15} can be rewritten as
\begin{equation}\begin{split}& \sum_{z_2\ge 1}\big(-\partial_{1,2}g_{\E,-+}((0,z_2),(0,z_2))+\partial_{1,2}g_{\E,-+}((0,z_2),(0,z_2-1))\big)\\
&=-\tfrac{\sqrt2+1}{2}\sum_{z_2\ge 1}\int\limits_{[-\pi,\pi]^2}\frac{dk_1\, dk_2}{(2\pi)^2}\frac{1}{2-\cos k_1-\cos k_2} 
(1-B(k_1)e^{ik_2})e^{-2ik_2z_2}|e^{-ik_2}-1|^2\end{split}\label{questa.app}\end{equation}
In the right hand side we can bring the summation over $z_2$ under the integral sign and replace $\sum_{z_2\ge 1}e^{-2ik_2z_2}$ by $\tfrac{e^{-2ik_2}}{1-e^{-2ik_2}}$
in weak sense, so that, using again \eqref{app.vasco3}, 
\begin{equation} \eqref{questa.app}=\tfrac{\sqrt2+1}2 \int_{-\pi}^\pi \frac{dk_2}{2\pi}\frac{\sin^2 k_2}{(1+\cos k_2)\sqrt{(1-\cos k_2)(3-\cos k_2)}}=\tfrac{\sqrt2+1}4. \end{equation}
Plugging this and \eqref{appappvascovasco} back into \eqref{appB.15} gives (recalling that $\alpha_0=2(\sqrt2-1)^2\beta_0$)
\begin{equation} B_\Bs^{(1)}=4(\sqrt2-1)\beta_0\big(-\tfrac12+\tfrac{(\sqrt2+1)^2}4(\sqrt2-\tfrac4\pi)+\tfrac{\sqrt2+1}4\big)=\beta_0(5-\sqrt2-\tfrac{4(\sqrt2+1)}{\pi}).\end{equation}
Finally, substituting into \eqref{vasco.app} where $Z^{-1/2}=1-\tfrac{Z^{(1)}}2\lambda+O(\lambda^2)$, with $Z^{(1)}=2\sqrt2(\sqrt2-1)(1-\tfrac2\pi)\beta_0$, gives
\begin{equation} Z_{\Bs}=1+3\beta_0(1-\tfrac{2\sqrt2}{\pi})\lambda+O(\lambda^2),\end{equation}
which shows that $Z_\Bs$ is a non-trivial analytic function of $\lambda$.

\subsection*{Acknowledgements}
This work has been supported by the European Research Council (ERC) under the European Union's Horizon 2020 research and innovation programme (ERC CoG UniCoSM, grant agreement No.\ 724939 for all three authors and also ERC StG MaMBoQ, grant agreement No.\ 802901 for R.L.G.). 
This work was also partial supported by the MUR, PRIN 2022 project MaIQuFi cod. 20223J85K3 (for A.G.), by  the MIUR Excellence Department Project MatMod@TOV awarded to the Department of Mathematics, University of Rome Tor Vergata, CUP E83C23000330006 (for R.L.G.) and by the GNFM Gruppo Nazionale per la Fisica Matematica - INDAM. 

\section*{Data availability}
Data sharing is not applicable to this article as no new data were created or
analyzed in this study.

\section*{Conflict of interest}
The authors are not aware of any conflicts of interest relating to this article.

\begin{refcontext}[sorting=anyt]
	\printbibliography[heading=bibintoc]
\end{refcontext}

\end{document}